\newtheorem{theorem}{Theorem}
\newtheorem{lemma}{Lemma}
\newtheorem{observation}{Observation}
\newtheorem{definition}{Definition}
\pgfplotsset{compat=1.9}
\newcommand{\mytodo}[2]{\todo[size=\tiny, color=#1!50!white]{#2}\xspace}
\newcommand{\myrevtodo}[2]{{%
    \let\marginpar\marginnote
    \reversemarginpar
    \renewcommand{\baselinestretch}{0.8}%
    \todo[size=\tiny, color=#1!50!white]{#2}}}
\newcommand{\myinlinetodo}[2]{\todo[size=\small, color=#1!50!white, inline, 
caption={}]{#2}\xspace}
\newcommand{\registerAuthor}[3]{%
  \expandafter\newcommand\csname #2com\endcsname[1]{\mytodo{#3}{\textsc{#2}: 
  ##1}}%
  \expandafter\newcommand\csname 
  #2revcom\endcsname[1]{\myrevtodo{#3}{\textsc{#2}: ##1}}%
  \expandafter\newcommand\csname 
  #2inline\endcsname[1]{\myinlinetodo{#3}{\textsc{#2}: ##1}}%
  \expandafter\newcommand\csname 
  #2inlineLater\endcsname[1]{\lv{\myinlinetodo{#3}{\textsc{#2}: ##1}}}%
}
\newcommand{\orderedlistingof}[2]{\ensuremath{#1_1, #1_2, \ldots, #1_{#2}}}
\newcommand{\orderedsetof}[2]{\ensuremath{\{\orderedlistingof{#1}{#2}\}}}
\newcommand{\namedorderedsetof}[3]{\ensuremath{{#1}=\orderedsetof{#2}{#3}}}
\newcommand{\clique}{\textsc{Clique}}
\newcommand{\threeCNFSAT}{\textsc{3-CNF-Sat}}
\newcommand{\np}{\ensuremath{\mathrm{NP}}}
\newcommand{\fpt}{\ensuremath{\mathrm{FPT}}}                                     
\newcommand{\xp}{\ensuremath{\mathrm{XP}}}                                       
\newcommand{\wone}{\ensuremath{\mathrm{W[1]}}}                                   
\newcommand{\wonehard}{\wone-hard}
\newcommand{\woneh}{\wone-h}
\newcommand{\wonehardness}{\wonehard{}ness}
\newcommand{\p}{\ensuremath{\mathrm{P}}}
\newcommand{\nphard}{\np-hard}                                       
\newcommand{\nph}{\np-h}
\newcommand{\pnph}{p-\np-h}
\newcommand{\naturals}{\mathbb{N}}
\newenvironment{problemQuote}%
  {\list{}{\leftmargin=0.11in\rightmargin=0.11in}\item[]}%
  {\endlist}
\newcommand{\problemName}{\textsc{Sharing Allocation}}
\newcommand{\problemNameShort}{\textsc{SA}}
\newcommand{\utilProblemShort}{\textsc{UW}\problemNameShort}
\newcommand{\egalProblemShort}{\textsc{EW}\problemNameShort}
\newcommand{\envyReducingProblem}{\textsc{Envy-Reducing} \problemName}
\newcommand{\envyReducingProblemShort}{\textsc{ER}\problemNameShort}
\newcommand{\MCC}{\textsc{Multicolored Clique}}
\newcommand{\IS}{\textsc{Independent Set}}
\newcommand{\SBMWM}{\textsc{Size-bounded Maximum Weighted Matching}{}}
\newcommand{\WBMM}{\textsc{Weight-bounded Maximum Matching}{}}
\newcommand{\SBMWMShort}{\textsc{SBMWM}{}}
\newcommand{\WBMMShort}{\textsc{WBMM}{}}
\DeclareMathOperator{\Env}{Env}
\newcommand{\allocation}{\ensuremath{\pi}}
\newcommand{\resources}{\ensuremath{\mathcal{R}}}
\newcommand{\agents}{\ensuremath{\mathcal{A}}}
\newcommand{\resource}{\ensuremath{r}}
\newcommand{\agent}{\ensuremath{a}}
\newcommand{\resourcesCount}{\ensuremath{m}}
\newcommand{\agentsCount}{\ensuremath{n}}
\newcommand{\utilityFunction}{\ensuremath{u}}
\DeclareMathOperator{\utilitarianWelfare}{usw}
\DeclareMathOperator{\egalitarianWelfare}{esw}
\newcommand{\graph}{\ensuremath{G}}
\newcommand{\attentionGraph}{\ensuremath{\mathcal{G}_t}}
\newcommand{\shareGraph}{\ensuremath{\mathcal{G}_s}}
\newcommand{\attentionRelations}{\ensuremath{\mathcal{E}_t}}
\newcommand{\shareRelations}{\ensuremath{\mathcal{E}_s}}
\newcommand{\sharing}{\ensuremath{\delta}}
\newcommand{\incidentTo}{\ensuremath{I}}
\newcommand{\sharingAllocation}{\ensuremath{\Pi}}
\title{On Improving Resource Allocations by Sharing}
\author[1] {Robert Bredereck}
\author[2,3] {Andrzej Kaczmarczyk}
\author[4] {Junjie Luo}
\author[2] {Rolf Niedermeier}
\author[2] {Florian~Sachse}
\affil[1]{Humboldt-Universit\"at zu Berlin, Berlin, Germany}
\affil[2]{Algorithmics and Computational Complexity, TU Berlin, Berlin, Germany}
\affil[3]{AGH University, Krak\'ow, Poland}
\affil[4]{Nyanyang Technological University, Singapore}
\affil[ ]{\small\texttt{robert.bredereck@hu-berlin.de, andrzej.kaczmarczyk@agh.edu.pl, junjie.luo@ntu.edu.sg, rolf.niedermeier@tu-berlin.de,
sachse.florian@gmail.com}}
\begin{document}

\maketitle
\begin{abstract}
 Given an initial resource allocation, where
 some agents may envy others or where a different distribution of 
 resources might lead to higher social welfare, our goal is to improve the
 allocation \emph{without} reassigning resources. %
 We consider a sharing concept allowing resources
 being shared with social network neighbors of the resource owners.
 To this end, we introduce a formal model that allows a central authority
 to compute an optimal
 sharing between neighbors based on an initial allocation. Advocating this
 point of view, we focus on the most basic scenario where a
 resource may be shared by two neighbors in a social network 
 and each agent can participate in a bounded number of sharings.
 We present algorithms for
 optimizing utilitarian and egalitarian social welfare of allocations and
 for reducing the number of envious agents. In particular, we examine the
 computational complexity with respect to several natural parameters.
 Furthermore, we study cases with restricted 
 social network structures and, among
 others, devise polynomial-time algorithms in path- and tree-like
 (hierarchical) social networks.
\end{abstract}

\section{Introduction}
The fair allocation of resources undoubtedly is a key challenge for modern 
societies and economies. Applications can be found in such diverse 
fields as cloud computing, food banks, or managing carbon loads in the 
context of global warming. %
Naturally, this 
topic received high attention in the scientific literature.
This also holds true for the special case of indivisible resources~\citep{BCM16}, 
which we 
concentrate on here. Moreover, we take into account the role 
of social networks built by agents, a growing line of 
research~\citep{AKP17,BQZ17,BCEIP17,BKN18,CEM17,BCGHLMW19,DBLP:conf/aldt/LangeR19,HX19}.
We bring one further new aspect into this scenario, reflecting
the increasing relevance of ``sharing economies''~\citep{BEB19,SC19},
where agents share resources in 
a peer-to-peer fashion. Resources to share
may be almost everything, for instance, knowledge, machines, time, or 
natural resources. More specifically, sharing in our scenario, which takes into
account the relationships between agents expressed by social networks, 
means that \emph{two} adjacent 
agents in the social network may share the very same resource, thus 
increasing the utility of the resource allocation 
for at least one of them (assuming positive utility for each resource). 
We assume this to be organized and decided 
by a central authority like, for example, the boss of a company.
To get started with this new setting, we focus on a very basic 
scenario. That is, in our model only two neighbors may share
\emph{and}, reflecting the (very human) principle of protection 
of acquired possession, no agent shall loose its already allocated 
resources. This conservative principle naturally makes sharing 
easier to implement, keeping ``restructuring costs'' lower, 
and it may even help to ``keep peace'' among agents. 
Moreover, it sometimes comes very naturally as depicted in the subsequent 
knowledge sharing example.
Besides improving egalitarian or utilitarian welfare, we
focus on the perhaps most basic fairness criterion, envy-freeness. 
Since it is not always possible that complete envy-freeness is achieved
(consider one indivisible resource and two agents
desiring it),
we aim at, given an initial resource allocation, improving it
by decreasing the number of envious agents 
through resource sharing. Moreover, we allow for 
modeling relationship aspects of sharing based on the social network
formed by the agents.

Before becoming more specific about our model, let us first introduce the
following example related to knowledge sharing.
Assume that agents are employees of a company, each having a bundle of 
qualifications. An agent may ``envy''
another agent because the other agent has some special qualification. 
The central authority wants to improve the
situation
by building teams of two agents where, due to a daily extensive cooperation, one
teaches the other the missing qualification (for instance, a
realization of this %
is the concept of~\emph{pair programming}
that also has other benefits besides knowledge
sharing~\citep{WKCJ00}).

\paragraph*{Model of sharing allocation.}
Roughly speaking, our model %
is as follows (see Section~\ref{sec:prelim} for formal definitions).
The input is a set of agents and a set of indivisible resources initially 
assigned to the agents. 
Typically, every agent may be assigned several resources.
Each agent has an individual utility value for 
each resource. The general goals are to decrease 
the overall degree of envy, to increase the sum of ``utility scores'' of all
agents, or to increase the minimal ``utility score'' among all agents.
Importantly, the only way 
an agent can improve its individual ``utility score'' is by participating in a sharing with other agents.

We assume that if an owner shares, then this does not decrease its own overall
utility value. This approach is justified when the burden of sharing is
neutralized by its advantages. Indeed, in our knowledge sharing
example a hassle of cooperation is often compensated by a better working
experience or higher quality outcomes (as shown
by~\citet{WKCJ00}). Note that such complicated mutual dependencies that would
be extremely hard to describe formally form a natural field for our
approach.
Further application examples
include irregularly used resources (like printers or compute servers). Here, the
coordination %
with another person
is uncritical and splitting the maintenance costs neutralizes the inconvenience
of cooperation.

We enrich our model by using two graphs, 
an undirected sharing graph and a directed attention graph, 
to model social relations between agents and to govern the
following two constraints of our model. 
The sharing graph models the possibility for two agents to share resources because, e.g., they are close to each other or there is no conflict between the time they use resources.
We focus on the case when only neighbors in
the sharing graph can share a resource (a missing qualification in our knowledge
sharing example). 
With respect to lowering the degree of envy, we assume that
agents may only envy their outneighbors in the directed attention graph.
This graph-based envy concept has recently been studied by many works in fair allocation \cite{BKN18,ABCGL18,BCGHLMW19}.

Agents may naturally be conservative in the sense of keeping control and
not sharing too much. Furthermore, as in our initial example, it might simply
be too ineffective to share a qualification among more than two employees
simultaneously (due to, e.g.,\,increased communication overhead or additional
resources needed). 
We address this in the most basic way and assume that each resource can be shared
to \emph{at most one neighbor} of its owner and an agent can participate in a bounded
number of sharings. %
This strong restriction already leads to tricky algorithmic challenges and
fundamental insights. In particular, 
the model also naturally extends on well-known 
matching scenarios in a
non-trivial way. %

There are numerous options to
further extend and generalize our basic model, as discussed in Section \ref{sec:extension} and in the
concluding~Section~\ref{sec:concl}. However, keeping our primary model simple, we aim at
spotting its fundamental properties influencing the complexity of
related computational problems.

\paragraph*{Related work.}
To the best of our knowledge, so far the model we consider has not been studied.
Since obtaining envy-free allocations is not always possible, 
there has been work on relaxing the 
concept of envy.
In particular, in the literature 
\emph{bounded-maximum
envy}~\citep{LMMS04}, \emph{envy-freeness up to one good}~\citep{Bud11},
\emph{envy-freeness up to the least-valued good}~\citep{CKMPSW19},
\emph{epistemic envy-freeness}~\citep{ABCGL18}, and
\emph{maximin share guarantee}~\citep{Bud11} have been studied.
However, these concepts combat nonexistence of allocations that are envy-free by
considering approximate versions of it;
they basically do not try to
tackle the question of how to achieve less ``envy'' in an allocation.
By way of contrast, 
our approach tries to find a way to lessen envy not by relaxing
the concept of envy, but rather by enabling a small
deviation in the model of indivisible, non-shareable resources. 
To this end, we make resources
shareable (in our basic model by two agents).
This approach is in line with a series of recent works which try to reduce envy
(i) by introducing small amounts of
money~\citep{BDNSV20,Halpern_2019,caragiannis2020computing}, (ii) by
donating a small set of resources to charity~\citep{CGH19,CKMS21},
or (iii) by allowing dividing a small number of indivisible
resources~\citep{SS19,S19}.
In particular, the papers mentioned in point (iii)
consider a model of indivisible resources that could be shared by an
arbitrary group of agents and where, unlike in our study, each agent only gets a
portion of the utility of the shared resources. Contrary to our setting, this model assumes no
initial allocation.
As a result, an envy-free allocation always exists and the goal
is to seek one with a minimum number of shared resources. In contrast,
our goal is to improve an \emph{initial} allocation through sharing
resources between pairs of agents.
Another line of research considers the improvement of allocations
by exchanging items~\cite{CEM07,GLW17,HX19}.
There has been quite some work on bringing together resource allocation and
social
networks~\citep{AKP17,BQZ17,BCEIP17,BKN18,CEM17,BCGHLMW19,HX19}. In
particular, the concept of only local envy relations to neighbors in 
a graph gained quite some attention~\cite{ABCGL18,BCGHLMW19,BKN18,EGHO20}.
Modifying existing allocations to maintain fairness over time 
has also been studied in online settings
with changing agents~\citep{f15dynamic,f17dynamic} or
resources arriving over time~\citep{ijcai2019-49}.

\paragraph*{Our contributions.}
Introducing
a novel model for (indivisible) resource allocation with agents
linked by social networks, we provide a %
view on improving
existing allocations for
several measures without, conceivably impossible, reallocations.

We analyze the (parameterized) computational complexity of applying our model to
improve utilitarian social welfare or egalitarian social welfare (Definition~\ref{def:Pwelfare}),
and to decrease the number of envious agents (Definition~\ref{def:Penvy}).
We show that
a central authority can (mostly) find a sharing that improves social welfare
(measured in both the egalitarian and utilitarian ways) in polynomial time,
while decreasing the number of envious agents is \nphard{} even if the sharing
graph is a clique and the attention graph is a bidirectional clique. 
To overcome
\np-hardness, we also study the influence of different natural parameters (such
as agent utility function values, structural parameters concerning the agent social
networks, the number of agents, and the number of resources);
Table~\ref{tab:results_parameterized} surveys our results in more detail.
We show that the problem is polynomial-time solvable if the underlying undirected graph of the attention graph is the same as the sharing graph and has constant treewidth (close to a tree).
We also identify an interesting contrast between the roles of the two graphs:
When agents have the same utility function, the problem is solvable in polynomial time if the attention graph is a bidirectional clique, while the problem is \nphard{} even if the sharing graph is a clique.
Finally, we show that the problem is fixed-parameter
tractable (\fpt{}) for the parameter number of agents (giving hope for efficient
solutions in case of a small number of agents) and polynomial-time solvable
(in~\xp) for a constant number of resources. However, the problem is \nphard{}
even if the goal is to reduce the number of envious agents from one to zero.

Altogether, our main technical contributions are with respect to exploring
the potential to ``overcome'' the NP-hardness of decreasing the number of
envious agents by exploiting several problem-specific parameters.

\setlength{\tabcolsep}{10pt}
\begin{table*}[t]
 \centering
 \resizebox{\textwidth}{!}{%
 \begin{tabular}{ccccccccc}
 \multicolumn{9}{c}{\envyReducingProblem{} (\envyReducingProblemShort{})}             \\ \toprule
 \multicolumn{2}{c}{$\shareGraph=\attentionGraph$} & \multicolumn{2}{c}{same utility} & \multicolumn{2}{c}{few agents} &
 \multicolumn{3}{c}{few resources} \\ 
     \cmidrule(l{.5em}r{.5em}){1-2}
     \cmidrule(l{.5em}r{.5em}){3-4}
     \cmidrule(l{.5em}r{.5em}){5-6}
      \cmidrule(l{.5em}r{.5em}){7-9}
 \multirow{2}{*}{clique}  & 
  \multirow{2}{*}{\begin{tabular}[c]{@{}c@{}}tree- or \\ pathwidth\end{tabular}}  & 
   \multirow{2}{*}{\begin{tabular}[c]{@{}c@{}}$\attentionGraph=$  clique\end{tabular}} &
    \multirow{2}{*}{\begin{tabular}[c]{@{}c@{}}$\shareGraph=$  clique\end{tabular}} &
  \multirow{2}{*}{$n$} &
  \multirow{2}{*}{$k=0, \Delta k=1$} &
  \multirow{2}{*}{$m$} &
  \multirow{2}{*}{\begin{tabular}[c]{@{}c@{}}\#shared \\ resources\end{tabular}} &
  \multirow{2}{*}{\begin{tabular}[c]{@{}c@{}} bundle \\ size\end{tabular}}  \\ 
   & & & & & & & & \\
    \cmidrule(l{.5em}r{.5em}){1-2}
    \cmidrule(l{.5em}r{.5em}){3-4}
    \cmidrule(l{.5em}r{.5em}){5-6}
    \cmidrule(l{.5em}r{.5em}){7-9}

  \nph & \xp, \woneh & \p & \nph & \fpt & \pnph & \multicolumn{2}{c}{\xp,
  \woneh} & \pnph \\
 \small Thm.~\ref{thm:hard_clique} &
 \small Thm. \ref{thm:XP_treewidth} &
 \small Thm.~\ref{thm:P_clique+same_utility} &
 \small Thm.~\ref{thm:hard_share_clique+same_utility} &
 \small Thm.~\ref{thm:fpt-agents} &
 \small Thm.~\ref{thm:decreasing_by_one_nph} &
 \multicolumn{2}{c}{\small Obs.~\ref{obs:resources_envy_red_xp},
         Thm.~\ref{thm:hard_share_clique+same_utility}} &
 \small Thm.~\ref{thm:hard_share_clique+same_utility}
 \\ \bottomrule
 \end{tabular}%
 }
 \caption{
  Results overview for \envyReducingProblemShort{}, where $\shareGraph=\attentionGraph$ means that the sharing graph (\shareGraph) is the same as the underlying graph of the attention graph (\attentionGraph), $n$~is the number of
  agents, $k$ is the number of envious agents after sharing, $\Delta k$ is a drop in the
  number of envious agents, and $m$ is the number of resources.
 }\label{tab:results_parameterized}%
\end{table*}%

\section{Preliminaries}\label{sec:prelim}

For a set~\namedorderedsetof{\agents}{\agent}{\agentsCount} of \emph{agents} and
a set~\namedorderedsetof{\resources}{\resource}{\resourcesCount}
of~\emph{indivisible resources}, a \emph{(simple) allocation}~$\allocation
\colon \agents \rightarrow 2^\resources$ is a function assigning to each agent a
collection of resources---a \emph{bundle}---such that the assigned bundles are
pairwise disjoint. An allocation is \emph{complete} if every resource belongs to some
bundle. 

A \emph{directed graph}~$\graph$ consists of a set~$V$ of \emph{vertices} and
a set~$E \subseteq V \times V$ of \emph{arcs} connecting the vertices; we do not
allow self-loops (i.e., there are no 
arcs of form~$(v,v)$ for any vertex~$v \in V$).
A (simple) \emph{undirected graph}~$\graph{}=(V, E)$ consists of a set~$V$ of vertices and a set~$E$ of
distinct size-$2$ subsets of vertices called~\emph{edges}. 
An~\emph{underlying
undirected graph} of a directed graph~\graph{} is the graph obtained by replacing all (directed)
arcs with (undirected) edges. 
We say an undirected graph~$\graph{}=(V, E)$ is a \emph{clique} if $E=\binom{V}{2}$ and a directed graph is a \emph{bidirectional clique} if $E = V \times V$.
For some vertex~$v \in V$, the set~$\incidentTo(v)$
of~\emph{incident} arcs (edges) is the set of all arcs (edges) with an endpoint
in~$v$.

\subsection{Sharing Model}
We fix an initial allocation~\allocation{} of resources in~\resources{} to agents in~\agents{}.
A~\emph{sharing graph} is an undirected graph~\shareGraph{}=(\agents,\shareRelations) with vertices being the agents; it models possible sharings between the agents.
The following definition of sharing says that two agents can only share resources held by one of them.

\begin{definition} \label{def:sharing_allocation}
 Function~$\sharing_\allocation{} \colon \shareRelations \rightarrow
 2^\resources$ is a \emph{sharing} for~\allocation{} if for every two
 agents~$a_i$ and~$a_j$, with~$\{a_i,a_j\} \in
 \shareRelations$, it holds that
 $
  \sharing_\allocation( \{a_i,a_j\} ) \subseteq
  \allocation(a_i) \cup \allocation(a_j).
 $
\end{definition}

An initial allocation~\allocation{} and a corresponding
sharing~$\sharing_\allocation{}$ %
form a~\emph{sharing allocation}.

\begin{definition}
 A~\emph{sharing allocation} \emph{induced by} allocation~\allocation{} and
 sharing~$\sharing_\allocation$ is a
 function~$\sharingAllocation_{\allocation}^{\sharing_\allocation} \colon
 \agents \rightarrow 2^\resources$ where
 $
 \sharingAllocation_{\allocation}^{\sharing_\allocation}(\agent) :=
 \allocation(\agent) \cup
 \bigcup_{e \in \incidentTo(a)} \sharing_\allocation(e).
 $
\end{definition}

Since the initial allocation~$\allocation$ is fixed, for brevity, we use~$\sharing$ and~$\sharingAllocation{}^\sharing$, 
omitting~$\allocation$ whenever it is not ambiguous.
For simplicity, for every agent~$\agent \in \agents$, we also refer
to~$\sharingAllocation^\sharing(\agent)$ as a
\emph{bundle} of~\agent{}. 

Naturally, each allocation is also a sharing allocation with a trivial
``empty sharing.''
Observe a subtle difference in the intuitive
meaning of a bundle of an agent between sharing allocations and (simple)
allocations. For sharing allocations, a bundle of an agent represents the
resources the agent has access to and can utilize, not only those that the agent
possesses (as for simple allocations).

\subsection{$2$-sharing}

Definition~\ref{def:sharing_allocation} is very general and only requires that two agents
share resources that one of them already has. 
In particular, Definition~\ref{def:sharing_allocation} 
allows one agent to share the same resource with many other agents; and does not constrain the number of sharings an agent could participate in. 
In this paper, we assume that 
each resource can only be shared by \emph{two agents} and each agent can participate
in at most a bounded number of sharings.
We formally express this requirement in~Definition~\ref{def:2-simple-sharing}.

\begin{definition}
A~\emph{$2$-sharing}~$\sharing{}$ is a sharing where, for any three agents
$a_i$, $a_j$, and $a_k$, it holds that
\[
\sharingAllocation^\sharing(\agent_i) \cap \sharingAllocation^\sharing(\agent_j) \cap 
\sharingAllocation^\sharing(\agent_k)=\emptyset.
\]
A \emph{$b$-bounded} $2$-sharing~$\sharing{}$ is a $2$-sharing where, for each agent
$a$, it holds that
 $
 \left|\bigcup_{e \in \incidentTo(a)} \sharing(e)\right| \leq b.
 $
 A~\emph{simple $2$-sharing}~$\sharing{}$ is a $1$-bounded $2$-sharing, i.e., for
 each agent~\agent{}, it holds that
 $
 \left|\bigcup_{e \in \incidentTo(a)} \sharing(e)\right| \leq 1.
 $
 \label{def:2-simple-sharing}
\end{definition}

Herein, we count the number of sharings an agent participate in by the number of resources shared with other agents (either shared to other agents or received from other agents).
Notably, 
in simple $2$-sharing, each agent can \emph{either share} or \emph{receive} a \emph{single resource}.
Thus, every simple $2$-sharing can be interpreted as matching in which each edge is labeled with a shared item.

\subsection{Welfare and Fairness Measures}
We assume agents having additive utility functions.
For an agent~\agent{} with \emph{utility function}~$\utilityFunction \colon
\resources \rightarrow \naturals_{0}$ and a bundle~$R \subseteq \resources$, let
$\utilityFunction(R) := \sum_{\resource \in R} \utilityFunction(\resource)$ be
the~\emph{value} of~$R$ as perceived by~\agent{}. Let us fix a sharing
allocation~$\sharingAllocation^\sharing$ of resources~\resources{} to
agents~\namedorderedsetof{\agents}{\agent}{\agentsCount} with corresponding
utility functions~\orderedlistingof{\utilityFunction}{\agentsCount}.
The~\emph{utilitarian social welfare} of~$\sharingAllocation^\sharing$
is
$$\utilitarianWelfare(\sharingAllocation^\sharing) := \sum_{i \in [\agentsCount]}
 \utilityFunction_{i}(\sharingAllocation^\sharing(\agent_i)).$$ 
The \emph{egalitarian
social welfare} of~$\sharingAllocation^\sharing$ is
$$\egalitarianWelfare(\allocation) :=
\min_{i \in [\agentsCount]} \utilityFunction_{i}(\sharingAllocation^\sharing(\agent_i)).$$
Notice that we assume each agent $a_i$ gets the full utility for all resources in $\sharingAllocation^\sharing(\agent_i)$.
We will discuss a generalization of this assumption in Section \ref{sec:extension}.

A directed graph~\attentionGraph{}=(\agents,\attentionRelations) with vertices being the agents is
an~\emph{attention graph}; it models social relations between the agents.
We say that an agent~$a_i$ \emph{looks at} another agent~$a_j$ if~$(a_i,a_j) \in \attentionRelations$.
An agent is \emph{envious} on \attentionGraph{} under~$\sharingAllocation^\sharing$ if it prefers a bundle of another agent it looks at over its own
one; formally, $\agent_i$ \emph{envies}~$\agent_j$ if~$\utilityFunction_i(
\sharingAllocation^\sharing(\agent_i) ) < \utilityFunction_i(
\sharingAllocation^\sharing(\agent_j) )$ and~$(a_i,a_j) \in \attentionRelations$. We denote the set of envious agents
in~$\sharingAllocation^\sharing$ as~$\Env(\sharingAllocation^\sharing)$. 
For a given (directed) attention
graph~\attentionGraph{} over the agents, a sharing allocation
is~\emph{\attentionGraph-envy-free} if no agent envies its out-neighbors.

\subsection{Useful Problems}
We define two
variants of \textsc{Maximum Weighted Matching} and show that they are solvable
in polynomial time. We will later use them to show polynomial-time solvability
of our problems. Given an edge-weighted undirected graph~$G=(V,E)$ with weight
function~$w\colon E \rightarrow \mathbb{N}$ and two integers~$k_1,k_2 \in \mathbb{N}$,
\SBMWM{} (\SBMWMShort) asks
whether there is a matching~$M$ such
that~$|M|\le k_1$ and~$w(M) \ge k_2$, and \WBMM{} (\WBMMShort) asks 
whether there is a matching~$M$ such that~$w(M)\le k_1$ and~$|M| \ge k_2$,
where~$w(M)=\sum_{e \in M}w(e)$.

\begin{lemma}
\label{lem:matching_p}
\SBMWMShort{} and \WBMMShort{} are solvable in polynomial time.
\end{lemma}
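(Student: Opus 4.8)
The plan is to reduce, for a suitable target matching size $s$, each of the two problems to computing a maximum- or a minimum-weight \emph{perfect} matching in an auxiliary graph, a task solvable in polynomial time by classical weighted matching algorithms (e.g., Edmonds' blossom algorithm). The gadget I would use, for $G=(V,E)$ with weights $w$ and an integer $s$ with $0 \le s \le \lfloor |V|/2\rfloor$, is the graph $G_s$ obtained from $G$ by adding $|V|-2s$ new ``dummy'' vertices, joining each dummy to every vertex of $V$ by an edge of weight $0$, and adding no edges among dummies. Then $|V(G_s)| = 2(|V|-s)$ is even; in every perfect matching of $G_s$ each dummy is matched to a distinct vertex of $V$, so the remaining $2s$ vertices of $V$ are matched among themselves by edges of $E$, and these $s$ edges form a matching of $G$ of the same total weight. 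Conversely, any size-$s$ matching of $G$ extends to a perfect matching of $G_s$ by bijecting the $|V|-2s$ uncovered vertices of $V$ with the dummies. Hence $G_s$ has a perfect matching iff $G$ has a matching of size exactly $s$, and in that case the maximum (resp. minimum) weight of a perfect matching of $G_s$ equals the maximum (resp. minimum) weight of a matching of $G$ of size exactly $s$.

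For \SBMWMShort{}, every matching of size at most $k_1$ has size $s$ for some $s \in \{0,1,\dots,\min(k_1,\lfloor |V|/2\rfloor)\}$, so the instance is a yes-instance iff for some such $s$ for which $G_s$ has a perfect matching the maximum-weight perfect matching of $G_s$ has weight at least $k_2$. I would iterate over these $s$ (at most $\lfloor |V|/2\rfloor+1$ values) and solve one maximum-weight perfect matching instance on each $G_s$. For \WBMMShort{}, since the weights are non-negative, deleting edges from a matching never increases its weight; hence if some matching has size at least $k_2$ and weight at most $k_1$, then so does the matching obtained from it by deleting edges down to size exactly $k_2$. Consequently the instance is a yes-instance iff $G$ has a matching of size exactly $k_2$ and weight at most $k_1$. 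I would decide this by first rejecting if $k_2 > \lfloor |V|/2\rfloor$, and otherwise computing a minimum-weight perfect matching of $G_{k_2}$: answer ``yes'' iff $G_{k_2}$ has a perfect matching and its weight is at most $k_1$ (the absence of a perfect matching here also correctly captures the case that the maximum matching of $G$ has fewer than $k_2$ edges).

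I expect the only real content to be the correctness of the gadget, specifically that it enforces size \emph{exactly} $s$: this is where it matters that there are no edges among dummies, because a perfect matching must cover every dummy and a dummy can be matched only into $V$, which pins the number of internal $G$-edges to exactly $s$; the parity check that $|V(G_s)|$ is even is immediate. The reason these two problems are tractable despite their resemblance to NP-hard budgeted-matching formulations is that here one of the two conditions is plain cardinality, which is precisely what the ``delete edges'' exchange argument (for \WBMMShort{}) and the reduction to fixed-size matchings exploit. Everything else---bounding the number of iterations and invoking the polynomial-time solvability of weighted perfect matching---should be routine. (As a minor optimization, iterating over $s$ in \SBMWMShort{} can be avoided by using that $s \mapsto (\text{maximum weight of a size-}s\text{ matching})$ is concave on the feasible sizes, but the iterative version is already polynomial and simpler to state.)
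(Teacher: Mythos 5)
Your proof is correct and takes essentially the same route as the paper: both pin the matching cardinality by adding dummy vertices and then call a polynomial-time weighted matching algorithm. The differences are only cosmetic---you iterate over all sizes $s\le k_1$ with weight-zero dummies and \emph{perfect} matchings (and handle \WBMMShort{} directly via a minimum-weight perfect matching), whereas the paper fixes the single size $k_1$ with large-weight dummies and ordinary maximum weighted matching and reduces \WBMMShort{} to \SBMWMShort{} by complementing weights; if anything, your iteration over $s$ is slightly more careful, since it avoids having to argue that a matching of size below $k_1$ can be extended to one of size exactly $k_1$.
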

\begin{proof}
For \SBMWMShort{}, we show that it can be reduced to \textsc{Maximum Weighted Matching}, which is known to be solvable in polynomial time.
For any instance of \SBMWMShort{}, since~$w(e) \ge 0$ for all $e \in E$, there is a matching~$M$ such
that~$|M|\le k_1$ and~$w(M) \ge k_2$ if and only if there is a matching~$M$ such
that~$|M|= k_1$ and~$w(M) \ge k_2$.
We add $n-2k_1$ new vertices into the graph and connect them with all the old vertices by edges of weight $C=\sum_{e \in E}w(e)+1$.
Notice that the weight of every new edge is larger than that of every old edge.
Then it is easy to see that every maximum weighted matching in the new graph consists of $k_1$ old edges and $n-2k_1$ new edges, and the $k_1$ old edges form a matching with the maximum weight among all matchings of size $k_1$ in the old graph.
Thus we can apply the algorithm for \textsc{Maximum Weighted Matching} on the new graph and then check whether the sum of the weights of these $k_1$ edges in the old graph is at least $k_2$.
Therefore, \SBMWMShort{} is solvable in polynomial time.

For \WBMMShort{}, we first convert the weight function as follows.
Let~$W=\max_{e \in E}w(e)$ and build a new weight function~$w':E \rightarrow \mathbb{N}$ where~$w'(e)=W-w(e)$ for each~$e \in E$.
Note that~$w(e) \ge 0$ and~$w'(e) \ge 0$ for each~$e \in E$.
Since~$w(e) \ge 0$, there is a matching~$M$ with~$w(M) \le k_1$ and~$|M|\ge k_2$ if and only if there is a matching~$M$ with~$w(M) \le k_1$ and~$|M|= k_2$.
When~$|M|= k_2$, we have $w(M)=Wk_2-w'(M)$, and hence~$w(M) \le k_1$ if and only if~$w'(M) \ge Wk_2-k_1$.
Now the problem is to decide whether there is a matching~$M$ with~$|M|=k_2$ and~$w'(M)\ge Wk_2-k_1$.
Since $w'(e) \ge 0$, this is equivalent to decide whether there is a matching~$M$ with~$|M| \le k_2$ and~$w'(M)\ge Wk_2-k_1$, which is an instance of \SBMWMShort{} and, as shown in the last paragraph, is solvable in polynomial time.
\end{proof}

\section{Improving Social Welfare by Sharing}

In this section, we study the problem of improving utilitarian (and egalitarian)  welfare through sharing, defined as follows.

\begin{definition}\label{def:Pwelfare}
Given an initial complete allocation~\allocation{} of resources~\resources{} to agents~\agents{}, a sharing graph~\shareGraph{}, 
and a non-negative integer~$k$,
$b$-Bounded Utilitarian Welfare Sharing Allocation ($b$-UWSA) asks if there is a $b$-bounded $2$-sharing~$\sharing$  such that $\utilitarianWelfare(\sharingAllocation^\sharing) \geq k$;
$b$-Bounded Egalitarian Welfare Sharing Allocation ($b$-EWSA) asks if there is a $b$-bounded $2$-sharing~$\sharing$  such that $\egalitarianWelfare(\sharingAllocation^\sharing) \geq k$.
\end{definition}

We first consider $b$-\utilProblemShort{}.
When $b=1$, since every simple $2$-sharing corresponds to a matching, we can easily reduce $1$-\utilProblemShort{} to \textsc{Maximum Weighted Matching}.
Thus, $1$-\utilProblemShort{} is solvable in polynomial time.
When $b > 1$, however, the problem is not just finding $c$ matchings such that the total is maximized.
Notice that in $2$-sharing each resource can only be shared once.
Nevertheless, we show that we can still reduce $b$-\utilProblemShort{} to \textsc{Maximum Weighted Matching} via a more involved reduction.

\begin{theorem}
\label{thm:UWSA-P}
$b$-\utilProblemShort{} is solvable in polynomial time for any $b \ge 1$.
\end{theorem}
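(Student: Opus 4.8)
The plan is to reduce $b$-\utilProblemShort{} to \textsc{Maximum Weighted Matching} on an auxiliary graph whose vertices encode not just agents but also how many of their sharing "slots" have been used. The utilitarian gain from sharing a resource~$\resource$ owned by~$\agent_i$ to a neighbor~$\agent_j$ is exactly $\utilityFunction_j(\resource)$ (the owner keeps full utility, the receiver gains $\utilityFunction_j(\resource)$), and since in a $2$-sharing each resource is shared at most once, the objective is additive over the chosen (resource, direction, edge) triples. The constraint is that each agent participates in at most~$b$ such triples (counting both resources it shares out and resources it receives). The core idea is a standard trick for degree-bounded matching: split each agent~$\agent$ into $b$ copies $\agent^{(1)},\dots,\agent^{(b)}$, so that selecting an edge incident to $\agent^{(\ell)}$ corresponds to using one of $\agent$'s slots. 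The difficulty, and the reason the theorem needs "a more involved reduction," is that a single sharing event involves a resource and two agents but must consume exactly one slot from \emph{each} of the two agents, while also forbidding that the same resource be shared more than once.

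First I would build the gadget. For each ordered pair $(\agent_i,\agent_j)$ with $\{\agent_i,\agent_j\}\in\shareRelations$ and each resource $\resource\in\allocation(\agent_i)$, create an intermediate "sharing vertex" and attach it to one copy-vertex of $\agent_i$ and one copy-vertex of $\agent_j$ — but a matching edge touches only two vertices, so instead I would make the sharing event itself a single edge between a copy of $\agent_i$ and a copy of $\agent_j$, weighted $\utilityFunction_j(\resource)$. To prevent the same resource~$\resource$ from being shared twice (e.g.\ to two different neighbors), I would route all sharing edges for a fixed resource~$\resource$ through a small resource-gadget that a matching can "open" at most once; concretely, give resource~$\resource$ its own pair of vertices connected so that at most one of the sharing-edges for~$\resource$ can be selected, using large-weight "filler" edges (weight $C$ larger than the total of all real weights, as in Lemma~\ref{lem:matching_p}) so that the maximum weighted matching is forced to saturate every such gadget exactly once. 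The agent-copy vertices enforce the $b$-bound automatically: $\agent$ has $b$ copies, so at most $b$ sharing edges touch~$\agent$.

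Next I would pad the construction with $C$-weighted edges (to dummy vertices) so that every vertex that is \emph{not} forced to be matched by a real sharing edge gets matched by a filler edge of weight~$C$; choosing $C$ larger than the sum of all $\utilityFunction_j(\resource)$ guarantees that every maximum weighted matching first saturates all the padding and then, among the remaining freedom, maximizes the total real weight. The correspondence is then: a $b$-bounded $2$-sharing~$\sharing$ of total extra utility~$W'$ (so $\utilitarianWelfare(\sharingAllocation^\sharing)=\utilitarianWelfare(\sharingAllocation^{\emptyset})+W'$) corresponds to a maximum weighted matching whose weight equals (fixed padding contribution) $+\,W'$. Since $\utilitarianWelfare(\sharingAllocation^{\emptyset})=\sum_i\utilityFunction_i(\allocation(\agent_i))$ is computable directly, we answer "yes" to $b$-\utilProblemShort{} iff the maximum weighted matching value is at least (padding) $+\,(k-\utilitarianWelfare(\sharingAllocation^{\emptyset}))$.

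The main obstacle is getting the resource-gadget right: I must ensure that (i) at most one sharing edge per resource is selectable, (ii) the gadget can always be completed by filler edges no matter which (or no) sharing edge is chosen, so that the max-weight matching is never "penalized" for leaving a resource unshared, and (iii) the whole graph has polynomial size — there are $O(\resourcesCount\cdot\shareRelations)$ sharing edges and $O(b\cdot\agentsCount)$ copy-vertices, all polynomial. Verifying (ii) amounts to checking each gadget has a perfect matching on its "mandatory" vertices in every residual configuration, which is a finite case check on a constant-size gadget. Once the gadget is fixed and $C$ is chosen large, correctness follows by the exchange argument: any max-weight matching can be assumed to saturate all padding, and restricting it to real edges yields a valid $b$-bounded $2$-sharing of the claimed value, and conversely. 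Since \textsc{Maximum Weighted Matching} is polynomial-time solvable and the reduction is polynomial, $b$-\utilProblemShort{} is solvable in polynomial time for every fixed $b\ge 1$.
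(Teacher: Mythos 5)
Your high-level plan (reduce to \textsc{Maximum Weighted Matching}, enforce the bound~$b$ by agent copies, and use large-weight padding) is the right family of ideas, but the two load-bearing pieces are not sound as described. First, the resource gadget: a matching can only constrain edges that are \emph{incident} to its vertices, so a detached per-resource gadget cannot forbid selecting two sharing edges for the same resource~$r$ when those edges run between agent copies and never touch the gadget. To enforce ``$r$ is shared at most once'' you need every sharing edge involving $r\in\pi(a_i)$ to meet one common vertex; but that vertex must then also count against $a_i$'s budget, i.e., the resource vertex and the owner's slot have to be the \emph{same} vertex. This is exactly how the paper resolves it: it creates one vertex $v_i^j$ per pair (agent $a_i$, resource $r_j$ owned by $a_i$), so an agent's $b$ usable slots are identified with its own resources, and a single matching edge $(v_{i_1}^{j_1},v_{i_2}^{j_2})$ of weight $\max\{u_{i_1}(r_{j_2}),u_{i_2}(r_{j_1})\}$ simultaneously consumes one slot of each agent and locks both resources, the maximum over the two directions deciding which resource is actually shared. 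A sharing event is inherently a three-way constraint (a slot of the owner, a slot of the receiver, and the resource); with generic agent copies plus a separate resource gadget you are trying to encode a three-dimensional-matching-type constraint inside an ordinary matching, which is precisely what does not go through.

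Second, the padding is inverted. If \emph{every} vertex has a filler edge of weight $C$ exceeding all real weights to a private dummy, then the maximum weighted matching matches every vertex to its dummy and never uses a real edge (freeing a dummy loses $C$ and gains strictly less), so the reduction extracts no information about sharings. The paper instead gives agent $a_i$ only $|\pi(a_i)|-b$ dummy partners for its $|\pi(a_i)|$ resource vertices: the heavy dummy edges are all forced into any maximum-weight matching, and exactly $b$ vertices per agent remain free to carry real sharing edges or stay unmatched at no cost. Your construction needs an analogous deficiency in the padding, and once you add it and merge the resource vertices with the owners' slots you essentially arrive at the paper's reduction.
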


\begin{proof}
Given an instance of $b$-\utilProblemShort{}, we construct an instance $(G=(V,E),w)$ of \textsc{Maximum Weighted Matching} as follows.
For simplicity, assume each agent $a_i \in \agents$ has at least $b$ resources in the initial allocation $\pi$; otherwise we can ensure this by adding enough resources that are valued as 0 by all agents.
For each agent $a_i \in \agents$ and each resource $r_j \in \resources$, we add a vertex $v_i^j$ into $V$.
In addition, for each agent $a_i \in \agents$, we add $n_i=|\pi(a_i)|-b$ dummy vertices $\{\overline{v}_i^{k}\}_{k=1,2,\dots,n_i}$ into $V$.
For each pair of agents $a_{i_1},a_{i_2}$, for each vertex $v_{i_1}^{j_1}$ corresponding to $a_{i_1}$ and each vertex $v_{i_2}^{j_2}$ corresponding to $a_{i_2}$, we add an edge between $v_{i_1}^{j_1}$ and $v_{i_2}^{j_2}$ with weight $\max\{u_{i_1}(r_{j_2}),u_{i_2}(r_{j_1})\}$.
Finally, for each $v_i^j$ and each $\overline{v}_i^{k}$ corresponding to the same agent $a_i$, we add a dummy edge between them with weight $W=\max_{a_i \in \agents,r_j \in \resources}u_i(r_j)$.
This finishes the construction of the instance $(G=(V,E),w)$.
Notice that there always exists a maximum weighted matching that contains $n_i$ dummy edges for each agent $a_i$.
Let $P=W\sum_{a_i \in \agents}n_i$
 be the weight of those edges.
Next we show that there is a $b$-bounded $2$-sharing $\sharing{}$ such that $\utilitarianWelfare(\sharingAllocation^\sharing) \geq k$ if and only if there is matching~$M$ in graph~$G$ with weight~$\sum_{e \in M}w(e) \ge k-\utilitarianWelfare(\pi)+P$, where $\utilitarianWelfare(\pi)=\sum_{a_i \in \agents}u_i(\pi(a_i))$.

$\Rightarrow:$
Assume there is a $b$-bounded $2$-sharing $\sharing{}$ such that $\utilitarianWelfare(\sharingAllocation^\sharing) \geq k$.
Based on $\sharing{}$, we can find a matching $M$ with the claimed weight by including the corresponding edges for each $e \in \shareRelations$ with $\sharing(e) \neq \emptyset$ and edges between the remaining normal vertices and all dummy vertices.
Formally, for each edge $(a_{i_1},a_{i_2}) \in \shareRelations$ such that $\sharing(a_{i_1},a_{i_2})=r_{j_1}$ and $r_{j_1} \in \pi(a_{i_1})$ we add the edge $(v_{i_1}^{j_1},v_{i_2}^{j_2})$ into matching $M$, where $r_{j_2} \in \pi(a_{i_2})$ is an arbitrary resource except for the resource that is shared by $a_{i_2}$ with some other agent under $\sharing{}$.
Notice that $w(v_{i_1}^{j_1},v_{i_2}^{j_2})=\max\{u_{i_1}(r_{j_2}),u_{i_2}(r_{j_1})\}\ge u_{i_2}(r_{j_1})$.
After this, for each agent $a_i$ there are at least $n_i$ normal vertices not matched and exactly $n_i$ dummy vertices not matched, so we can add $n_i$ dummy edges of the same weight $W$ into $M$, each containing one normal vertex and one dummy vertex.
Since $\utilitarianWelfare(\sharingAllocation^\sharing) \geq k$, we have that 
$\sum_{e \in M}w(e)= \utilitarianWelfare(\sharingAllocation^\sharing)-\utilitarianWelfare(\pi)+P \ge k-\utilitarianWelfare(\pi)+P$.

$\Leftarrow:$
Assume there is matching~$M$ in graph~$G$ with the claimed weight.
Without loss of generality, we can assume $M$ contains $n_i$ dummy edges for every agent $a_i$ since the weight of dummy edges is no smaller than that of non-dummy edges.
Then for each agent $a_i$, $M$ contains at most $c$ non-dummy edges.
Based on these non-dummy edges in $M$ we can find the corresponding $b$-bounded $2$-sharing $\sharing{}$ such that $\utilitarianWelfare(\sharingAllocation^\sharing) \geq k$ as follows.
For each non-dummy edge $(v_{i_1}^{j_1},v_{i_2}^{j_2}) \in M$, we set $\sharing(a_{i_1},a_{i_2})=r_{j_1}$ if $u_{i_2}(r_{j_1}) \ge u_{i_1}(r_{j_2})$ and $\sharing(a_{i_1},a_{i_2})=r_{j_2}$ otherwise.
Since $M$ contains at most $b$ non-dummy edges for each $a_i$, $a_i$ participates in at most $b$ sharings in $\sharing{}$.
Moreover, we have that 
\[
\begin{aligned}
\utilitarianWelfare(\sharingAllocation^\sharing) 
&=\utilitarianWelfare(\pi) + \sum_{e=(v_{i_1}^{j_1},v_{i_2}^{j_2}):\sharing(a_{i_1},a_{i_2}) \neq \emptyset} w(e) \\
&=\utilitarianWelfare(\pi) + \sum_{e \in M}w(e) - P \ge k.
\end{aligned}
\]
\end{proof}

Next we consider $b$-\egalProblemShort{}.
When $b=1$, we show in Lemma \ref{lem:EWSA-1-P} that we can reduce the problem to \textsc{Maximum Matching}.
The idea is to partition all agents into two subgroups according to the target $k$ 
and build a bipartite graph characterizing whether one agent from one group can improve 
the utility of one agent from the other group to $k$ by sharing, then the problem is just to 
find a maximum matching on the bipartite graph.

\begin{lemma}
\label{lem:EWSA-1-P}
$1$-\egalProblemShort{} is solvable in polynomial time.
\end{lemma}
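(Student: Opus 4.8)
The plan is to reduce $1$-\egalProblemShort{} to computing a maximum matching in an auxiliary bipartite graph. Recall that a simple $2$-sharing is a matching $M$ in the sharing graph \shareGraph{} together with, for each edge $\{a_i,a_j\}\in M$, one resource $r\in\allocation(a_i)\cup\allocation(a_j)$ moved along that edge; if $r\in\allocation(a_j)$ then the owner $a_j$ keeps its bundle and the ``receiver'' $a_i$ gains exactly the single resource $r$, so $\utilityFunction_i(\sharingAllocation^{\sharing}(a_i))=\utilityFunction_i(\allocation(a_i))+\utilityFunction_i(r)$ (using additivity and $\allocation(a_i)\cap\allocation(a_j)=\emptyset$). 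First I would split the agents into the set $\agents_{\mathrm{ok}}:=\{a_i\in\agents:\utilityFunction_i(\allocation(a_i))\geq k\}$ and $\agents_{\mathrm{def}}:=\agents\setminus\agents_{\mathrm{ok}}$; if $\agents_{\mathrm{def}}=\emptyset$ the empty sharing certifies a yes-instance.

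The structural observation driving everything is that in any simple $2$-sharing with $\egalitarianWelfare(\sharingAllocation^{\sharing})\geq k$, every deficient agent must appear on exactly one edge of $M$, must be the receiver on it, and its partner must lie in $\agents_{\mathrm{ok}}$. Indeed, a giver's bundle never changes and each agent lies on at most one edge of $M$, so a deficient agent that is unmatched or that only gives stays below $k$; and if the partner $a_j$ of a receiving deficient $a_i$ were itself deficient, then $a_j$'s only incident $M$-edge would be $\{a_i,a_j\}$, on which $a_j$ is the unaffected giver, so $a_j$ would stay below $k$ as well. Conversely, an agent of $\agents_{\mathrm{ok}}$ stays at value $\geq k$ no matter what, since its bundle can only grow.

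Accordingly I would build the bipartite graph $H$ with parts $\agents_{\mathrm{def}}$ and $\agents_{\mathrm{ok}}$, placing an edge between $a_i\in\agents_{\mathrm{def}}$ and $a_j\in\agents_{\mathrm{ok}}$ exactly when $\{a_i,a_j\}\in\shareRelations$ and some $r\in\allocation(a_j)$ satisfies $\utilityFunction_i(\allocation(a_i))+\utilityFunction_i(r)\geq k$, and then test whether $H$ admits a matching saturating $\agents_{\mathrm{def}}$ (polynomial-time by bipartite matching). For the ``only if'' direction, the observation above shows that the $M$-edges incident to deficient agents already form such a saturating matching in $H$, the witnessing resources being the ones $\sharing$ moves along them. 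For the ``if'' direction, given a saturating matching $M_H$ of $H$, set $\sharing(\{a_i,a_j\}):=\{r\}$ for each $(a_i,a_j)\in M_H$ (with $r\in\allocation(a_j)$ a witness) and $\sharing(e):=\emptyset$ otherwise: since $M_H$ is a matching and the initial bundles are pairwise disjoint, every agent lies on at most one sharing edge carrying a single resource and each shared resource lies in at most two bundles, so all conditions of Definition~\ref{def:2-simple-sharing} hold; every deficient agent now reaches value $\geq k$ and every satisfied agent retains value $\geq k$, hence $\egalitarianWelfare(\sharingAllocation^{\sharing})\geq k$.

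I expect the only real content to be the structural observation, and within it the point that two deficient agents cannot rescue each other because one of them is forced to act as the unaffected giver; once that is settled, the equivalence with bipartite matching and the verification of the simple-$2$-sharing conditions are routine bookkeeping, all following from $M_H$ being a matching and from disjointness of the initial allocation.
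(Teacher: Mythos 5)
Your proposal is correct and follows essentially the same route as the paper: partition the agents by whether their initial bundle already meets the threshold~$k$, build a bipartite graph between deficient and satisfied agents with an edge whenever a single shared resource can lift the deficient agent to~$k$, and test for a matching saturating the deficient side. Your explicit justification that two deficient agents cannot rescue each other (since the giver's bundle is unchanged) is a point the paper's proof treats only implicitly, but the argument is the same.
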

\begin{proof}
 \newcommand{\utilityThreshold}{\ensuremath{k}}
 \newcommand{\betterAgents}{\ensuremath{\agents^+_\utilityThreshold}}
 \newcommand{\worseAgents}{\ensuremath{\agents^-_\utilityThreshold}}
  Depending on the target~\utilityThreshold{}, we partition the set~\agents{} of agents into two
 sets \betterAgents{} and~\worseAgents{} containing, respectively, the agents
 with their bundle value under~$\allocation$ at least~\utilityThreshold{} and smaller than~\utilityThreshold{}.
 Now, we construct a bipartite, undirected graph~$\graph_\utilityThreshold =
 (\betterAgents, \worseAgents, E_\utilityThreshold)$. Consider two
 agents~$\agent_i \in \betterAgents$ and~$\agent_j \in \worseAgents$ that are
 neighbors in the sharing graph~\shareGraph{}. An edge~$e = \{\agent_i, \agent_j\}$ belongs
 to~$E_\utilityThreshold$ if~$\agent_i$ can share a resource with~$\agent_j$ to
 raise the utility of the latter to at least~\utilityThreshold{}; 
 formally,
 there exists a resource~$\resource \in \allocation(\agent_i)$ such
 that~$\utilityFunction_j(\allocation(\agent_j)) + \utilityFunction_j(\resource)) \geq \utilityThreshold$.

 We claim that there is a simple $2$-sharing~$\delta$ with~$\egalitarianWelfare(\sharingAllocation^\sharing) \geq k$ if and only if there is matching~$M$ in
 graph~$G_k$ with~$|M| \ge |\worseAgents|$.
 The backward direction is clear according to the construction of~$\graph_\utilityThreshold$.
 For the forward direction, if there is a simple $2$-sharing~$\delta$ with~$\egalitarianWelfare(\sharingAllocation^\sharing) \geq k$, we can build a matching $M$ in
 graph~$G_k$ by adding an edge $e = \{\agent_i, \agent_j\}$ to $M$ if $\delta(\{\agent_i,\agent_j\}) \neq \emptyset$.
 Notice that since $\egalitarianWelfare(\sharingAllocation^\sharing) \geq k$, the edge~$e = \{\agent_i, \agent_j\}$ must be in~$E_\utilityThreshold$.
 Since~$\egalitarianWelfare(\sharingAllocation^\sharing) \geq k$, according to construction of~$\graph_\utilityThreshold$ and~$M$, we have that for each~$\agent_j \in \worseAgents$, there exists an agent~$\agent_i \in \betterAgents$ such that $\{\agent_i, \agent_j\} \in M$, and hence $|M| \ge |\worseAgents|$.
 Thus, we just need to check whether the maximum matching in 
 graph~$G_k$ has size at least~$|\worseAgents|$, which is solvable in polynomial time.
\end{proof}

On the other hand, we show that $b$-\egalProblemShort{} is NP-hard when $b \ge 2$.
Notice that there is a trivial reduction from \textsc{3-Partition} to $b$-\egalProblemShort{} if $b \ge 3$.
To show the result for any $b \ge 2$, we reduce from the strongly NP-hard \textsc{Numerical Three-dimensional Matching} (N3DM) \cite{DBLP:books/fm/GareyJ79}.

\begin{theorem}
\label{thm:EWSA-b-hard}
$b$-\egalProblemShort{} is NP-hard for any constant $b \ge 2$.
\end{theorem}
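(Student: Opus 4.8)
The plan is to reduce from \textsc{Numerical Three-Dimensional Matching} (N3DM), which is strongly \nph{}: given three multisets $X = \{x_1,\dots,x_q\}$, $Y = \{y_1,\dots,y_q\}$, $Z = \{z_1,\dots,z_q\}$ of positive integers and a target $t$ with $\sum_i x_i + \sum_i y_i + \sum_i z_i = qt$, decide whether the elements can be partitioned into $q$ triples, one from each set, each summing to exactly $t$. Strong \nph{}ness lets us assume all integers are bounded by a polynomial in $q$, so encoding them in unary (which \egalProblemShort{} implicitly does, since the utility values are part of the input) is fine. I would first prove the theorem for $b = 2$ and then remark how the same gadget degrades gracefully (or can be padded) for larger constant $b$; since the statement is for \emph{constant} $b$, it suffices to give one construction per $b$, and the $b=2$ case is the tight one.

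The construction: create three ``element agents'' $a^X_i$, $a^Y_i$, $a^Z_i$ for each index $i$, plus a set of ``collector agents'' and suitable padding agents. The sharing graph $\shareGraph$ is chosen so that a $Y$-agent can share with $X$-agents and $Z$-agents but not vice versa, forcing the simple-$2$-sharing restriction to act like a choice of a triple. The key trick is to give each $Y$-agent $a^Y_i$ an initial bundle worth $y_i$ and to let it participate in \emph{two} sharings (this is exactly where $b = 2$ is used): it receives a resource of value $x_j$ from some $X$-agent and a resource of value $z_\ell$ from some $Z$-agent, so that after sharing its bundle is worth $x_j + y_i + z_\ell$. Set the egalitarian target $k = t$. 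We then need: (i) every $Y$-agent must be brought up to exactly $t$, which (given that $y_i$ alone is below $t$ and any single share is too small, while any share plus any other pushes past unless it is exactly right) forces a perfect matching of $X$- and $Z$-resources onto $Y$-agents with the triple-sum condition; (ii) the $X$- and $Z$-agents, the collectors, and the padding all already sit at value $\ge t$ in the initial allocation (since sharing never decreases the owner's utility, giving a resource away is free for them), so they impose no further constraint. The forward direction (solution of N3DM $\Rightarrow$ good $2$-sharing) is a direct translation; the backward direction requires arguing that a $2$-sharing achieving $\egalitarianWelfare \ge t$ must use each $X$-resource and each $Z$-resource exactly once on a distinct $Y$-agent — this is where the counting ($q$ units of ``$+x$'' available, $q$ agents each needing at least one, etc.) and the exact-sum forcing come in.

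The main obstacle I anticipate is engineering the value gadgets so that the \emph{only} way to reach the threshold $t$ on every agent simultaneously is the intended triple structure — in particular ruling out ``wasteful'' sharings where, say, two $X$-resources land on agents that could have used only one, or where an $X$-resource is shared to an agent that is already above threshold. The standard fix is to make the initial surplus of every non-$Y$ agent exactly zero relative to $t$ (they start at exactly $t$) and to make every individual resource value that a $Y$-agent could receive strictly less than $t - y_i$ for all $i$, so that each $Y$-agent genuinely needs both of its two allowed incoming shares; combined with the equality $\sum x_i + \sum y_i + \sum z_i = qt$, this pins down the matching. For general constant $b \ge 3$ one either reuses the $b=2$ gadget verbatim (a $b$-bounded sharing is in particular not forced to use all $b$ slots, so the hardness is inherited once we check the reduction still works — but one must re-verify that extra slack does not create new solutions) or, more safely, pad each $Y$-agent with $b - 2$ extra mandatory shares of fixed value from dedicated donor agents, leaving exactly two ``free'' slots behaving as before. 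I would present the $b=2$ reduction in full and handle the $b \ge 3$ case by the padding argument in a short closing paragraph.
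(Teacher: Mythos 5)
Your high-level plan coincides with the paper's: both reduce from the strongly \nphard{} \textsc{Numerical Three-Dimensional Matching}, both designate one group of agents (your $Y$-agents, the paper's $Z$-group) as the only agents below the threshold, and both force each such agent to use exactly its two allowed sharings to climb to the target. However, your backward direction has a genuine gap that your proposed fix does not close. Your counting argument (every receivable resource is worth less than $t - y_i$, plus $\sum x_i + \sum y_i + \sum z_i = qt$) only establishes that each $Y$-agent receives \emph{exactly two} resources summing to \emph{exactly} $t - y_i$. It does not establish that one of them comes from an $X$-agent and the other from a $Z$-agent. Since the sharing graph is undirected and a $2$-bounded $2$-sharing lets an agent participate in two sharings with two \emph{different} neighbors, a $Y$-agent may receive two $X$-resources (from two distinct $X$-agents) while another receives two $Z$-resources; e.g.\ $y_1 + x_1 + x_2 = t$ and $y_2 + z_1 + z_2 = t$ is a feasible sharing that need not correspond to any N3DM solution. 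Restricting the sharing graph cannot rule this out without also destroying the intended solutions.

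The missing idea is a magnitude-separation encoding of the two resource classes. The paper values the resource originating from $x_i$ at $B^2T + x_i$ and the one originating from $y_i$ at $BT + y_i$ (for the receiving agents), with $B = mT$, and sets the threshold to $(B^2+B+1)T$. Then, from a receiver's viewpoint, all ``middle'' resources together are worth less than a single ``large'' one, so every receiver needs at least one large resource; since there are exactly $m$ large resources, $m$ receivers, and each resource can be shared to at most one agent, each receiver gets exactly one large resource, and the same pigeonhole then forces exactly one middle resource each. Only after this structural forcing does the arithmetic ($\ge T$ per triple plus total sum $mT$) pin the sums to exactly $T$. You would need to add such offsets (or an equivalent interval-separation of the $X$- and $Z$-values) to make your reduction sound; note also that the paper's version of this trick makes the ``exactly one of each kind'' argument independent of $b$, so the $b \ge 3$ case needs no separate padding gadget, whereas under your raw-value encoding the extra slack for $b \ge 3$ creates yet more spurious solutions (three resources summing to $t - y_i$).
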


\begin{proof}
We present a polynomial-time many-one reduction from the \nphard{} N3DM.
Therein, given 3 multisets of positive integers~$X,Y,Z$, each containing $m$ elements, and a bound $T$, the task is to decide whether there is a partition $S_1,S_2,\dots,S_m$ of $X \cup Y \cup Z$ such that each $S_i$ contains exactly one element from each of $X,Y,Z$ and the sum of numbers in each $S_i$ is equal to $T$.
Given an instance $(X,Y,Z)$ of N3DM, we construct an instance of $b$-\egalProblemShort{} as follows.
Without loss of generality, assume all elements from $X \cup Y \cup Z$ are smaller than $T$ and the sum of them is equal to $B \coloneqq mT$. 
We set the goal $k=(B^2+B+1)T$.
We create 3 groups of agents corresponding to the 3 multisets $X,Y,Z$.
For each $x_i \in X$, we create an agent $a_i^1$ in group 1 who holds a \emph{large} resource that is valued as $k$ by itself and $B^2T+x_i$ by all other agents.
For each $y_i \in Y$, we create an agent $a_i^2$ in group 2 who holds a \emph{middle} resource that is valued as $k$ by itself and $BT+y_i$ by all other agents.
For each $z_i \in Z$, we create an agent $a_i^3$ in group 3 who holds a \emph{small} resource that is valued as $z_i$ by itself and 0 by all other agents.
In the initial allocation, all $2m$ agents in group 1 and group 2 have utility exactly $k$ and all $m$ agents from group 3 have utility less than $k$.

$\Rightarrow:$ 
If $(X,Y,Z)$ is a ``yes''-instance of N3DM, then based on the partition $S_1,S_2,\dots,S_m$ with the claimed properties, we can find a sharing~$\sharing{}$ such that each agent participates in at most two sharings and has utility at least $k$ under $\sharingAllocation^\sharing$ as follows.
For each $S_i=\{x_{i_1},y_{i_2},z_{i_3}\}$ with $x_{i_1}+y_{i_2}+z_{i_3}=T$, we let agent $a_{i_1}^1$ share its big resource with $a_{i_3}^3$ and let agent $a_{i_2}^2$ share its middle resource with $a_{i_3}^3$ such that agent $a_{i_3}^3$ has utility $B^2T+x_{i_1}+BT+y_{i_2}+z_{i_3}=(B^2+B+1)T=k$.
Since each $S_i$ contains exactly one element from each of $X,Y,Z$,
we have that each agent in group 3 has value exactly $k$ under $\sharingAllocation^\sharing$.
Thus all agents have value at least $k$ under $\sharingAllocation^\sharing$.
Moreover, every agent in group 1 and group 2 participates in exactly one sharing and every agent in group 3 participates in exactly two sharings.

$\Leftarrow:$
If the instance of $b$-\egalProblemShort{} is a ``yes''-instance, then there is a sharing~$\sharing{}$ such that every agent has utility at least $k$ under $\sharingAllocation^\sharing$.
For each agent $a_i^3$ in group 3, let $S'_i=\sharingAllocation^\sharing(a_i^3)$ be the set of resources $a_i^3$ has access to under $\sharingAllocation^\sharing$.
Since each agent $a_i^3$ in group 3 values all small items hold by other agents as 0, without loss of generality, we can assume $S'_i$ contains no small items except for the one initially hold by $a_i^3$ itself.
Notice that from the viewpoints of agents in group 3, the sum of value of all middle resources is smaller than one big resource.
Since agent $a_i^3$ has utility at least $k$ under $\sharingAllocation^\sharing$, $S'_i$ contains at least one big resource.
It follows that each $S'_i$ contains exactly one big resource as each big resource can be shared by at most one agent from group 3 and there are $m$ agents in group 3 and $m$ big resources.
Next, to guarantee that each agent $a_i^3$ has utility at least $k$, each $S'_i$ should contain at least one middle resource.
Again, since each middle resource can be shared by at most one agent and there are $m$ agents in group 3 and $m$ middle resources, each $S'_i$ contains exactly one middle resource.
Thus, each $S'_i$ contains exactly one big resource, one middle resource, and one small resource, and each resource is contained in exactly one $S'_i$.
According to the construction, based on $S'_1,S'_2,\dots,S'_m$, we can find a partition $S_1,S_2,\dots,S_m$ of $X \cup Y \cup Z$ such that each $S_i$ contains exactly one element from each of $X,Y,Z$ and the sum of numbers in each $S_i$ is at least $k-B^2T-BT=T$.
Since the sum of all elements from $X \cup Y \cup Z$ is $mT$, the sum of numbers in each $S_i$ is exactly $T$.
Therefore, $(X,Y,Z)$ is a ``yes''-instance of N3DM.
\end{proof}

\section{Reducing Envy by Sharing}

In this section, we study the problem of reducing envy through sharing, defined as follows.

\begin{definition}\label{def:Penvy}
Given an initial complete allocation~\allocation{} of resources~\resources{} to agents~\agents{}, a sharing graph~\shareGraph{}, 
an attention graph~\attentionGraph{}, 
and a non-negative integer~$k$,
Envy Reducing Sharing Allocation (ERSA) asks if there is a \emph{simple $2$-sharing}~$\sharing$  such that  the number of envious agents~$|\Env(\sharingAllocation^\sharing)| \leq k$.
\end{definition}

Notice that in the above definition we restrict that the sharing is a \emph{simple $2$-sharing}, i.e., $1$-bounded $2$-sharing, because the problem for reducing envy in this setting is already~\nphard{}, even in a special case when the attention graph
and the sharing graph are (bidirectional) cliques
and the goal is to decrease the number of envious agents
by one, as shown in Theorem \ref{thm:hard_clique}.

\begin{theorem}%
 \envyReducingProblemShort{} is~\nphard{} even if 
 the attention graph and the sharing graph are (bidirectional) cliques,
 and the goal is to reduce the number of envious agents by at least one.
 \label{thm:hard_clique}
\end{theorem}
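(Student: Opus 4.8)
The plan is to prove \nphardness{} by a polynomial-time many-one reduction from an \np-complete problem with rigid combinatorial structure. Two features of the restricted setting shape the choice of source problem. First, because the attention graph is a bidirectional clique, an agent is envious under a sharing allocation $\sharingAllocation^\sharing$ precisely when its own bundle does \emph{not} have maximum value in its own eyes; moreover the giver of a shared resource keeps its bundle, so the only agents that can \emph{stop} being envious are receivers, while \emph{new} envy can only arise toward agents whose bundle grew. Second, because the sharing graph is a clique, a simple $2$-sharing is just an agent-disjoint family of single-resource transfers, i.e.\ a matching on the agents in which every matched pair is labelled by a resource of one endpoint that is handed to the other. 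This matching flavour makes \clique{} (or, for cleaner selection gadgets, \MCC{}) a natural source; a matching-native problem such as \rainbowmatching{} would be the fallback if the clique encoding turns out to be awkward. I would commit to \clique{}/\MCC{}.

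Given a \clique{} instance $(H=(V_H,E_H),\ell)$, I would build an \envyReducingProblemShort{} instance containing: one \emph{guard agent} $g$, whose de-envying is meant to be the only net way to reduce the count; a family of \emph{selection agents} encoding a choice of $\ell$ vertices and $\binom{\ell}{2}$ edges of $H$; a family of \emph{fragile agents}, initially non-envious, each tuned to become envious as soon as the bundle of a particular selection agent (or of $g$) grows; and a pool of neutral \emph{padding agents} that only ever act as givers and never become envious. Utilities are additive and follow a weight hierarchy with a large base $N$, in the spirit of the $B,B^2$ scaling used in the proof of Theorem~\ref{thm:EWSA-b-hard}: every resource that is ever transferred has a carefully scaled positive value to its intended recipient and value $0$ to \emph{every} other agent, so that a transfer is invisible to all third parties — in particular it can never make the giver, or anyone else, envy the recipient. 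Writing $\mathcal{E}$ for the set of agents envious in the initial allocation, we set $k=|\mathcal{E}|-1$, and the whole gadget is engineered so that every simple $2$-sharing $\sharing$ satisfies $|\Env(\sharingAllocation^\sharing)|\ge|\mathcal{E}|-1$, with equality iff $\sharing$ ``encodes'' a size-$\ell$ clique.

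For the forward direction, a size-$\ell$ clique $K$ yields a simple $2$-sharing that uses padding agents as givers to hand each selection agent of $K$ — and finally $g$ — exactly its designated resource: this de-envies $g$, de-envies the chosen selection agents, and, because the choice is consistent, trips no fragile agent, so $|\Env(\sharingAllocation^\sharing)|=|\mathcal{E}|-1$. For the backward direction, in any improving sharing $g$ must be a receiver; the resource that de-envies $g$ becomes ``available'' only when a specific set of $\ell$ vertex-selection and $\binom{\ell}{2}$ edge-selection agents are simultaneously receivers of their designated resources; the matching constraint of a simple $2$-sharing forces these selections to be mutually consistent — each edge-agent's transfer can be fed only if the two endpoints it refers to are among the selected vertex-agents, and a vertex-agent cannot be reused for incompatible edges — and any inconsistency tips at least one extra fragile agent into $\Env$, wiping out the gain. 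Hence an improving sharing exists iff $H$ has a clique of size $\ell$.

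The main obstacle is exactly this knife-edge accounting combined with consistency enforcement: one must choose the weight hierarchy and fragile-agent thresholds so that (i) every partial or inconsistent attempt to feed the selection agents flips at least as many fragile agents into envy as it cures, giving net change $\ge 0$, while a fully consistent clique-encoding gives net change exactly $-1$; and (ii) the \emph{matching} nature of simple $2$-sharings — not the clique sharing graph, which by itself imposes no restriction — is harnessed as the device that couples each edge-agent to its two vertex-agents and forbids reuse. This is where all the real work lies; the structural simplifications of a bidirectional attention clique are deceptive, since ``everyone watches everyone'' means one must also check that no third party ever notices a transfer, which is precisely what the ``value $0$ to all non-recipients'' choice guarantees.
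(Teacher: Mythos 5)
There is a genuine gap, and it sits at the heart of your gadget design. You simultaneously require that every transferred resource ``has value $0$ to every other agent, so that a transfer is invisible to all third parties'' and that fragile agents ``become envious as soon as the bundle of a particular selection agent grows.'' These two requirements are contradictory: a fragile agent is a third party with respect to the resource handed to a selection agent, so if that resource is worth $0$ to it, its envy status never changes. If instead you give the resource positive value to the fragile agent, the invisibility claim collapses, and in a bidirectional clique attention graph \emph{every} agent sees \emph{every} bundle, so you must then re-audit the entire instance for unwanted envy. A second, related problem is the consistency coupling itself: whether a fragile agent $f$ envies an edge-selection agent $c_e$ depends only on $u_f$ applied to the two bundles $\sharingAllocation^\sharing(f)$ and $\sharingAllocation^\sharing(c_e)$; it cannot depend on whether the two \emph{vertex}-selection agents corresponding to $e$ also received their resources. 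So a single watcher cannot certify the joint condition ``edge selected but endpoint not selected.'' The certification-gadget architecture you are implicitly importing works in the paper's pathwidth hardness proof (Theorem~\ref{thm:XP_treewidth}) precisely because there the attention graph is a sparse digraph whose arcs can be placed at will; with a bidirectional clique that degree of freedom is gone and all structure must be encoded in the utility values alone. You have not specified how to do that, and the ``knife-edge accounting'' you defer is exactly the part that fails.

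The paper's proof avoids all of this by reducing from \IS{} and letting the cured agents police \emph{each other}, which is what the clique attention graph provides for free. Each vertex agent $a_i$ values $r_j^a$ at $1$ exactly when $v_i$ and $v_j$ are adjacent; a set of vertex agents can all be lifted to the common utility level $3$ by distinct providers, and they remain mutually non-envious precisely when the corresponding vertices are pairwise non-adjacent. No guard, fragile, or edge-selection agents are needed: the counting argument (there are only $\ell$ providers, and a provider and a vertex agent cannot both end up non-envious) forces at least $\ell$ vertex agents to be cured, and mutual visibility forces them to be an independent set. If you want to salvage your approach, the fix is not a finer weight hierarchy but this structural shift: make the selected agents themselves the consistency checkers, rather than introducing auxiliary watchers whose attention you cannot restrict.
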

\begin{proof}
 We present a polynomial-time many-one reduction from the NP-hard \IS{} problem.
 Therein, for an undirected
 graph~$G=(V,E)$ and an integer~$\ell$, we ask
 whether there is a subset of~$V$ of at least~$\ell$~vertices
 that are pairwise non-adjacent.
 
 For an instance~$(G=(V,E),\ell)$ of \IS{} with \namedorderedsetof{V}{v}{n}, we
 construct an instance of \envyReducingProblemShort{} as follows. For each
 vertex~$v_i \in V$, we create an agent~$a_i$ who initially has two
 resources~$r_i^a$ and~$r_i^{a'}$. Moreover, we add~$\ell$~ more agents:
 \emph{providers}~$p_1$, $p_2$, $\dots$, $p_{\ell-1}$ and a \emph{special provider}~$s$.
 Initially, we allocate resource~$r_i^p$ to each provider~$p_i$ and
 resource~$r^s$ to the special
 provider. Next we specify utility functions
 (see also Table~\ref{tab:hard_clique}). Each agent~$a_i$ has value~$1$ for each
 resource~$r_j^a$ for which~$\{v_i,v_j\} \in E$, $j \in [n]$, and value~$0$ for
 other resources~$r_j^a$ (note that, in particular, agent~$a_i$ gives value~$0$
 to its own resource).
 Each agent~$a_i$ has value 0 for all~$r_j^{a'}$, $j \in [n]$, and value 3 for
 other resources. Each provider~$p_i$ gives
 value~$1$ to resources in~$\{r_1^a,r_2^a,\dots,r_n^a\} \cup \{r_1^{a'},r_2^{a'},\dots,r_n^{a'}\}$ and value~$3$
 to other resources. The special provider~$s$ has value~$3$ for its own
 resource~$r^s$ and value~$0$ for all other ones. The attention graph is
 a (bidirectional) clique  and the sharing graph is also a clique, so every two agents can share their resources.
 By the construction, initially there are~$n$ envious agents:
 $a_1,a_2,\dots,a_n$.
 Thus, we end the construction
 by setting the target number~$k=n-1$, so
 we aim at
 decreasing the number of envious agents by at least~$1$.
 
 In what follows we show that there is an independent set of size at
 least~$\ell$ in~$G$ if and only if there is simple $2$-sharing~$\delta$ such
 that~$|\Env(\sharingAllocation^\delta)| \leq k$.
 Notice that in every simple $2$-sharing, the special provider~$s$ is not
 envious.

\begin{table}[t]
 \centering
 \begin{tabular}{m{0.5cm}|cccc}
  \toprule
  $u$ &  $r_j^a$ & $r_j^{a'}$ & $r_j^p$ & $r^s$  \\ 
  \midrule
  $a_i$ & $\left[ \{v_i, v_j\} \in E \right]$ & 0 & 3 & 3\\
  $p_i$ & 1 & 1 & 3 & 3\\
  $s$ & 0 & 0 & 0 & 3\\
 \bottomrule
 \end{tabular}
 \caption{Utility functions in the proof of~Theorem~\ref{thm:hard_clique}. We use the
 Iverson bracket notation: for some logical expression~$X$,
 $\left[X\right]$ is one if~$X$ is true and zero otherwise.}
\label{tab:hard_clique}
\end{table}

$\Rightarrow:$ Suppose that there is an independent set~$S$ of size~$\ell$
in~$G$. Let~$S'$ be a set of $\ell$~agents from~\orderedsetof{a}{n} corresponding to~$S$.
Then, each agent in~$S'$ can share with a different provider (including the
special provider) and increase its own value to~$3$. After this sharing, denoted
by~$\delta$, agents in~$S'$ do not envy any provider since from their point of
view each provider has a bundle of value~$3$. In addition, since vertices in~$S$
are pairwise non-adjacent, agents in~$S'$ do not envy each other as they see
each other having a bundle of value $3+0+0=3$. Together with the special provider,
there are at least~$\ell+1$ non-envious agents. Consequently,
$|\Env(\sharingAllocation^\delta)| \leq k$.

$\Leftarrow:$ Suppose that there is a simple $2$-sharing~$\delta$ such
that $|\Env(\sharingAllocation^\delta)| \leq k$.
Let~$N$ be the set of non-envious agents excluding the special provider~$s$ after the sharing.
Since overall there are~$n+\ell$ agents and at most~$k$ of them are envious, we have~$|N| \ge (n+\ell)-k-1=\ell$ ($s$ is not included in~$N$).
Denote~$N_a=N \cap \orderedsetof{a}{n}$ and~$N_p=N \cap \orderedsetof{p}{\ell-1}$. 
Then~$|N_a|+|N_p|=|N| \ge \ell$.
Since~$|N_p| \le \ell-1$, we have~$N_a \neq \emptyset$.
We will show that~$N_p=\emptyset$.
Suppose, towards a contradiction, that~$N_p \neq \emptyset$ and let~$a_{i^*} \in N_a$ and~$p_{j^*} \in N_p$.
Initially, $u_{a_{i^*}}(\pi(a_{i^*}))=0$ and~$u_{a_{i^*}}(\pi(p_{j^*}))=3$.
Since~$a_{i^*}$ is not envious after the sharing~$\delta$, $a_{i^*}$ must share with a provider.
Consequently, $u_{p_{j^*}}(\sharingAllocation^\delta(a_{i^*}))=1+1+3=5$.
Since initially~$u_{p_{j^*}}(\pi(p_{j^*}))=3$ and~$p_{j^*}$ is not envious after the sharing~$\delta$, $p_{j^*}$ must share with another provider.
However, this will make~$a_{i^*}$ envious again.
Thus, $N_p=\emptyset$, and hence~$N=N_a  \subseteq \{a_1,a_2,\dots,a_n\}$.
Now to make all agents in~$N$ non-envious, all of them have to share with one provider.
For any two agents~$a_i,a_j \in N$, let~$r_{i'}^p$ and~$r_{j'}^p$ be the resources shared to~$a_i$ and~$a_j$, respectively.
Since~$a_i$ is not envious towards~$a_j$, we have that~$u_{a_i}(\sharingAllocation^\delta(a_i)) \ge u_{a_i}(\sharingAllocation^\delta(a_j))$, where~$\sharingAllocation^\delta(a_i)=\{r_i^a,r_i^{a'},r_{i'}^p\}$ and~$\sharingAllocation^\delta(a_j)=\{r_j^a,r_j^{a'},r_{j'}^p\}$.
Since~$u_{a_i}(r_i^a)=u_{a_i}(r_i^{a'})=u_{a_i}(r_j^{a'})=0$ and~$u_{a_i}(r_{i'}^p)=u_{a_i}(r_{j'}^p)=3$, we
get~$u_{a_i}(r_j^a)=0$, which means $\{v_i, v_j\} \not \in E$. Thus the
corresponding vertices for agents in~$N$ form an independent set of size at
least~$\ell$ in~$G$.
\end{proof}

Theorem~\ref{thm:hard_clique} in fact constitutes a strong intractability result
and it calls for further studies on other specific features of the
input. %
We counteract the intractability result of
\envyReducingProblemShort{} (Theorem~\ref{thm:hard_clique}) by considering cases with
few agents, tree-like graphs, identical utility functions, or few resources.

\subsection{Reducing Envy for Few Agents}

The simple fact that for $n$~agents and $m$~resources there are at most $m^n$
possible $2$-sharings leads to a straightforward brute-force algorithm that runs in
polynomial time if the number of agents is constant.
\begin{observation}
\envyReducingProblemShort{} with $n$~agents and $m$~resources is solvable
in~$O(\binom{m}{\lfloor \frac{n}{2} \rfloor}^nmn)$ time.
\label{obs:XP_agents}
\end{observation}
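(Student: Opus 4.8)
The plan is to give a direct brute-force counting argument. First I would observe that since each agent can be involved in at most one shared resource (the sharing is a simple $2$-sharing, so $1$-bounded), a simple $2$-sharing is essentially a partial matching on the agents together with a labeling of each matched pair by one resource. The key bound is that the resources shared ``out'' by an agent all come from that agent's bundle under $\allocation$; but more bluntly, I would bound the whole structure by: choose which agents are matched and how, then choose a resource for each pair.

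Concretely, the main step is the following counting. In a simple $2$-sharing at most $\lfloor n/2 \rfloor$ pairs of agents actually share a resource, so at most $\lfloor n/2 \rfloor$ distinct resources are used in the sharing. I would therefore iterate over all size-(at most $\lfloor n/2 \rfloor$) subsets of $\resources$ that could serve as the shared resources --- there are $O\!\left(\binom{m}{\lfloor n/2 \rfloor}\right)$ of them (padding to bound smaller subsets too) --- and then over all ways to distribute these chosen resources to agents as either senders or receivers, giving a factor of $O(m^{n})$-type branching, which after a little care collapses to the stated $\binom{m}{\lfloor n/2\rfloor}^{n}$ since each agent independently ``points to'' one of the at most $\binom{m}{\lfloor n/2\rfloor}$ configurations. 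For each candidate sharing $\sharing$ so enumerated, I would (a) verify in $O(\resourcesCount \agentsCount)$ time that it is a valid simple $2$-sharing (each shared resource lies in the owner's bundle, each agent participates in at most one sharing, each resource shared by at most two agents), and (b) compute $\Env(\sharingAllocation^\sharing)$ by checking, for each of the $O(\agentsCount^2)$ arcs of $\attentionGraph$, whether $\utilityFunction_i(\sharingAllocation^\sharing(\agent_i)) < \utilityFunction_i(\sharingAllocation^\sharing(\agent_j))$, using additive utilities. Accept iff some enumerated $\sharing$ yields $|\Env(\sharingAllocation^\sharing)| \le k$.

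The running-time accounting is then: $O\!\left(\binom{m}{\lfloor n/2\rfloor}^{n}\right)$ enumerated sharings, times $O(\resourcesCount \agentsCount)$ to build and validate each induced bundle assignment and evaluate welfare/envy (the $\agentsCount^2$ arc checks are dominated once $m \ge n$, or can be folded into the $\resourcesCount \agentsCount$ factor up to constants), giving the claimed $O\!\left(\binom{m}{\lfloor n/2\rfloor}^{n} \resourcesCount \agentsCount\right)$ bound. Correctness is immediate: every simple $2$-sharing is, by Definition~\ref{def:2-simple-sharing}, described by which resource (if any) each agent gives or receives, and the number of such descriptions is at most the product over agents of the number of choices, which is bounded by $\binom{m}{\lfloor n/2\rfloor}$ per agent because the set of resources that can appear in any simple $2$-sharing has size at most $\lfloor n/2\rfloor$.

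I do not expect a genuine obstacle here --- this is a routine brute-force observation --- but the one point requiring mild care is making the combinatorial bound match the stated $\binom{m}{\lfloor n/2\rfloor}^{n}$ rather than a looser $m^{n}$ or $2^{m}m^{n}$: the trick is to first commit to the (at most) $\lfloor n/2\rfloor$-element pool of resources actually used and then let each agent choose independently which pool-resource (or none) it is associated with, so that each agent contributes a factor bounded by the number of such pools times a constant, and the dominant term is exactly $\binom{m}{\lfloor n/2\rfloor}^{n}$. Everything else --- validity checking and envy counting under additive utilities --- is linear-time bookkeeping.
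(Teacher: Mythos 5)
Your proposal is correct and follows essentially the same route as the paper's own proof: brute-force enumeration of all simple $2$-sharings (bounded by committing to the at most $\lfloor n/2\rfloor$ resources actually shared and then their destinations, yielding the $\binom{m}{\lfloor n/2\rfloor}^{n}$ factor), with an $O(mn)$ validity-and-envy check per candidate. Both arguments use the same (deliberately loose) counting, so there is nothing substantive to distinguish them.
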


\begin{proof}
 We can simply enumerate all possible sharings and compute for each sharing the
 number of envious agents. A sharing contains at most~$\lfloor \frac{n}{2}
 \rfloor$ shared resources and each shared resource has at most~$n$
 destinations, thus there are at most~$\binom{m}{\lfloor \frac{n}{2} \rfloor}^n$
 different sharings. Computing the number of envious agents under a sharing can
 be done in~$O(mn)$ time.
\end{proof}

However, due to the factor~$m$ in the base, the running time of such an
algorithm skyrockets 
with large number of resources (even for small, fixed
values of~$n$).
We improve this by showing that
\envyReducingProblemShort{} is fixed-parameter tractable with respect to the
number of agents.

\begin{theorem} \label{thm:fpt-agents}
 \envyReducingProblemShort{} with $n$~agents and $m$~resources is solvable
 in~$O((2n)^{n}m^2)$ time.
\end{theorem}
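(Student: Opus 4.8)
The plan is to replace the brute-force enumeration over resources (which costs $m^n$ because each of the $\le \lfloor n/2\rfloor$ shared resources can go to one of $\le n$ destinations) by enumeration over the *combinatorial pattern* of the sharing, followed by a matching-type computation. Observe that a simple $2$-sharing is, as the paper already notes, a matching $M$ in the sharing graph $\shareGraph$ together with, for each edge $\{a_i,a_j\}\in M$, a choice of which of the two incident agents is the \emph{donor} and which resource it donates. The key structural point is: once we fix (i) the matching $M$ on the agent set — there are at most the number of matchings on $n$ vertices, which is $(n-1)!!\le n^{n/2}\le (2n)^n$, and in fact one can afford to iterate over all $O((2n)^n)$ such structures crudely — and (ii) for each matched pair, which endpoint is the donor, then the only remaining freedom is \emph{which resource} each donor hands over. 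Crucially, a donor $a_i$ only ever donates a resource from its \emph{own} initial bundle $\allocation(a_i)$, and the effect of that donation on the envy status of the whole instance depends on the donated resource only through the values $u_j(r)$ for the various agents $a_j$ — but the receiver is already fixed, so what matters for the receiver is $u_{\text{receiver}}(r)$, and for every other agent $a_j$ looking at the receiver what matters is $u_j(r)$.

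First I would fix a target pattern: a partial matching $M$ on $\agents$ and, for each $e=\{a_i,a_j\}\in M$, a designated donor $d(e)\in\{a_i,a_j\}$ and receiver. This determines, for every agent $a$, its final bundle \emph{up to} the identity of at most one donated resource entering $a$'s bundle. Then I would argue that choosing the best donated resource for each edge can be done \emph{independently} across edges and, within an edge, greedily — or, if independence across edges is too strong because the same agent $a_j$ may look at two different receivers, via a small bipartite-matching / flow computation. Concretely: for a fixed pattern, an agent is either (a) a donor, whose bundle never shrinks so it stays non-envious iff it was non-envious initially against each out-neighbour's \emph{final} bundle — and the out-neighbour's final bundle is known up to the donated resources; (b) a receiver, whose envy status against a fixed set of out-neighbours is a threshold condition on the single value $u_a(r_{\text{in}})$ of the resource it receives; or (c) untouched. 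I would phrase, for each edge $e$ and each candidate resource $r\in\allocation(d(e))$, a boolean vector recording which agents this choice ``keeps happy'', and then select one resource per edge maximizing the number of non-envious agents. Since distinct edges involve disjoint donor bundles, the resource choices at different edges are on disjoint ground sets, and the only coupling is through third-party agents who look at more than one receiver; I would handle that coupling by observing there are only $n$ agents total, so I can afford, for each pattern, an $O(n\cdot m)$ or at worst $O(n^2 m)$ sweep.

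The counting then gives the claimed bound: there are $O((2n)^n)$ patterns (matchings with donor-orientations), and for each pattern the optimal resource assignment and the resulting envy count is computed in $\mathrm{poly}(n,m)$ time; pushing the polynomial factor down to $m^2$ requires noting that for each edge we need only scan the donor's bundle once (an $O(m)$ sweep, $O(m)$ edges at most in the whole matching, but at most $n/2$ edges so really $O(nm)$, dominated by $m^2$ when $m\ge n$), and computing all agents' envy statuses once per pattern in $O(nm)$ time, all of which is $O(m^2)$ after absorbing $n$ into the $(2n)^n$ factor. The main obstacle I expect is the coupling mentioned above: a single agent $a_j$ may be an out-neighbour of several distinct receivers simultaneously, so its non-envy is a conjunction of conditions $u_j(\allocation(a_j)\cup\{\text{whatever }a_j\text{ receives, if anything}\}) \ge u_j(\text{receiver}_t\text{'s final bundle})$ for several $t$, and each receiver's final bundle depends on a resource we are still choosing. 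The clean way around this is to first fix, within the pattern, the donated resource \emph{identities} too — but that reintroduces the $m^{n/2}$ blow-up. So instead the argument must be that once donor/receiver roles are fixed, the value $u_j$ of \emph{a receiver's} final bundle is monotone in the donated resource in a way that lets us reduce ``make $a_j$ non-envious against all its receiver-neighbours'' to a single worst-case threshold, and then the per-edge choices decouple. Getting this monotonicity/decoupling claim exactly right, and confirming that the untouched-agent and donor-agent cases impose only pattern-dependent (resource-independent) constraints, is the delicate part; everything else is bookkeeping and the matching count.
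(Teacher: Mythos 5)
Your outer loop matches the paper's: enumerate the combinatorial pattern of the sharing (a matching on the agents together with a donor/receiver orientation of each edge), of which there are $O((2n)^n)$, and then resolve the resource identities per pattern in polynomial time. The gap is exactly where you suspected it is, and it is not closable by the monotonicity/decoupling claim you sketch. The per-pattern problem you pose --- ``select one resource per edge maximizing the number of non-envious agents'' --- is a genuine joint optimization: whether an agent $a_j$ envies a receiver $a_i$ is the condition $u_j(\pi(a_j)) + u_j(s_j) \ge u_j(\pi(a_i)) + u_j(r_i)$, which couples the resource $s_j$ entering $a_j$'s bundle with the resource $r_i$ entering $a_i$'s bundle; $a_j$ may watch several receivers and each receiver may be watched by several agents. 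There is no single order on a donor's bundle in which ``the receiver's final bundle is monotone'': the receiver ranks the donor's resources by its own utility function, while each onlooker $a_j$ ranks them by $u_j$, and these orders can disagree arbitrarily, so no per-edge ``best'' resource or single worst-case threshold exists. Moreover, even for onlookers whose conditions do decouple edge by edge (agents that receive nothing), maximizing the \emph{count} of non-envious agents still requires deciding which onlookers to sacrifice, which is itself a combinatorial choice over subsets of agents.

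The paper resolves both issues with two ideas your proposal is missing. First, it additionally guesses the set $C$ of agents required to end up non-envious (a $2^n$ factor absorbed into $(2n)^n$), which converts the per-pattern maximization into a feasibility check, and it restricts attention to configurations whose receivers lie in $C$ (sharings into non-target agents can only hurt target agents and may be discarded). Second, it solves the feasibility check by an arc-consistency-style fixpoint computation: maintain for each target receiver $a_i$ a set $P_i$ of candidate resources, first delete those too weak to stop $a_i$ from envying agents outside $C$, then repeatedly delete any $r \in P_i$ such that some target watcher $a_j$ would envy $a_i$ holding $r$ \emph{even if $a_j$ received its own best remaining candidate}. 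At the fixpoint, if all $P_i$ are nonempty, letting every target agent take the resource maximizing its own bundle value yields a feasible realization, because every surviving resource is dominated by every watcher's maximum. That fixpoint argument supplies the decoupling you were hoping for, but it only becomes available once the optimization has been turned into a feasibility question by guessing $C$; without that guess, your inner step has no polynomial-time solution in sight.
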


The high-level idea behind Theorem~\ref{thm:fpt-agents} is as follows. In order to find a
desired sharing, our algorithm guesses \emph{target agents}---a set of at least
$n-k$~agents that do not envy in the desired sharing---and a \emph{sharing
configuration}---a set of ordered pairs of agents that share some resource in
the desired sharing. Then, for such a guessed pair, the algorithm 
tests whether the desired sharing indeed exists. 
If it is true for at least one guessed pair,
then the algorithm returns~``yes''; otherwise, it returns~``no.''
The main difficulty in checking the existence of the desired sharing is 
that we need to maintain the envy-freeness within target agents while 
increasing their utilities.

Before stating
the algorithm more formally, we give some notation and definitions. Let~$C$ be a
fixed set of target agents.

\begin{definition}
 A \emph{sharing configuration}~$M$ for a set~$C$ of target agents is a set of
 arcs such that
\begin{enumerate}
\item $M$ is a set of vertex-disjoint arcs and
\item if $(i,j) \in M$, then $\{i,j\} \in \shareRelations$ and $j \in C$.
\end{enumerate}
A simple $2$-sharing~$\delta$ is called a \emph{realization} of~$M$ if~$\delta$ only specifies 
the shared resource for each arc in~$M$; formally, for each~$(i,j) \in M$, we have that
$\big(\delta(\{i,j\}) \neq \emptyset\big) \land \big( \delta(\{i,j\}) \in \pi(i) \big)$,
and for each~$\{i,j\}$ with $\delta(\{i,j\}) \neq \emptyset$, we have that 
$\big((i,j) \in M \land
\delta(\{i,j\}) \in \pi(i)\big) \lor \big((j,i) \in M \land \delta(\{i,j\}) \in
\pi(j)\big)$.
A realization~$\delta$ is \emph{feasible} if
~$C \cap \Env(\Pi^\delta) = \emptyset$.
 i.e., no agent in $C$ will be envious under $\delta$.
\end{definition}
Note that a sharing configuration does not only describe shares of a proper
simple~$2$-sharing but also ensures that the shared resources are indeed shared ``to''
the target agents; we justify this restriction later in Lemma \ref{lem:configuration-is-enough}.

Let us fix a sharing configuration~$M$ 
for~$C$. For each target agent~$a_i$, we
define a set~$P^0_i$ of \emph{initially possible resources} that~$a_i$ might get
in some realization of~$M$. For convenience, we augment each~$P^0_i$ with a dummy
resource~$d_i$ that has utility zero for every agent. Formally, we have
\begin{equation}
\label{eq:S_i}
  P^0_i :=
    \begin{cases*}
      \pi(j) \cup \{d_i\} & if $\exists j$ such that $(j,i) \in M$,\\
      \{d_i\}        & otherwise.
    \end{cases*}
 \end{equation}

 For each target agent~$a_i \in C$, we define a \emph{utility
 threshold}~$t_i$---the smallest utility agent~$a_i$ must have such that~$a_i$ will not envy agents
 \emph{outside}~$C$.
Formally, if there is at least one agent $a_j \not \in C$ such that $(a_i, a_j) \in  \attentionGraph$, then
\begin{equation}
\label{eq:t_i}
t_i \coloneqq \max_{a_j \not \in C,(a_i, a_j) \in  \attentionGraph} u_i(\pi(j)),
\end{equation}
otherwise, $t_i \coloneqq 0$.
 If some target agent cannot achieve its utility threshold by obtaining at most
 one of its initially possible resources, then there is no realization of~$M$
 such that the agent does not envy. We express it more formally
 in~Observation~\ref{obs:threshold-correct}.

 \begin{observation}
 \label{obs:threshold-correct}
 There is no feasible realization of~$M$ in which some agent~$a_i \in C$ gets a
 resource~$r \in P^0_i$ such that $u_i(\pi(i) \cup \{r\}) < t_i$.
 \end{observation}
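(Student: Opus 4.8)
The plan is to establish the contrapositive: in \emph{every} feasible realization~$\delta$ of~$M$, and for every target agent~$a_i \in C$, the single resource~$r \in P^0_i$ that~$a_i$ ends up holding satisfies $u_i(\pi(i) \cup \{r\}) \ge t_i$. Step one is to read off~$a_i$'s bundle under~$\delta$. Since~$\delta$ realizes~$M$, the only incident edge of~$a_i$ on which~$\delta$ is non-empty is an edge of~$M$ (viewed as an arc), and since~$\delta$ is a simple (i.e.\ $1$-bounded) $2$-sharing while~$M$ is vertex-disjoint, there is at most one such edge. As every arc of~$M$ has its head in~$C$ and its shared resource is drawn from the tail's bundle, two cases remain: either~$a_i$ is the head of some arc $(j,i) \in M$, so that $\sharingAllocation^\delta(a_i) = \pi(i) \cup \{r\}$ with $r = \delta(\{i,j\}) \in \pi(j) \subseteq P^0_i$; or~$a_i$ is the head of no arc of~$M$ (it may still be a tail, but then it only passes a resource on and keeps it in its own bundle), so that $\sharingAllocation^\delta(a_i) = \pi(i)$, matching the convention $r = d_i \in P^0_i$ since $u_i(d_i) = 0$. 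Either way $u_i(\sharingAllocation^\delta(a_i)) = u_i(\pi(i) \cup \{r\})$ for the resource~$r$ named in the statement.

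Step two invokes feasibility. If $t_i = 0$ the claim is immediate from non-negativity of utilities, so assume $t_i > 0$ and fix an agent $a_j \notin C$ with $(a_i, a_j) \in \attentionRelations$ attaining the maximum defining~$t_i$, i.e.\ $u_i(\pi(j)) = t_i$. Because every arc of~$M$ points into~$C$, the agent~$a_j$ is never a head, so it receives nothing and $\sharingAllocation^\delta(a_j) = \pi(j)$ (at most it passes a resource on, which does not leave its bundle). Feasibility of~$\delta$ means $a_i \notin \Env(\sharingAllocation^\delta)$, hence~$a_i$ does not envy~$a_j$ and $u_i(\sharingAllocation^\delta(a_i)) \ge u_i(\sharingAllocation^\delta(a_j)) = u_i(\pi(j)) = t_i$. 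Together with step one this gives $u_i(\pi(i) \cup \{r\}) \ge t_i$, contradicting $u_i(\pi(i) \cup \{r\}) < t_i$.

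The reasoning is essentially bookkeeping, so I do not expect a genuine obstacle; the one point needing care is making precise what ``$a_i$ gets~$r$'' means, namely that~$r$ is the unique resource~$a_i$ receives under~$\delta$, with $r = d_i$ standing for ``nothing received.'' This rests on two structural facts I would state explicitly: that~$\delta$ is $1$-bounded and~$M$ is vertex-disjoint, so~$a_i$ is involved in at most one share; and that in our model handing a resource to a neighbor never shrinks the giver's bundle, so a bundle can only grow by being the head of an arc of~$M$. Once these are in hand, the inequalities of step one and step two chain together immediately.
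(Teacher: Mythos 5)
Your proof is correct and follows exactly the justification the paper intends (and states only informally in the sentence preceding the observation): since every arc of~$M$ points into~$C$, agents outside~$C$ keep their initial bundles, so a target agent whose final utility falls below~$t_i$ necessarily envies the out-neighbor outside~$C$ attaining the maximum in the definition of~$t_i$. The extra bookkeeping you do to pin down what ``$a_i$ gets~$r$'' means is a reasonable elaboration of what the paper leaves implicit.
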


 For each target agent~$a_i \in C$, we define a set of~\emph{forbidden resources}. 

 \begin{definition} \label{def:blocking_resource}
  Let~$C = \{a_1, a_2, \ldots, a_q\}$ and let $\mathcal{P}=\{P_1, P_2,
  \ldots, P_q\}$ be a family of sets of possible resources 
     for the target agents.
  Then resource~$r \in P_i$ is a~\emph{forbidden resource} for some target
  agent~$a_i$ if there is some target agent~$a_j$ with~$(a_j, a_i) \in
  \attentionGraph$ such that $$\max\{u_j(\pi(j) \cup \{r'\}) \mid r' \in P_j\}
  < u_j(\pi(i) \cup \{r\}),$$
  that is, if agent $a_i$ gets resource $r$, then agent $a_j$ will envy $a_i$ 
  even if $a_j$ gets its most valuable resource from $P_j$.
   We denote the set of all forbidden resources
  for~$a_i$ as~$F_i(\mathcal{P})$.
 \end{definition}

 Observe that in every feasible realization no target agent gets one of its
 forbidden resources since otherwise there is another target agent that
 envies.
 \begin{observation}\label{obs:no_forbidden}
  Let $\mathcal{P}$ be a family of possible resources for the target agents. In
  every feasible realization no target agent~$a_i$ gets a resource
  from~$F_i(\mathcal{P})$.
 \end{observation}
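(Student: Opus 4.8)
The plan is a direct proof by contradiction, reading off straight from Definition~\ref{def:blocking_resource}. Suppose some feasible realization~$\delta$ of the fixed sharing configuration~$M$ assigns to a target agent~$a_i \in C$ a forbidden resource~$r \in F_i(\mathcal{P})$. By the definition of a forbidden resource there is a target agent~$a_j$ with~$(a_j, a_i) \in \attentionGraph$ such that $\max\{u_j(\pi(j) \cup \{r'\}) \mid r' \in P_j\} < u_j(\pi(i) \cup \{r\})$. The goal is to show that under~$\delta$ agent~$a_j$ envies~$a_i$, contradicting feasibility of~$\delta$, which by definition means $C \cap \Env(\Pi^\delta) = \emptyset$.

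The only thing that needs care is pinning down the shape of the two relevant bundles. Since the arcs of~$M$ are vertex-disjoint and~$\delta$ is a realization of~$M$, every agent participates in at most one share: the head of an arc receives exactly one resource, while a tail keeps its initial bundle (no agent ever loses a resource). Hence $\Pi^\delta(a_i) = \pi(i) \cup \{r\}$ up to the utility-$0$ dummy, so $u_j(\Pi^\delta(a_i)) \ge u_j(\pi(i) \cup \{r\})$. On the other side, $\Pi^\delta(a_j)$ is $\pi(j)$ together with at most one received resource, and that resource lies in~$P_j$ (if $a_j$ receives nothing we take the dummy $d_j \in P_j$); therefore $u_j(\Pi^\delta(a_j)) \le \max\{u_j(\pi(j) \cup \{r'\}) \mid r' \in P_j\}$. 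Chaining these two bounds with the forbidden-resource inequality yields $u_j(\Pi^\delta(a_j)) < u_j(\Pi^\delta(a_i))$, and since $(a_j, a_i) \in \attentionGraph$ this places $a_j$ in $\Env(\Pi^\delta) \cap C$, the desired contradiction.

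There is no genuine obstacle here: the statement is essentially a reformulation of Definition~\ref{def:blocking_resource}. The single point requiring a line of justification is the bookkeeping claim that in any realization of~$M$ the resource a target agent~$a_j$ may receive is contained in~$P_j$; this is exactly the invariant built into~$P^0_j$ via~\eqref{eq:S_i} (together with $P_j \subseteq P^0_j$) and the fact that in a vertex-disjoint sharing configuration a target agent is a head of at most one arc and never simultaneously a head and a tail.
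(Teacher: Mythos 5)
Your proof is correct and is exactly the argument the paper intends: the paper states this as an observation with only the one-line justification that otherwise another target agent would envy, and your contradiction via Definition~\ref{def:blocking_resource} fleshes out precisely that. The one bookkeeping point you isolate---that in a feasible realization the resource $a_j$ receives lies in $P_j$---is indeed the invariant the algorithm maintains, though it follows from Observations~\ref{obs:threshold-correct} and~\ref{obs:no_forbidden} applied inductively as resources are deleted (not merely from $P_j \subseteq P^0_j$); this matches the level of detail at which the paper itself treats the statement.
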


 Based on the above observations, Algorithm~\ref{alg:fpt-agents} tests
 whether for a pair of a set~$C$ of target agents and a sharing
 configuration~$M$ there is a feasible realization. The algorithm keeps track of
 the possible resources~$P_i$ for each target agent~$a_i$. Starting with
 each~$P_i$ equal to the corresponding set of initially possible resources, it
 utilizes~Observation~\ref{obs:threshold-correct} and removes the ``low-utility''
 resources.
 Then, utilizing~Observation~\ref{obs:no_forbidden}, the algorithm finds all forbidden resources for a particular collection of
 the possible resources for the target agents and eliminates the forbidden
 resources. This procedure is repeated exhaustively. Finally, if at least one of
 the possible resource sets is empty, the algorithm outputs ``no.'' Otherwise,
 the algorithm returns ``yes'' since at least one resource remained in the set
 of possible resources for every target agent.

\begin{algorithm}[t]
 \caption{Testing existence of a feasible realization of
  sharing configuration~$M$ for set~$C$~of target agents.
  \label{alg:fpt-agents}}
\textsl{DoesFeasibleRealizationExist}$(\attentionGraph,\pi,\{u_i\}_{i \in C},C,M)$\\
\For{each agent~$a_i \in C$}{
 $P_i \leftarrow P^0_i \setminus \{r \in P_i \mid
 u_i(\pi(i) \cup \{r\})< t_i\}$\;
}
\Repeat{$B = \emptyset$}{
 $B \leftarrow \bigcup_{a_i \in C} F_i(\{P_1, P_2,
 \ldots, P_{|C|}\})$\;
 $P_i \leftarrow P_i \setminus B$\;
}
 \lIf{$\exists i$ with $P_i=\emptyset$}{\Return{``no''} \textbf{else return} ``yes''}
\end{algorithm}

After applying~Observation~\ref{obs:threshold-correct} and \ref{obs:no_forbidden}
 to~Algorithm~\ref{alg:fpt-agents} and proving its
correctness (Lemma~\ref{lem:blocking-correct} and Lemma~\ref{lem:configuration-is-enough}), we finish the proof
of~Theorem~\ref{thm:fpt-agents}.

\begin{lemma}%
\label{lem:blocking-correct}
 In~Algorithm~\ref{alg:fpt-agents}, if some~$P_i$ is empty after the repeat-loop, then
 there is no feasible realization for~$M$ for~$C$.
\end{lemma}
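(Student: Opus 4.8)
The plan is to establish the following loop invariant for Algorithm~\ref{alg:fpt-agents}: for every feasible realization~$\delta$ of~$M$ for~$C$ and every target agent~$a_i \in C$, the resource that~$a_i$ actually receives in~$\delta$ --- or a dummy resource~$d_i$ of utility zero, if~$a_i$ receives nothing --- stays in~$P_i$ throughout the entire execution. Granting this, the lemma follows by contraposition: if a feasible realization existed, then no~$P_i$ could ever become empty, since it would always retain at least that received resource; hence if some~$P_i$ is empty at the end, no feasible realization exists.

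First I would fix a feasible realization~$\delta$ and pin down the bundle each target agent ends up with. Since~$M$ is a set of vertex-disjoint arcs whose heads all lie in~$C$, each~$a_i \in C$ is the head of at most one arc, and no agent outside~$C$ is ever a head. If~$(j,i) \in M$, set~$\rho(i) := \delta(\{i,j\})$; by the definition of a realization this lies in~$\pi(j) \subseteq P^0_i$. Otherwise set~$\rho(i) := d_i \in P^0_i$. Because~$\delta$ realizes~$M$ (so it introduces no shares not described by~$M$) and because of vertex-disjointness, one checks that~$\Pi^\delta(a_i) = \pi(i) \cup \{\rho(i)\}$, and moreover~$\Pi^\delta(a_j) = \pi(j)$ for every~$a_j \notin C$ (such an agent may give a resource but keeps it). In particular~$u_i(\Pi^\delta(a_i)) = u_i(\pi(i) \cup \{\rho(i)\})$.

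The invariant is then proved by induction along the run. After the initial \textbf{for}-loop: feasibility says~$a_i$ envies nobody it looks at, in particular no~$a_j \notin C$, so~$u_i(\pi(i) \cup \{\rho(i)\}) \ge u_i(\Pi^\delta(a_j)) = u_i(\pi(j))$ for all such~$a_j$; maximizing over them (or using~$t_i = 0$ when there are none) yields~$u_i(\pi(i) \cup \{\rho(i)\}) \ge t_i$, so~$\rho(i)$ is not filtered out --- this is precisely Observation~\ref{obs:threshold-correct}. For an iteration of the \textbf{repeat}-loop, suppose for contradiction that~$\rho(i)$ gets removed, i.e., $\rho(i) \in F_i(\{P_1,\dots,P_{|C|}\})$ for the current sets. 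By Definition~\ref{def:blocking_resource} there is~$a_j \in C$ with~$(a_j,a_i) \in \attentionGraph$ and~$\max\{u_j(\pi(j)\cup\{r'\}) : r' \in P_j\} < u_j(\pi(i)\cup\{\rho(i)\})$. But by the induction hypothesis~$\rho(j) \in P_j$, so~$u_j(\Pi^\delta(a_j)) = u_j(\pi(j)\cup\{\rho(j)\}) < u_j(\pi(i)\cup\{\rho(i)\}) = u_j(\Pi^\delta(a_i))$, meaning~$a_j$ envies~$a_i$ under~$\delta$ --- contradicting feasibility of~$\delta$, since~$a_j \in C$. Hence~$\rho(i)$ survives every step, which proves the invariant and the lemma.

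I expect the only delicate point --- the main obstacle --- to be the bookkeeping in the second paragraph: verifying that every realization of~$M$ gives each target agent exactly the bundle~$\pi(i)\cup\{\rho(i)\}$ and that non-target agents keep exactly their initial bundles. This hinges on three facts that must be combined carefully: $M$ is vertex-disjoint (so an agent is never simultaneously a giver and a receiver), a realization of~$M$ has no shares outside~$M$, and every arc of~$M$ has its head in~$C$. Once these are in place, the rest is a straightforward unwinding of the definitions of the utility threshold~$t_i$ and of forbidden resources.
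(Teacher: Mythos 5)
Your proof is correct and follows essentially the same approach as the paper's: both argue by contraposition that for a feasible realization the resource actually received by each target agent can never be removed from its set of possible resources. Your version is in fact more careful than the paper's, since you make explicit the loop invariant (that $\rho(j)$ remains in $P_j$ at every iteration) which the paper's one-shot "not a forbidden resource" argument implicitly relies on but does not state.
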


\begin{proof}
Suppose towards a contradiction that $P_i$ is empty after the repeat-loop and there is a feasible realization~$\delta$ for~$M$ such that~$C \cap \Env(\Pi^{\delta})=\emptyset$.
Let~$t^* \in \Pi^{\delta}(i) \setminus \pi(i)$ be the resource shared to agent~$i$.
For any agent~$j \in C$ with~$(j,i) \in \attentionRelations$, let~$t_j\in \Pi^{\delta}(j) \setminus \pi(j)$ be the resource shared to agent~$j$.
We remark that here resources~$t^*$ and~$t_j$ could be dummy resources.
Since~$j \in C$ and~$C \cap \Env(\Pi^{\delta})=\emptyset$, we have~$j \not \in \Env(\Pi^{\delta})$, and hence~$j$ does not envy $i$ after~$\delta$.
Thus~$u_j(\pi(j))+u_j(t_j) \ge u_j(\pi(i))+u_j(t^*)$.
Since this holds for any agent~$j \in C$ with~$(j,i) \in \attentionRelations$, resource~$t^* \in S_i$ is not a forbidden resource, which contradicts
that~$S_i$ becomes empty after deleting blocking resources.
\end{proof}

\begin{lemma}%
\label{lem:configuration-is-enough}
If there is a simple $2$-sharing~$\sigma$ such that
~$C \cap \Env(\Pi^{\sigma})=\emptyset$,
then there is a sharing configuration~$M$ for~$C$ that has a feasible
realization and~Algorithm~\ref{alg:fpt-agents} outputs~``yes''; otherwise, the algorithm
outputs~``no''.
\end{lemma}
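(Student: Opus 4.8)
The plan is to show that Algorithm~\ref{alg:fpt-agents} is correct, i.e., that on input $(C,M)$ it answers ``yes'' if and only if $M$ has a feasible realization. Together with Lemma~\ref{lem:blocking-correct} and the observation that \emph{any} feasible realization of \emph{any} sharing configuration is itself a simple $2$-sharing $\delta$ with $C\cap\Env(\Pi^\delta)=\emptyset$, this yields both directions of the statement. Since Lemma~\ref{lem:blocking-correct} already provides ``$M$ has a feasible realization $\Rightarrow$ no $P_i$ empties $\Rightarrow$ the algorithm answers yes'', two tasks remain: (i) given a simple $2$-sharing $\sigma$ with $C\cap\Env(\Pi^\sigma)=\emptyset$, exhibit a sharing configuration $M$ for $C$ with a feasible realization (which, with Lemma~\ref{lem:blocking-correct}, settles the first implication); and (ii) prove ``the algorithm answers yes on $(C,M)$ $\Rightarrow$ $M$ has a feasible realization'', which settles the ``otherwise'' part by contraposition.

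For task~(i) I would read off $M$ from $\sigma$: for every sharing $\{a_i,a_j\}$ of $\sigma$ with $\sigma(\{a_i,a_j\})=r\in\pi(a_i)$, put the arc $(a_i,a_j)$ into $M$ whenever $a_j\in C$, and simply discard the sharing when $a_j\notin C$. The resulting $M$ is a valid sharing configuration: its arcs are vertex-disjoint because $\sigma$ is a simple $2$-sharing, every arc points into $C$, and its endpoints are adjacent in $\shareGraph$. The restriction $\delta$ of $\sigma$ to the edges underlying $M$ is then a realization of $M$, and it is feasible because discarding the recipient-outside-$C$ sharings and trimming $\sigma$ to $\delta$ leaves every bundle of an agent in $C$ unchanged ($\Pi^\delta(a)=\Pi^\sigma(a)$ for $a\in C$; recall sharing out one's own resource never enlarges a bundle) while it only shrinks bundles of agents outside $C$ ($\Pi^\delta(b)\subseteq\Pi^\sigma(b)$ for $b\notin C$). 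Hence for every $a\in C$ and every out-neighbor $c$ of $a$ in $\attentionGraph$, $u_a(\Pi^\delta(a))=u_a(\Pi^\sigma(a))\ge u_a(\Pi^\sigma(c))\ge u_a(\Pi^\delta(c))$, so $C\cap\Env(\Pi^\delta)=\emptyset$.

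For task~(ii), suppose every $P_i$ (for $a_i\in C$) is nonempty when the repeat-loop terminates. For each $a_i\in C$ pick $r_i\in P_i$ with $u_i(\pi(i)\cup\{r_i\})$ maximum, and define $\delta$ by letting each $a_i$ receive $r_i$ along its (unique) incoming arc of $M$, where an agent whose maximum is attained only at its dummy $d_i$ is left without any received resource. Vertex-disjointness of $M$ and $P^0_i\subseteq\pi(a_j)\cup\{d_i\}$ for the arc $(a_j,a_i)$ make $\delta$ a genuine simple $2$-sharing and a realization of $M$. Feasibility then follows from the two pruning invariants (the safety of which is Observations~\ref{obs:threshold-correct} and~\ref{obs:no_forbidden}): since $r_i$ survived the threshold step, $u_i(\Pi^\delta(a_i))=u_i(\pi(i)\cup\{r_i\})\ge t_i\ge u_i(\pi(b))=u_i(\Pi^\delta(b))$ for every $b\notin C$ that $a_i$ looks at (agents outside $C$ receive nothing, so $\Pi^\delta(b)=\pi(b)$); and since $r_j$ survived the forbidden-resource step, for every $a_i\in C$ looking at $a_j$ we get, taking $a_i$ as the witnessing in-neighbor in Definition~\ref{def:blocking_resource}, $u_i(\Pi^\delta(a_i))=\max_{r'\in P_i}u_i(\pi(i)\cup\{r'\})\ge u_i(\pi(j)\cup\{r_j\})=u_i(\Pi^\delta(a_j))$. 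Thus no agent in $C$ envies anyone, i.e., $\delta$ is feasible.

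I expect the main obstacle to be in task~(ii): the careful treatment of the dummy resources. One must argue that an agent whose only surviving possibility is its dummy can be left without an incoming share without re-introducing envy — this is exactly why $t_i$ is defined relative to agents \emph{outside} $C$ and why forbidden sets are computed with the dummies present — and one must check that the $\delta$ so defined still satisfies the formal requirements of a realization of $M$ (every arc is ``used'' up to this dummy convention, and the shared resources are genuinely distinct and owned by the arc sources). The analogous point in task~(i), that deleting the recipient-outside-$C$ sharings cannot increase envy, is comparatively immediate, since it never changes a bundle of an agent in $C$ and only removes resources from agents outside $C$.
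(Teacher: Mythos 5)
Your proposal is correct and follows essentially the same two-step route as the paper: first restrict $\sigma$ to the sharings whose recipient lies in $C$ to obtain a configuration $M$ with a feasible realization (bundles of agents in $C$ are unchanged, bundles outside $C$ only shrink), and then argue that a non-empty family $\{P_i\}$ after the pruning loop yields a feasible realization, with Lemma~\ref{lem:blocking-correct} supplying the converse. The one place you deviate --- selecting for each $a_i\in C$ a resource maximizing $u_i(\pi(i)\cup\{r_i\})$ rather than an arbitrary element of $P_i$ --- is in fact a welcome sharpening, since Definition~\ref{def:blocking_resource} only guarantees non-envy against the \emph{best} surviving choice in $P_j$, so the maximizing choice is exactly what makes the ``no envy within $C$'' step go through cleanly.
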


\begin{proof}
We first show that if there is a simple $2$-sharing~$\sigma$ such that
~$C \cap \Env(\Pi^{\sigma})=\emptyset$,
then there is a sharing configuration~$M$ for~$C$ that has a feasible realization~$\delta$.
Recall that all sharing configurations have the restriction that the shared resources are indeed shared ``to'' the target agents.
We now justify this restriction.
Let~$E=\{\{i,j\} \in \shareRelations \mid \sigma(\{i,j\}) \neq \emptyset\}$ be the set of edges where there is a sharing in~$\sigma$.
Let $E_1 \subseteq E$ be the set of edges that could be used in a sharing configuration for~$C$, i.e., 
\begin{align*}
E_1=& \{\{i,j\} \in E \mid 
\big(  
\sigma(\{i,j\}) \in \pi(i) \land
j \in C
\big) \\
&\lor
\big( 
\sigma(\{i,j\}) \in \pi(j)  \land i \in C 
\big)\}.
\end{align*}
We construct a new simple $2$-sharing $\delta$ by restricting $\delta$ on $E_1$:
\begin{equation*}
    \delta(\{i,j\}) =
    \begin{cases*}
    \sigma(\{i,j\})        & if $\{i,j\} \in E_1$,\\
      \emptyset & otherwise.
      
    \end{cases*}
 \end{equation*}
Then we define the sharing configuration as
$$M=\{(i,j) \mid \delta(\{i,j\}) \neq \emptyset \land \delta(\{i,j\}) \in \pi(i)\}.$$
Now it is clear that~$M$ is a sharing configuration for~$C$ and~$\delta$ is a realization of~$M$.
Moreover, since all sharings through edges in~$E\setminus E_1$ only increase the bundles of agents in~$\agents \setminus C$, we have that
\begin{equation*}
    \begin{cases*}
      u_i(\Pi^{\delta})=u_i(\Pi^{\sigma}) & $\forall i \in C$,\\
      u_i(\Pi^{\delta}) \le u_i(\Pi^{\sigma})        & otherwise.
    \end{cases*}
 \end{equation*}
Since~$C \cap \Env(\Pi^{\sigma})=\emptyset$, we have that~$C \cap \Env(\Pi^{\delta})=\emptyset$.
Therefore, $\delta$ is a feasible realization.

Next, we show that there is a sharing configuration~$M$ for~$C$ that has a feasible realization if and only if Algorithm~\ref{alg:fpt-agents} outputs~``yes''.
According to Lemma \ref{lem:blocking-correct}, if the algorithm output ``no'', then there is no feasible realization for $M$ for $C$.
On the other hand, if the algorithm output ``yes'', then at the end of the algorithm we get $P_i \neq \emptyset$ for each agent $a_i \in C$.
We can build a simple $2$-sharing $\delta$ by choosing an arbitrary resource from $P_i$ for each agent $a_i \in C$.
It is clear that $\delta$ is a realization for $M$ for $C$.
Moreover, 
for any agent $a_i \in C$, $a_i$ does not envy agents outside $C$ (as resources that do not have high enough utility for $a_i$ such that $a_i$ does not envy agents outside $C$ are removed from $P_i$ and $P_i \neq \emptyset$ at the end) and $a_i$ does not envy agents in $C$ (as $B=\emptyset$).
Therefore, $\delta$ is a feasible realization for $M$ for $C$.
\end{proof}

Eventually, we are set to present the proof of Theorem~\ref{thm:fpt-agents}.

\begin{proof}[Proof of Theorem~\ref{thm:fpt-agents}]
 According to Lemma~\ref{lem:configuration-is-enough}, to solve an instance
 of~\envyReducingProblemShort{}, it is enough to test whether there is a pair of
 a target subset and a sharing configuration that has a feasible realization.
 Since, checking a feasible realization, due
 to~Lemma~\ref{lem:configuration-is-enough}, can be done by Algorithm~\ref{alg:fpt-agents}, we
 check all such possible pairs and return ``yes'' if there is (at least) one
 that has a feasible realization.

 There are $O(2^n)$~possible target sets and
 at most~$n^n$ possible sharing configurations per target set, which
 gives~$O((2n)^n)$ cases. For each case, we apply~Algorithm~\ref{alg:fpt-agents}. Therein, the for-loop takes $O(nm)$~time. Concerning the
 repeat-loop that runs at most $m$~times, computing the set~$B$ takes~$O(nm)$~time;
 thus, the repeat-loop takes
 $O(nm^2)$ time. Finally, Algorithm~\ref{alg:fpt-agents} runs in
 $O(nm^2)$~time and \envyReducingProblemShort{} can be solved in
 $O((2n)^{n}m^2)$~time.
\end{proof}

Next, we show that restricting the parameter~$k$ ``number of envious agents''
does not help to make \envyReducingProblemShort{} solvable in polynomial time.

\begin{theorem}
 \envyReducingProblemShort{} is~\nphard{} even if 
 the goal is to reduce the number of
 envious agents from one to zero.
 \label{thm:decreasing_by_one_nph}
\end{theorem}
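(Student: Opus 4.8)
The plan is to give a polynomial-time many-one reduction from an NP-hard problem to \envyReducingProblemShort{} that produces an instance with exactly one envious agent, where the question is whether we can drop that count to zero. Since Theorem~\ref{thm:hard_clique} already gives hardness for reducing the envy count by one in the clique setting, but there the starting number of envious agents is~$n$, the challenge here is to concentrate all the ``combinatorial difficulty'' onto a single envious agent. A natural source problem is \exactthreecover{} (or \threedimmatching{}): we have a ground set and a family of triples, and we ask whether some subfamily exactly partitions the ground set. The idea is to build a single \emph{critical} agent~$a^*$ whose current bundle is worth just below some threshold~$\tau$, who looks (in the attention graph) at one ``rich'' dummy agent holding a bundle worth exactly~$\tau$, and who can only reach~$\tau$ through a carefully orchestrated chain of sharings that forces an exact-cover-like structure. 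Everybody else must be arranged so that no sharing we are forced to perform ever creates new envy.

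First I would set up the agents: a critical agent~$a^*$, one agent~$g_x$ per ground-set element~$x$, one agent~$h_S$ per triple~$S$ in the family, plus a constant number of bookkeeping/dummy agents (including the target of~$a^*$'s envy). The initial allocation and utilities would be designed so that (i)~$a^*$ is the unique envious agent initially; (ii)~to de-envy~$a^*$, a resource must be shared into~$a^*$'s bundle, but since a simple $2$-sharing lets each agent participate in only one sharing, the resource~$a^*$ receives must ultimately have been ``freed up'' by a cascade that looks like selecting triples covering all elements; (iii)~crucially, every agent other than~$a^*$ either has value~$0$ for everything it might see (so it can never become envious) or is given a dominating own-bundle so that receiving/sending in a sharing keeps it non-envious. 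The numeric gadget encoding ``exact cover'' can be borrowed in spirit from the weighting trick in Theorem~\ref{thm:EWSA-b-hard} (large coefficients $B^2, B$ separating ``levels''), but re-engineered so that the feasibility constraint is carried by the single requirement $u_{a^*}(\sharingAllocation^\sharing(a^*)) \ge \tau$ rather than by a global egalitarian bound.

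The key steps, in order: (1)~fix the source instance and describe the construction (agents, resources, initial allocation~$\allocation$, sharing graph~\shareGraph{}, attention graph~\attentionGraph{}, utilities, and target~$k=0$); (2)~verify that in~$\allocation$ exactly one agent, namely~$a^*$, is envious; (3)~($\Rightarrow$) given a solution of the source instance, exhibit a simple $2$-sharing~$\sharing$, check it is a valid simple $2$-sharing (each agent in at most one sharing, each resource shared by at most two agents), and check that under~$\sharingAllocation^\sharing$ no agent is envious---this needs a case analysis over agent types showing $a^*$ reaches~$\tau$ and nobody else slips below an out-neighbor; (4)~($\Leftarrow$) given a simple $2$-sharing~$\sharing$ with $|\Env(\sharingAllocation^\sharing)|=0$, argue that the sharing must route a resource of the right ``value profile'' to~$a^*$, and that the simple-$2$-sharing cardinality constraints plus the non-envy constraints on the intermediate agents force the selected triples to form an exact cover.

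The main obstacle I expect is step~(4), the soundness direction: I must rule out ``cheating'' sharings that de-envy~$a^*$ without corresponding to a genuine exact cover---for instance, a sharing that borrows value from two different partial sources, or that makes~$a^*$ happy at the cost of creating envy elsewhere that happens to be masked. Controlling this requires the $1$-bounded constraint to be exploited tightly (so each element-agent and triple-agent can participate in exactly one sharing, forcing a clean matching structure), and requires the utility values to be chosen so that the \emph{only} way $u_{a^*}$ can cross~$\tau$ is by receiving one specific ``aggregated'' resource that exists only if the cover condition holds. A secondary technical point is ensuring the construction uses only a \emph{simple} $2$-sharing (not a $b$-bounded one with $b>1$), which may force the cascade to be encoded via several agents each doing one share, rather than one agent doing several; getting that chain to propagate value correctly while never creating envy is the delicate part of the design.
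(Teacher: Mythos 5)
Your plan has a genuine gap rooted in a misreading of what a sharing can do. In this model a sharing never relays, moves, or aggregates resources: a simple $2$-sharing is a matching on the sharing graph in which each matched pair is labeled with one resource from one of the two agents' \emph{initial} bundles, and both keep access to it. Consequently your critical agent~$a^*$ can gain at most one resource, that resource must already lie in the initial bundle of a direct sharing-graph neighbor of~$a^*$, and no ``chain of sharings'' can free up or route anything toward~$a^*$. The ``aggregated resource that exists only if the cover condition holds'' cannot be manufactured. Worse, your stipulation~(iii) --- that every agent other than~$a^*$ either values everything it sees at~$0$ or has a dominating own bundle, so that it can never become envious --- makes any instance of this shape polynomial-time solvable: one simply checks, for each neighbor~$b$ of~$a^*$ and each $r\in\allocation(b)$, whether $u_{a^*}(\allocation(a^*))+u_{a^*}(r)\ge\tau$. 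So the reduction as designed cannot establish \nphardness{}.

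The paper's proof exploits exactly the mechanism your point~(iii) forecloses. It reduces from \threeCNFSAT{} with a single initially envious ``follower'': de-envying the follower forces the leader to share its resource, which \emph{creates} envy at all variable-gadget value agents; resolving their envy forces each variable gadget into one of two states (a two-valued versus a three-valued resource, encoding true/false); and those states in turn determine whether clause-gadget recipients and root agents can be kept non-envious, which encodes satisfiability. The combinatorial structure is propagated through the attention graph by cascading \emph{new} envy, not through the sharing graph by cascading resources. If you want to salvage your \exactthreecover{}-based idea, you would have to redesign it so that fixing~$a^*$'s envy necessarily makes the element agents envious, and fixing those forces a selection of triple agents whose one-sharing budgets encode the exact-cover constraint --- at which point you are essentially rebuilding the paper's gadget architecture with a different source problem.
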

\begin{proof}
 \newcommand{\sFormula}{\ensuremath{\phi}}
 \newcommand{\sVariables}{\ensuremath{X}}
 \newcommand{\sVariable}{\ensuremath{x}}
 \newcommand{\sClause}{\ensuremath{C}}
 \newcommand{\sLiteral}{\ensuremath{\ell}}
 \newcommand{\snthLiteralOf}[2]{\ensuremath{\ell_{#2}^{#1}}}
 \newcommand{\sClausesCount}{\ensuremath{\bar{m}}}
 \newcommand{\sVariablesCount}{\ensuremath{\bar{n}}}
 \newcommand{\agentGen}{\ensuremath{a}}
 \newcommand{\leaderOf}[1]{\ensuremath{a_\text{L}({#1})}}
 \newcommand{\followerOf}[1]{\ensuremath{a_\text{F}({#1})}}
 \newcommand{\posAgentOf}[1]{\ensuremath{\agentGen({#1})}}
 \newcommand{\negAgentOf}[1]{\ensuremath{\bar{\agentGen}({#1})}}
 \newcommand{\nthDummyOf}[2]{\ensuremath{d({#2}, {#1})}}
 \newcommand{\donorOf}[1]{\ensuremath{\overrightarrow{d}({#1})}}
 \newcommand{\recipientOf}[1]{\ensuremath{\mathring{d}({#1})}}
 \newcommand{\envyThreshold}{\ensuremath{k}}
 \newcommand{\truthAssignment}{\ensuremath{\mathbf{x}}}

 We provide a polynomial-time many-one reduction from the NP-hard \threeCNFSAT{} problem that
 asks whether a given Boolean expression in conjunctive normal form with each
 clause of size at most three is satisfiable. From now on we stick to the
 following notation. Let~$\sFormula{} = \bigwedge_{i \in [\sClausesCount]}
 \sClause_i$ be a $3$-CNF formula over a
 set~\namedorderedsetof{\sVariables}{\sVariable}{\sVariablesCount} of variables
 where a clause~$\sClause_i$, $i \in [\sClausesCount]$, is of the form
 $(\snthLiteralOf{1}{i} \vee \snthLiteralOf{2}{i} \vee \snthLiteralOf{3}{i})$.
 We use a standard naming scheme and call each~$\snthLiteralOf{\cdot}{i}$ a
 literal.

 We first describe the~\emph{leader gadget}, the~\emph{clause
 gadget}, and the~\emph{variable gadget}.
 See~Figure~\ref{fig:cnf-gadgets} for an overview on the construction.
 Then, we show how to connect our gadgets to
 achieve a desired instance of~\envyReducingProblemShort{}. Eventually, we show the
 reduction's correctness preceded by a discussion on the structure of the built
 instance.
 Notably, in our construction we use a unanimous utility function,
 that is, each agent we introduce has the same utility function over all
 introduced resources.

 \begin{figure}[t]
  \subfloat[The leader gadget for formula~\sFormula{}.]{
   \hspace{2em}%
   \centering%
   \begin{tikzpicture}[
    every node/.style={inner sep=1},
    every edge/.style={draw=black},
    every label/.style={font=\scriptsize,draw=none,label distance=0}]
    \node[label={225:\((2)\)}] (j) at (0,2) {\leaderOf{\sFormula}};
    \node[label={135:\((0)\)}, draw] (i) at (0,0) {\followerOf{\sFormula}};
    \draw[->,arrows={-latex}]  (i) edge (j);
   \end{tikzpicture}%
   \hspace{2em}%
  }%
  \hfill%
  \subfloat[The variable gadget for variable~\sVariable{}.]{
   \centering%
   \begin{tikzpicture}[
    every node/.style={inner sep=1},
    every edge/.style={draw=black,font=\scriptsize},
    every label/.style={font=\scriptsize,draw=none,label distance=0}]
    \newcommand\xend{3}
	   \node[label={[label distance=-3pt]210:\((3)\)}] (dx) at (.5*\xend,2) {\nthDummyOf{1}{\sVariable}};
    \node[label={below:\((1,2)\)}] (vx) at (.5*\xend,1)
    {\nthDummyOf{2}{\sVariable}};
    \node[label={above:\((0)\)}] (xx) at (0,0) {\posAgentOf{\sVariable}};
    \node[label={above:\((0)\)}] (nx) at (\xend,0) {\negAgentOf{\sVariable}};
    
    \draw[->,arrows={-latex}]
      (vx) edge (xx)
      (dx) edge (xx)
      (vx) edge (nx)
      (dx) edge (nx);
   \end{tikzpicture}%
  }%
  \hfill%
  \subfloat[The clause gadget for clause~$\sClause_i = (\snthLiteralOf{1}{i}
   \vee \snthLiteralOf{2}{i} \vee \snthLiteralOf{3}{i})$.]{
   \centering%
   \begin{tikzpicture}[
    every node/.style={inner sep=1},
    every edge/.style={draw=black},
    every label/.style={font=\scriptsize,draw=none,label distance=0}]
    \newcommand{\xa}{2.5}
    \newcommand{\ytop}{.85}

    \node[label={[label distance=-2pt]105:\((1,1)\)}] (c1) at (-\xa,.75*\ytop)
	   {\recipientOf{\snthLiteralOf{1}{i}}};
    \node[label={[label distance=-2pt]105:\((1,1)\)}] (c2) at (0,.75*\ytop)
	   {\recipientOf{\snthLiteralOf{2}{i}}};
    \node[label={[label distance=-2pt]105:\((1,1)\)}] (c3) at (\xa,.75*\ytop)
	   {\recipientOf{\snthLiteralOf{3}{i}}};

    \node[label={[label distance=-7pt]20:\((1)\)}] (vc1) at (-\xa,2*\ytop) {\donorOf{\snthLiteralOf{1}{i}}};
    \node[label={[label distance=-7pt]20:\((1)\)}] (vc2) at (0,2*\ytop) {\donorOf{\snthLiteralOf{2}{i}}};
    \node[label={[label distance=-7pt]20:\((1)\)}] (vc3) at (\xa,2*\ytop) {\donorOf{\snthLiteralOf{3}{i}}};

    \node[label={182:\((2)\)}] (c) at (0,\ytop-1) {$\agentGen(\sClause_i)$};

    \draw[->,arrows={-latex}]
      (c1) edge (vc1)
      (c2) edge (vc2)
      (c3) edge (vc3)
      (c) edge (c1)
      (c) edge (c2)
      (c) edge (c3);
  \end{tikzpicture}%
 }
  \caption{The gadgets in the construction in the proof
   of~Theorem~\ref{thm:decreasing_by_one_nph}. The utility function for all agents is
   the same. The envious agents are framed. Numbers in brackets indicate an initial
   allocation (e.g.,\,$(0)$ denotes an empty bundle; $(1,2)$ means a bundle of
   a value-one and a value-two resource).}
  \label{fig:cnf-gadgets}
 \end{figure}
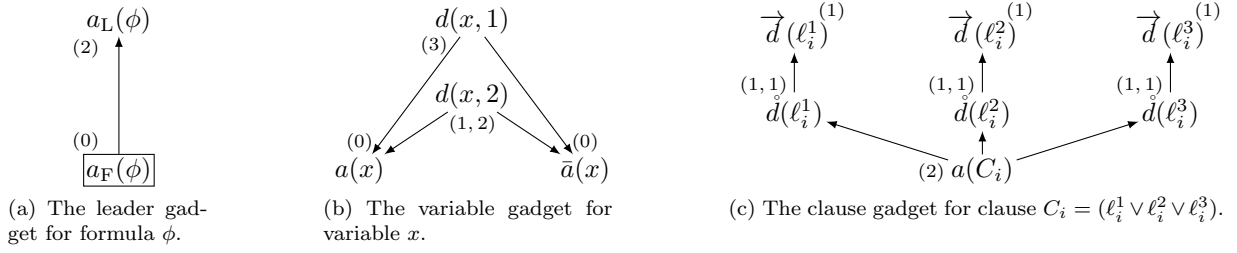

 \paragraph{Construction.}
 The leader gadget consists of two agents:
 the~\emph{leader}~\leaderOf{\sFormula} and
 the~\emph{follower}~\followerOf{\sFormula}. The follower is not assigned any
 resource, while the leader has a resource of value~two. There is a directed arc
 from the follower to the leader. Thus, initially, the follower envies the
 leader.

 The variable gadget for a variable~$\sVariable{} \in \sVariables$ consists of two
 \emph{value agents}~\posAgentOf{\sVariable} and~\negAgentOf{\sVariable}, 
 and two~\emph{dummy agents} \nthDummyOf{1}{\sVariable} and~\nthDummyOf{2}{\sVariable}. The two value agents represent, respectively, assigning ``true'' and ``false''
 to~\sVariable{}. 
 Both dummy agents are paying attention to both value agents (but not vice versa). Initially, three resources are allocated. The resource of value~three is allocated
 to~\nthDummyOf{1}{\sVariable} and two resources, one of value~one and one of
 value~two, are allocated to~\nthDummyOf{2}{\sVariable}. As a result, initially,
 no agent within the gadget envies.

 To describe the clause gadget, let us fix a clause~$\sClause_i =
 (\snthLiteralOf{1}{i} \wedge \snthLiteralOf{2}{i} \wedge
 \snthLiteralOf{3}{i})$. For each literal~\sLiteral{} in the clause we add two
 agents: the \emph{donor}~\donorOf{\sLiteral} and the
 \emph{recipient}~\recipientOf{\sLiteral}. The donor initially has a single
 one-valued resource, while the recipient initially gets two one-valued
 resources. As for the attention relation, for each literal~\sLiteral{}, there is an
 arc~$(\recipientOf{\sLiteral}, \donorOf{\sLiteral})$, that is, the arc points
 from the recipient to the donor. Eventually, the
 gadget contains the~\emph{root} agent~$\agentGen(\sClause_i)$ that gets a
 single resource of value~two. The root agent pays attention to
 recipients~\recipientOf{\snthLiteralOf{1}{i}},
 \recipientOf{\snthLiteralOf{2}{i}}, and \recipientOf{\snthLiteralOf{3}{i}}.
 Observe that, initially, no agent within this gadget envies.

 \begin{figure}[t]
   \centering%
   \begin{tikzpicture}[rotate=90,
    every node/.style={inner sep=1},
    every edge/.style={draw=black,font=\scriptsize},
    every label/.style={font=\scriptsize,draw=none,label distance=0}]
    \newcommand\xend{3}
    \node[label={above:\((3)\)}] (dx) at (.5*\xend,1.5) {\nthDummyOf{1}{\sVariable}};
    \node[label={93:\((1,2)\)}] (vx) at (.5*\xend,0.25)
    {\nthDummyOf{2}{\sVariable}};
    \node[label={left:\((0)\)}] (xx) at (0.2,0) {\posAgentOf{\sVariable}};
    \node[label={left:\((0)\)}] (nx) at (\xend,0) {\negAgentOf{\sVariable}};

    \node[label={135:\((2)\)}] (j) at (0.9*\xend,3) {\leaderOf{\sFormula}};
    \node[label={135:\((0)\)}, draw] (i) at (.5*\xend,3) {\followerOf{\sFormula}};
    \draw[->,arrows={-latex}]  (i) edge (j);
    
    \draw[->,arrows={-latex}]
      (xx) edge (i)
      (nx) edge (i)
      (vx) edge (xx)
      (dx) edge (xx)
      (vx) edge (nx)
      (dx) edge (nx);

    \newcommand{\xa}{.5*\xend}
    \newcommand{\xb}{2.9}
    \newcommand{\xdelta}{1.6}
    \newcommand{\xdots}{2}
    \newcommand{\clauseY}{-1}

    \node[label={87.5:\((1,1)\)}] (c1) at (.5*\xend,\clauseY)
    {\recipientOf{\snthLiteralOf{1}{i}}};

    \node[label={145:\((1)\)}] (vc1) at (\xa-1.25,\clauseY-1) {\donorOf{\snthLiteralOf{1}{i}}};
    \node[label={below:\((2)\)}] (c) at (.5*\xend,\clauseY-1.5) {$\agentGen(\sClause_i)$};

    \draw[->,arrows={-latex}]
      (c1) edge (nx)
      (c1) edge (vc1)
      (c) edge (c1);
  \end{tikzpicture}%
  \caption{Connections of
   gadgets for literal~\snthLiteralOf{1}{i} (over variable~\sVariable{}) of some
   clause~$\sClause_i$. Envious agents are framed. Other possible literals
  of~$\sClause_i$ are omitted for clarity.}
  \label{fig:cnf-full-part}
 \end{figure}
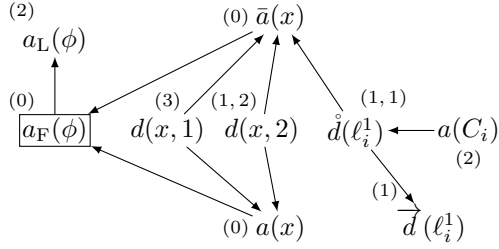

 We obtain the full construction by interconnecting the gadgets (see Figure~\ref{fig:cnf-full-part}).
 First, we connect every value agent with an arc directed from the value agent to the
 follower. Note that we do not introduce any envy because the follower has no
 resources initially. Finally, for each literal~$\snthLiteralOf{j}{i}$, $j \in
 \{1, 2, 3\}$, $i \in [\sClausesCount]$, we add an arc from a respective
 recipient~\recipientOf{\snthLiteralOf{j}{i}} to the value agent of the
 corresponding variable agent; for example, for~$\snthLiteralOf{j}{i} =
 \neg\sVariable_y$, $y \in [\sVariablesCount]$, we would add an
 arc~($\recipientOf{\snthLiteralOf{j}{i}},\negAgentOf{\sVariable_y}$).
 The sharing graph is just the same as the underlying graph of the attention graph.

 The constructed attention graph, the sharing graph, the introduced agents, the resources allocated
 by the initial allocation, and their utilities together with the desired
 number~$\envyThreshold :=0$ of envious agents form an instance
 of~\envyReducingProblemShort{}, clearly computable in polynomial time.

 \paragraph{Correctness.}
 We show that~\sFormula{} is satisfiable if and only if there exists a simple 
 $2$-sharing for the above instance such that no agent will be envious.

 $\Leftarrow :$
 Suppose there exists a simple 
 $2$-sharing for the above instance such that no agent will be envious.
 Since none of value agents has any resource, the only way
 to make a follower non-envious is to share the leader's resource. This %
 makes all value agents envious because they look at the follower that has
 obtained a resource of value~two. Hence, to decrease the number of
 envious agents to zero, one has to actually provide each value agent with a
 resource of value at least two.
 Since respective recipients that are connected to the value agents have 
 only one-valued resources, value agents have to share the resources of the dummy agents. 
 Among these only two suitable resources exist, two- and three-valued ones.
 Thus, in each variable gadget, one of the value agents needs 
 to get a resource of value two and the other a resource of value three. 

 Next, we consider the clause gadget. As was already discussed, each
 value agent gets either a two-valued or a three-valued resource (as a result of
 a sharing that eliminates envy). Notably, every value agent with a three-valued
 resource makes every recipient agent paying attention on the value agent
 envious. Let us fix some literal~$\sLiteral \in \sClause$ whose value
 agent~$\posAgentOf{\sVariable}$ got a three-valued resource. One
 can easily verify that the only way to fix envy
 of~\recipientOf{\sLiteral} is to share a resource from the 
 donor~\donorOf{\sLiteral}. However, then the root agent~$\agentGen(\sClause)$
 becomes envious. A major observation is that this envy
 can only be fixed if there exists a recipient~\recipientOf{\sLiteral'} of another literal of
 clause~\sClause{} that does not need to share with its donor~\donorOf{\sLiteral'},
 which implies that the corresponding value agent for recipient~\recipientOf{\sLiteral'} got a two-valued resource.  As a result, for each clause, at
 least one of the corresponding value agents has to get a two-valued resource.
 Then, the truth assignment that sets~\sVariable{} true if and only if 
 agent~\posAgentOf{\sVariable} is shared with a two-value resource can satisfies every clause of~\sFormula{}.
 
 $\Rightarrow :$
 Suppose there exists a satisfiable truth assignment~\truthAssignment{} of~\sFormula{}.
 We first make the follower share with the leader, making the follower non-envious.
 Next, in each variable gadget, for every variable~\sVariable{} setting to ``True'', we let agent~\posAgentOf{\sVariable} get a two-value resource and \negAgentOf{\sVariable}
 get a three-value resource through sharing with the corresponding dummy agents; otherwise,
 we let agent~\posAgentOf{\sVariable} get a three-value resource and \negAgentOf{\sVariable} get a two-value resource. 
 Finally, for each clause~\sClause{}, since~\truthAssignment{} is a satisfiable truth assignment, there must be a literal~$\sLiteral \in \sClause$ that is assigned ``True'',
 which means the corresponding value agent is shared with a two-value resource.
 Then in the corresponding clause gadget, recipient~\recipientOf{\sLiteral} can share with the root agent~$\agentGen(\sClause)$ such that~$\agentGen(\sClause)$ has value three and the other two recipients can share with their donors such that they have value three.
 It is easy to verify that no agent will be envious under this simple $2$-sharing. 
 \let\sFormula\undefined
 \let\sVariables\undefined
 \let\sVariable\undefined
 \let\sClause\undefined
 \let\sLiteral\undefined
 \let\snthLiteralOf\undefined
 \let\sClausesCount\undefined
 \let\sVariablesCount\undefined
 \let\agentGen\undefined
 \let\leaderOf\undefined
 \let\followerOf\undefined
 \let\posAgentOf\undefined
 \let\negAgentOf\undefined
 \let\nthDummyOf\undefined
 \let\donorOf\undefined
 \let\recipientOf\undefined
 \let\envyThreshold\undefined
 \let\truthAssignment\undefined
\end{proof}

\subsection{Reducing Envy for Tree-like Graphs}

We study how the tree-like structure of the sharing graph and the attention
graph influences the computational complexity of \envyReducingProblemShort.
Studying tree-likeness, we hope for tractability for quasi-hi\-er\-ar\-chi\-cal social
networks, where agents at the same level of the hierarchy influence each other
but they rather do not do so in a cross-hierarchical manner.

Theorem~\ref{thm:hard_clique} shows that when both graphs are (bidirectional) cliques \envyReducingProblemShort{} is \nphard.
We continue to focus on the case when the underlying graph of the attention graph is the same as the sharing graph.
Note that this restriction appears naturally when assuming that one may envy everybody one knows and one may share only with known people. %
Theorem~\ref{thm:XP_treewidth} shows that in this case, if the sharing
graph is a path, a tree or being very close to a tree
(corresponding to a ``hierarchical network''),
then we can solve~\envyReducingProblemShort{} in polynomial time,
while,
 intuitively, %
for sharing graphs %
being ``far from a path,'' presumably there is no algorithm whose
exponential growth in the running time depends only on the ``distance from
path''.

\begin{theorem}
 \label{thm:XP_treewidth}
 When the underlying graph of the attention graph is the same as the sharing graph,
 \envyReducingProblemShort{} can be solved in polynomial time if the sharing graph has a constant treewidth (assuming the tree decomposition is given), 
  and \envyReducingProblemShort{} is \wonehard{} with respect to the pathwidth of the sharing graph.
\end{theorem}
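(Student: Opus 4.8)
The plan is a dynamic program over a tree decomposition of the sharing graph~\shareGraph{}, which we may assume is given and turn into a nice tree decomposition in polynomial time. The crucial structural observation is that, since the underlying undirected graph of~\attentionGraph{} equals~\shareGraph{}, every envy relation runs along an edge of~\shareGraph{}, so both endpoints of an envy-inducing pair occur together in some bag. Moreover, because we seek a \emph{simple} $2$-sharing, the share-carrying edges form a matching in~\shareGraph{}, and the final bundle of an agent~$a$ is either~$\pi(a)$ (if~$a$ donates or is uninvolved) or~$\pi(a)\cup\{r\}$ for a single resource~$r$ owned by a neighbour of~$a$ (if~$a$ receives~$r$). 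Hence the final value~$u_b(\Pi^\delta(a))$ that any neighbour~$b$ perceives is fully determined by which resource, if any, $a$ receives.

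Accordingly, I would store in the DP state, for each vertex~$v$ of the current bag: (i)~the resource~$\rho(v)\in\resources\cup\{\bot\}$ that~$v$ ultimately receives ($\bot$ meaning ``nothing''); (ii)~a flag telling whether the unique share-edge incident to~$v$ has already been placed in a descendant subtree; and (iii)~a flag telling whether~$v$ is already known to be envious. The value attached to a state is the minimum number of already-forgotten agents that turned out envious. Introduce-vertex nodes guess~$\rho(v)$; introduce-edge nodes~$\{u,v\}$ may place a share along~$\{u,v\}$, checking consistency with~$\rho(u),\rho(v)$ and the incidence flags, and---because both~$\rho(u)$ and~$\rho(v)$ are fixed---immediately decide whether~$u$ envies~$v$ or~$v$ envies~$u$ and update the envious flags; forget-vertex nodes verify that a \emph{receiving} vertex's share-edge was actually placed and fold its envious flag into the running count; join nodes combine the branches in the standard matching-DP manner (envious flags are OR-ed, each share-edge is placed in at most one branch, counts add). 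At the root we accept iff some state has value at most~$k$. There are~$(m+1)^{O(w)}\cdot 2^{O(w)}=m^{O(w)}$ states per bag, which is polynomial for constant treewidth~$w$, and correctness follows by the usual induction. The main obstacle is precisely that envy couples adjacent agents' bundles, so a bare ``matched/unmatched'' state is insufficient; carrying~$\rho(v)$ and committing it already at the introduce-vertex node (so it is available when incident edges are processed) is what enables the local envy checks---and it is also the reason the algorithm is only \xp{}, not \fpt{}, in the treewidth.

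\textbf{Part 2: \wonehardness{} parameterized by pathwidth.}
Here the plan is a parameterized reduction from \MCC{} parameterized by the number~$\ell$ of colour classes---which is \wonehard{}---producing an \envyReducingProblemShort{} instance whose sharing graph has pathwidth bounded by a function of~$\ell$. For each colour class I would build a \emph{selection gadget}, a small attention-and-sharing structure in which removing the envy of a designated agent forces the choice of a unique vertex of that colour and propagates this choice along a ``rail'' persisting through the construction; and for each pair of colour classes a \emph{verification gadget} that can become envy-free only if the two chosen vertices are adjacent in the \MCC{} instance. Reusing the conservative ``share-or-receive-once'' bottleneck of simple $2$-sharings (as exploited in Theorems~\ref{thm:hard_clique} and~\ref{thm:decreasing_by_one_nph}) lets a gadget transmit a single value along an edge, which is exactly what the rails and verifications need. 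The target~$k$ is set so that only the ``auxiliary'' (leader-type) envy must vanish, which can happen iff all selection and verification gadgets are simultaneously satisfied, i.e.\ iff a multicoloured clique exists.

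The technical core, and the main difficulty, is the pathwidth bound: the gadgets must be arranged so that a path decomposition only ever keeps~$f(\ell)$ vertices open. I would lay out the~$\ell$ selection rails in parallel, so that every bag contains their~$O(\ell)$ ``current'' vertices, and sweep the~$\binom{\ell}{2}$ verification gadgets one at a time, each touching only its two relevant rails plus~$O(1)$ private vertices; this gives pathwidth~$O(\ell)$ (a coarser layout yielding~$O(\ell^2)$ would already suffice). Verifying that this layout has the claimed width, and that the gadgets indeed enforce ``yes-instance of \MCC{} iff envy is reducible to~$k$'', is the bulk of the work; \wonehardness{} then follows, since an algorithm for \envyReducingProblemShort{} running in time~$g(\mathrm{pw})\cdot|I|^{O(1)}$ would solve \MCC{} in time~$g(f(\ell))\cdot|I|^{O(1)}$.
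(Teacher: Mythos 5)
Your Part~1 is essentially the paper's proof: a dynamic program over a nice tree decomposition whose states record, for each bag agent, the (at most one) received resource and an envy indicator, with the key observation that $\shareGraph=\attentionGraph$ forces every envy pair into a common bag; the paper packages your $\rho(v)$ as a ``bundle configuration'' and your envious flags as a subset $S_t\subseteq X_t$, and both arrive at $m^{O(w)}$ states, hence \xp{} but not \fpt{}. This part is fine.

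Part~2 has a genuine gap. You correctly choose \MCC{}, per-color selection gadgets, and a parallel layout keeping the $O(\ell)$ selection agents in every bag, but the load-bearing object---the verification gadget---is never constructed, and as described it cannot work: you posit \emph{one} gadget per pair of color classes, swept through with only $O(1)$ private vertices open, which must ``become envy-free only if the two chosen vertices are adjacent.'' A constant number of agents with additive utilities and threshold-type envy comparisons cannot in general decide membership in an arbitrary bipartite adjacency relation between two $n$-element color classes; this is exactly the step where such reductions need either an edge-selection layer or, as the paper does, one constant-size \emph{certification gadget per forbidden edge} (non-edge between the two classes), i.e.\ up to $\binom{\ell}{2}n^2-|E|$ gadgets. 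In the paper's construction each certification pair $c_e^i,c_e^j$ starts envious of its selectors, can save at most one of its two members by the single internal share (worth $3$), and that save is blocked precisely when the selector received the resource of the endpoint of $e$ (pushing the selector's perceived bundle to $4$); the budget $k$ is set to the number of forbidden edges, so exactly one agent per forbidden-edge gadget must be rescued, which forces the selected vertices to avoid every non-edge and hence form a clique. The pathwidth bound then comes from sweeping these $O(n^2\ell^2)$ constant-size gadgets one at a time past the $3\ell$ always-open selection agents, giving pathwidth $3\ell+2$. Without specifying a gadget of this kind (and the accompanying counting argument for $k$), your Part~2 remains a plan rather than a proof.
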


\begin{proof}[Proof of the first part]
Denote by~$tw$ the treewidth of the underlying graph.
Let~$\mathcal{T}=(T,\{X_t\}_{t \in V(T)})$ be a nice tree decomposition (see \cite{DBLP:books/sp/Kloks94} for the definition) of the underlying graph that has width~$tw$.
Before defining the target values for each bag to be computed, 
we first define some concepts and notations.
A \emph{bundle configuration}~$b_t$ for bag~$X_t$ is a function which maps every agent in~$X_t$ to a bundle of resources.
We say a sharing~$\delta$ \emph{realizes} a bundle configuration~$b_t$ if~$\Pi^\delta(a)=b_t(a)$ for each~$a \in X_t$.
A bundle configuration~$b_t$ is \emph{feasible} if there exists a simple $2$-sharing~$\delta$ which can realize~$b_t$.
For a node~$t$ of~$T$, let~$B_t$ be the set of all feasible bundle configurations for bag~$X_t$, 
and let~$V_t$ be the union of all the bags in the subtree of~$T$ rooted at~$t$ (including~$X_t$ itself).
Denote by~$\Env_{[V_t]}(\delta)$ the set of envious agents in the sub-instance induced by agents in~$V_t$ under the sharing~$\delta$.
Note that~$\Env_{[V_t]}(\delta)$ does not contain an agent~$a$ if~$a$ only envies agents outside~$V_t$ under~$\delta$.
For a bundle configuration~$b_t$ and a bag~$X_{t'} \subseteq X_t$, denote by~$b_t[X_{t'}]$ the bundle configuration of~$b_t$ restricted on~$X_{t'}$.

Now we define our target values.
For every node~$t$, every bundle configuration~$b_t^i \in B_t$ and every subset~$S_t \subseteq X_t$, define the following value:
\begin{align*}
f[t,b_t^i,S_t] \coloneqq & \min|\{\Env_{[V_t]}(\delta) \mid \delta \text{ realizes } b_t^i, \Env_{[V_t]}(\delta) \cap X_t =S_t\}|.
\end{align*}
That is, $f[t,b_t^i,S_t]$ is the minimum number of envious agents in the sub-instance induced by~$V_t$ under a sharing~$\delta$ which realizes~$b_t^i$ and~$\Env_{[V_t]}(\delta) \cap X_t =S_t$.
If no such sharing exists, then~$f[t,b_t^i,S_t]=+\infty$.
It is easy to see that our goal is to compute~$\min \{f[r,b_r^i,S_r] \mid b_r^i \in B_r,S_r \subseteq X_r\}$, where~$r$ is the root node.
Next we show how to compute~$f[t,b_t^i,S_t]$ for all~$t$, $b_t^i$ and~$S_t$.

\textbf{Introduce node.} 
Suppose~$t$ is an introduce node with child~$t'$ such that~$X_t=X_{t'} \cup \{v\}$ for some~$v \not \in X_{t'}$.
For any~$b_t^i \in B_t$ and~$S_t \subseteq X_t$, we first check whether~$b_t^i$ and~$S_t$ are compatible.
To this end, we compute a set~$S_t^0 \subseteq X_t$ of agents who envies agents in~$X_t$ under~$b_t^i$.
If~$S_t^0 \setminus S_t \not = \emptyset$, then~$b_t^i$ and~$S_t$ are not compatible, and we set~$f[t,b_t^i,S_t]=+\infty$.
Since all neighbors of~$v$ in~$V_t$ are in~$X_t$, if~$v \not \in S_t^0$, then~$v$ will not be envious in the sub-instance induced by~$V_t$ under any sharing which can realize~$b_t^i$. 
Thus, if~$v \in S_t \setminus S_t^0$, then~$b_t^i$ and~$S_t$ are not compatible, and we set~$f[t,b_t^i,S_t]=+\infty$.
Now for any~$b_t^i \in B_t$ and~$S_t \subseteq X_t$ with~$S_t^0 \subseteq S_t$ and~$v \in S_t^0 \Leftrightarrow v \in S_t$, 
we have that:
\begin{align*}
f[t,b_t^i,S_t]=&\min\{f[t',b_t^i[X_{t'}],S_{t'}]+ |S_t \setminus S_{t'}| \mid S_t \setminus S_t^0 \subseteq S_{t'} \subseteq S_t\}.
\end{align*}

\textbf{Forget node.} 
Suppose~$t$ is a forget node with child~$t'$ such that~$X_t=X_{t'} \setminus \{w\}$ for some~$w \in X_{t'}$.
For any~$b_t^i \in B_t$ and~$S_t \subseteq X_t$, we have that:
\begin{align*}
f[t,b_t^i,S_t]= &\min\{f[t',b_{t'}^j,S_{t'}]  \mid b_{t'}^j[X_t]=b_t^i,S_{t'} \cap X_t=S_t\}.
\end{align*}

\textbf{Join node.} 
Suppose~$t$ is a join node with children~$t_1$ and~$t_2$ such that~$X_t=X_{t_1}=X_{t_2}$
.
For any~$b_t^i \in B_t$ and~$S_t \subseteq X_t$, we have that:
\[
f[t,b_t^i,S_t]= f[t_1,b_t^i,S_t]+f[t_2,b_t^i,S_t]-|S_t|.
\]
Note that~$V_{t_1} \cap V_{t_2}=X_t$ and~$S_t$ is the set of envious agents that have been counted in both~$f[t_1,b_t^i,S_t]$ and~$f[t_2,b_t^i,S_t]$.

Overall, for each node~$t$, we have~$|B_t| \le m^{tw+1}$ since in any bundle configuration for~$X_t$, each agent in~$X_t$ could get at most one additional resource compared to its initial bundle.
So we have at most~$n(2m)^{tw+1}$ values to be computed.
Each value of an introduce node can be computed in~$O(2^{tw+1})$ time.
Each value of an forget node can be computed in~$O(m)$ time.
Each value of an forget node can be computed in constant time.
So the running time is~$O(n(4m)^{tw+2})$.
\end{proof}

\begin{proof}[Proof of the second part]
 \newcommand{\selectorOf}[1]{\ensuremath{s_{#1}}}
 \newcommand{\dummyOf}[1]{\ensuremath{d_{#1}}}
 \newcommand{\providerOf}[1]{\ensuremath{p_{#1}}}
 \newcommand{\resourceOf}[1]{\ensuremath{r({#1})}}
 \newcommand{\resourceOfTwo}[2]{\ensuremath{r({#1},{#2})}}
 \newcommand{\certificationOf}[2]{\ensuremath{c_{#1}^{#2}}}

 We present a parameterized reduction from the \wonehard{} problem \MCC{}; here,
 for an integer~$\ell$, and an undirected graph~$G=(V,E)$ in which each vertex is colored with
 one of~$\ell$ colors, the goal is to find a set of~$\ell$ pairwise adjacent,
 differently colored vertices. Without loss of generality, we assume that
 set~$V$ can be partitioned into~$\ell$ size-$n$
 sets~\namedorderedsetof{V_i}{v^i}{n}, $i \in [\ell]$, where
 each~$V_i$ consists of vertices of color~$i$. Similarly, we assume that there
 are no edges between vertices of the same color. Given an instance~$(G,\ell)$ of \MCC, we construct an instance of
 \envyReducingProblemShort{} as follows (see also Figure~\ref{fig:w1hard_pathwidth}).

To conveniently present the construction of the instance
of~\envyReducingProblemShort{}, we introduce some handy notation. Let us fix a pair
of distinct colors~$i$ and~$j$. We refer to the set of edges connecting the
vertices of these colors as~$E_{i,j}=\{ \{v,v'\} \in E \mid v \in V_i, v' \in V_j\}$. Complementarily, 
let~$\overline E_{i,j}=\{ \{v,v'\} \mid v \in V_i, v' \in V_j\} \setminus E_{i,j}$ and 
$\overline E=\cup_{i \neq j \in [\ell]} \overline E_{i,j}$.
We say that an edge~$e$ is~\emph{forbidden} if~$e \in \overline E$. Indeed, since forbidden edges are not part of~$G$,
they cannot appear between vertices of any clique in~$G$.

\paragraph{Construction.} Our construction consists of the~\emph{vertex
selection} gadget and the~\emph{certification gadget}. We describe in the given
order specifying how they relate to each other in the description of the
certification gadget. For better understanding, we refer to~Figure~\ref{fig:w1hard_pathwidth} showing the construction and Table~\ref{tab:pathwidth_utility} showing the utility functions.

For each color~$i$, we build the~\emph{vertex selection} gadget consisting of
three agents: a \emph{selector}~\selectorOf{i}, a
\emph{provider}~\providerOf{i}, and a \emph{dummy}~\dummyOf{i}. As for the
attention relation, the provider~\providerOf{i} attends the selector~\selectorOf{i}, that, in turn, attends
the dummy~\dummyOf{i}. For each vertex~$v \in V_i$ we create a \emph{vertex}
resource~\resourceOf{v} and give all of them to the provider. The selector
initially gets two resources~\resourceOfTwo{\selectorOf{i}}{1}
and~\resourceOfTwo{\selectorOf{i}}{2} as well as the dummy that
gets~\resourceOfTwo{\dummyOf{i}}{1} and~\resourceOfTwo{\dummyOf{i}}{2}. The
desired goal of this gadget is that the provider shares with the selector
exactly one vertex resource. We get this by making the selector initially
envious of the dummy and ensuring that the dummy cannot share with the selector
to remove the envy. To implement such a behavior, we set the utility function of
the dummy and the provider to be~$0$ for all resources. The selector, however,
gives utility~$1$ to both~\resourceOfTwo{\dummyOf{i}}{1}
and~\resourceOfTwo{\dummyOf{i}}{2} and utility~$2$ to all vertex resources; the
selector gives utility~$0$ to every other resource.
Thus, under the initial allocation, the selector has utility 0 and envies the dummy
whose bundle has utility 2 for the selector.

For each forbidden edge, we build the~\emph{certification gadget}, the purpose of
which is to forbid selectors from being shared vertex resources representing
vertices that are not adjacent in~$G$. To demonstrate the gadget, let us fix a
pair~$\{i, j\}$ of distinct colors such that~$i<j$ and a forbidden 
edge~$e=\{v,v'\} \in \overline E_{i,j}$. The certification gadget
consists of two~\emph{certification} agents~\certificationOf{e}{i}
and~\certificationOf{e}{j}, each having a
resource---\resourceOf{\certificationOf{e}{i}}
and~\resourceOf{\certificationOf{e}{j}}, respectively.
Agent~\certificationOf{e}{j} attends~\certificationOf{e}{i}. We connect the
certification gadget with the respective vertex selection gadgets by letting
both certification agents attend their respective selectors, that
is~\certificationOf{e}{i} attends~\selectorOf{i} and~\certificationOf{e}{j}
attends~\selectorOf{j}. We make two certification agents being envious of the
respective selectors according to the initial allocation, and ensure that unless
the selector are shared nonadjacent (with respect to~$G$) vertex resources, then we can
make one of the certification agents unenvious. Additionally, we make sure that
sharing a resource from a selector to a certification agent cannot make the 
certification agent unenvious. To this end, \certificationOf{e}{i} gives utility~$1$ to both
resources initially given to~\selectorOf{i}. Further, \certificationOf{e}{i}
gives utility~$2$ for vertex resource~\resourceOf{v} and utility~$1$ for all
other vertex resources of color~$i$. We set the utility function
of~\certificationOf{e}{j} for the initial resources of~\selectorOf{j} and vertex
resources of color~$j$ analogously. Finally, both certification agents
give each other's resource utility~$3$ and~utility~$0$ for every other
resource.

Finally we set~$k=\binom{s}{2} n^2-|E|$, which is the number of all forbidden
edges. This finishes our construction of the instance of
\envyReducingProblemShort{}.
Concerning the pathwidth, denote the set of all selectors, all providers and all dummies as~$B$.
There is a path decomposition~$\mathcal{P}=(X_1,X_2,\dots,X_m)$, where~$m=|\overline E|$ is the number of forbidden edges, and each bag~$X_i$ contains all vertices in~$B$ and two certification agents for some forbidden edge.
Therefore, the pathwidth of the underlying graph of the attention graph~$\attentionGraph{}$ is at most~$|B|+2=3\ell+2$.

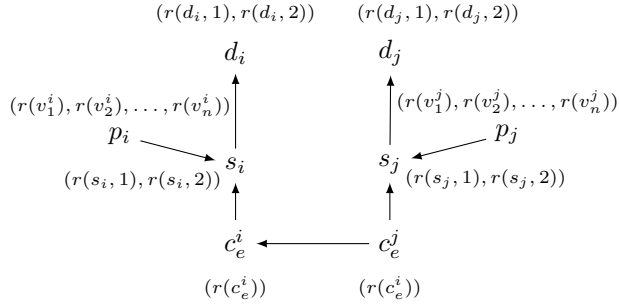
\begin{figure}[t]
   \centering%
   \begin{tikzpicture}
    \node (cei) at (0,0) {\certificationOf{e}{i}};
   \node [right =1.5cm of cei] (cej) {\certificationOf{e}{j}};
   \node [above =.5cm of cei] (ai) {\selectorOf{i}};
   \node [above =.5cm of cej] (aj) {\selectorOf{j}};
   \node [above =of ai] (di) {\dummyOf{i}};
   \node [right =1.5cm of di] (dj) {\dummyOf{j}};
   \node [left = 1cm of ai, yshift=.4cm] (pi) {\providerOf{i}};
   \node [right =1cm of aj, yshift=.4cm] (pj) {\providerOf{j}};

   \node [below =0cm of cei]
   {\scriptsize($\resourceOf{\certificationOf{e}{i}}$)};
   \node [below =0cm of cej] {\scriptsize($\resourceOf{\certificationOf{e}{i}}$)};
   \node [left =-.2cm of ai, yshift=-.2cm] {\scriptsize$(\resourceOfTwo{\selectorOf{i}}{1},
   \resourceOfTwo{\selectorOf{i}}{2})$};
   \node [right =-.2cm of aj, yshift=-.2cm] {\scriptsize$(\resourceOfTwo{\selectorOf{j}}{1},
   \resourceOfTwo{\selectorOf{j}}{2})$};
   \node [above =-.1cm of pi]
   {\scriptsize($\resourceOf{v^i_1},\resourceOf{v^i_2},\dots,
   \resourceOf{v^i_n}$)};
   \node [above =-.1cm of pj] {\scriptsize($\resourceOf{v^j_1},\resourceOf{v^j_2},\dots,
   \resourceOf{v^j_n}$)};
   \node [above =0cm of di] (di5) {\scriptsize($\resourceOfTwo{\dummyOf{i}}{1},
   \resourceOfTwo{\dummyOf{i}}{2}$)};
   \node [right =0.3cm of di5] {\scriptsize($\resourceOfTwo{\dummyOf{j}}{1},
   \resourceOfTwo{\dummyOf{j}}{2}$)};

   \draw[->,arrows={-latex}]
      (cej) edge (cei)
      (cei) edge (ai)
      (cej) edge (aj)
      (pi) edge (ai)
      (pj) edge (aj)
      (ai) edge (di)
      (aj) edge (dj);

  \end{tikzpicture}%
  \caption{The vertex selection gadget for colors~$i$ and~$j$ and the certification gadget for
  a forbidden edge~$e$ with endpoints of colors~$i$ and~$j$ in the constructed
  instance.}
  \label{fig:w1hard_pathwidth}
\end{figure}

\begin{table*}[t]
 \centering
 \begin{tabular}{m{0.5cm}|ccccc}
  \toprule
  $u$ &  $r(v)$ & $r(d_h,t)$ & $r(s_h,t)$ & $r(c_{e'}^h)$
  \\ \midrule
  $d_i$ & 0 & 0 & 0 & 0\\
  $p_i$ & 0 & 0 & 0 & 0\\
  $s_i$ & 2 & 1 & 0 & 0 \\
  $c_e^i$ & 
$ \begin{cases}
      2 & v \in e \cap V_i\\
      1        & \text{o.w.}
  \end{cases}$

 & 0 & 1 & 
 $ \begin{cases}
      3 & e=e' \land i \neq h\\
      0        & \text{o.w.}
  \end{cases}$ \\
 \bottomrule
 \end{tabular}
 \caption{Table of utility functions, where~$t \in \{1,2\}$, $e$ and~$e'$ are forbidden edges, and~$i,h \in [\ell]$. }
 \label{tab:pathwidth_utility}
\end{table*}

\paragraph{Correctness.}
We show that the instance of \MCC{} is a ``yes''-instance if and only if the
constructed instance~of~\envyReducingProblemShort{} is a ``yes''-instance.
Note that under the initial allocation, every selector~$s_i$ envies the
corresponding dummy~$d_i$ since~$u_{s_i}(\pi(s_i))=0 < u_{s_i}(\pi(d_i))=2$. For
each forbidden edge~$e$, both certification agents~$c_e^i$ and~$c_e^j$ are
envious because~$u_{c_e^i}(\pi(c_e^i))=u_{c_e^j}(\pi(c_e^j))=0 <
u_{c_e^i}(\pi(s_i))=u_{c_e^j}(\pi(s_j))=2$.

$\Rightarrow :$ Suppose there is a clique~$C^*$ of size~$\ell$ in~$G$ containing
one vertex from each color.
Based on~$C^*$, we construct a sharing function~$\delta$ such that after this
sharing, all selectors are not envious, and for each forbidden edge~$e$, exactly
one agent from~$\{c_e^i,c_e^j\}$ is envious.
For each color~$i$, let~$v_{q}^i \in C^*$. Then in sharing~$\delta$ let
provider~$p_i$ share resource~$r(v_{q}^i)$ to selector~$s_i$.
Hence, each selector~$s_i$ receives a resource of utility 2 and they become not
envious.
Then for each forbidden edge~$e=\{v_x^i,v_y^j\}$, since~$C^*$ is a clique, we have that~$\{v_x^i,v_y^j\} \not \in C^*$.
According to our current sharing function, the following two events could not happen at the same time:
\begin{enumerate}
\item resource~$r(v_x^i)$ is shared to selector~$s_i$;
\item resource~$r(v_y^j)$ is shared to selector~$s_j$.
\end{enumerate} 
Without loss of generality, assume that resource~$r(v_x^i)$ is shared to selector~$s_i$ and 
resource~~$r(v_{y'}^j)$ instead of~$r(v_y^j)$ is shared to selector~$s_j$.
Then let agent~$c_e^i$ share resource~$r(c_e^i)$ to agent~$c_e^j$ and this makes agent~$c_e^j$ not envious, since~$u_{c_e^j}(\pi(c_e^j) \cup \{r(c_e^i)\})=3=u_{c_e^j}(\pi(s_j) \cup \{r(v_{y'}^j)\})$.
This finishes our definition of sharing~$\delta$.
After this sharing, exactly one certification agent for each forbidden edge is envious and no other agent is envious,
thus there are~$k=\binom{s}{2} n^2-|E|$ envious agents after sharing~$\delta$.

$\Leftarrow :$ Suppose there is a sharing~$\delta$ such that only~$k$ agents will be envious after this sharing.
First, for each forbidden edge~$e$, the only way to make agent~$c_e^i$ not envious is to share resource~$r(c_e^j)$ from~$c_e^j$ to~$c_e^i$ and the only way to make agent~$c_e^j$ not envious is to share resource~$r(c_e^i)$ from~$c_e^i$ to~$c_e^j$.
Since there could be at most one sharing between~$c_e^i$ and~$c_e^j$, at least one agent of~$c_e^i$ and~$c_e^j$ will be envious.
Since~$k=\binom{s}{2} n^2-|E|$ is exactly the number of all forbidden edges, 
it must be that after sharing~$\delta$, no selector~$a_i$ is not envious, and for each forbidden edge~$e$, exactly one agent from~$\{c_e^i,c_e^j\}$ is envious.
Based on this fact, we show in the following that there is a clique~$C^*$ of size~$\ell$ containing one vertex from each color.

To make selector~$s_i$ not envious, $s_i$ has to receive a resource from the provider $p_i$.
For each~$i \in [\ell]$, let~$r(v_{x_i}^i)$ be the resource shared to~$s_i$,
we claim that~$C^*=\{v_{x_1}^1, v_{x_2}^2, \dots, v_{x_\ell}^\ell\}$ is the desired clique.
Suppose~$C^*$ is not a clique, then there are two vertices~$v_{x_i}^i$ and~$v_{x_j}^j$ such that~$\{v_{x_i}^i,v_{x_j}^j\} \not \in E$, which means that there is a forbidden edge~$e=\{v_{x_i}^i,v_{x_j}^j\}$.
Since every selector~$s_i$ shares with provider~$p_i$, agents~$c_e^i$ and~$c_e^j$ can only share with each other and the maximum utility for them after such a sharing is at most 3.
However, after the sharing, $u_{c_e^i}(\Pi^\delta(s_i))=u_{c_e^i}(\{\pi(s_i),r_{x_i}^i\})=4$ and $u_{c_e^j}(\Pi^\delta(s_j))=u_{c_e^j}(\{\pi(s_j),r_{x_j}^j\})=4$,
then both~$c_e^i$ and~$c_e^j$ will be envious, which is a contradiction.
Therefore, $C^*$ is a clique of size~$s$ containing one vertex from each color.
\end{proof}

\subsection{Reducing Envy for Identical Utility Functions}

We proceed by studying the natural special case where all agents have 
the same utility
function. Already in this constrained setting of homogeneous agents allocation
problems frequently become hard~\citep{BL08}. This is why this scenario also
attracts quite some attention in the fair allocation
literature~\citep{NRR13,BB18,BKV18,BL08}.
Here, we focus on cliques that naturally model small, dense communities and
allow for convenient comparison with the classical setting of indivisible,
non-shareable resources (where the attention graph is implicitly assumed to be a
bidirectional clique).

Theorem~\ref{thm:hard_clique}  already shows that
restricting the attention graph and the sharing graph to be cliques is not
enough to make \envyReducingProblemShort{} solvable in polynomial
time. However, if all agents have the same utility function,
Theorem~\ref{thm:P_clique+same_utility} shows that restricting the attention graph to
be a clique alone can achieve polynomial-time solvability.
The idea behind the proof of
Theorem~\ref{thm:P_clique+same_utility} is that unenvious agents are exactly those with
the highest utility, so we can guess the highest utility and then reduce the problem to a bounded number of \textsc{Maximum Matching}.

\begin{theorem}
\envyReducingProblemShort{} is solvable
in polynomial time if the attention graph is
a bidirectional clique and all agents have the same utility function.
\label{thm:P_clique+same_utility}
\end{theorem}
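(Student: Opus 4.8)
The plan is to reduce, in polynomial time, to polynomially many \textsc{Maximum Matching} instances, one for each possible value of the highest attained utility. First I would record two structural facts. Since the attention graph is a bidirectional clique and all agents have the same utility function~$u$, an agent is non-envious under a sharing~$\delta$ precisely when its bundle value~$u(\Pi^\delta(a))$ equals $\max_{a'} u(\Pi^\delta(a'))$; so the non-envious agents are exactly those attaining the top value. Second, because a simple $2$-sharing leaves every resource with its original owner, an agent that \emph{gives} a resource keeps its bundle value unchanged, while an agent that \emph{receives} a resource gains exactly that resource's value. Hence in any simple $2$-sharing the final value of each agent~$a$ is $u(\pi(a))$ plus the value of at most one resource owned by one of its neighbors in~$\shareRelations$. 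In particular the top value~$V^*$ of any sharing satisfies $V^* \ge \max_a u(\pi(a))$ and belongs to the polynomial-size set $\mathcal V = \{u(\pi(a)) \mid a \in \agents\} \cup \{u(\pi(a)) + u(r) \mid a \in \agents,\ \{a,b\}\in\shareRelations,\ r \in \pi(b)\}$ of size $O(nm)$.

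Next I would, for every $V \in \mathcal V$ with $V \ge \max_a u(\pi(a))$, compute $g(V)$: the maximum number of agents that can simultaneously reach value exactly~$V$ by some simple $2$-sharing under which no agent exceeds~$V$ (when $V$ is the top value of that sharing, this is just the number of non-envious agents). Split the agents into $A_{=} = \{a : u(\pi(a)) = V\}$ and $A_{<} = \{a : u(\pi(a)) < V\}$. Every agent in~$A_{=}$ is already at~$V$ and, since giving costs nothing, may freely act as a donor; call an agent of~$A_{<}$ \emph{saved} (by a given sharing) if it ends at value~$V$, which happens exactly when it receives, from a neighbor in~$\shareRelations$, a resource of value exactly $V - u(\pi(a))$. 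Any other sharing involving an $A_{<}$-agent is irrelevant for this~$V$ and can be discarded. So I would build the (general, not necessarily bipartite) graph $H_V$ on vertex set~$\agents$ containing the edge $\{a,b\}$ whenever $\{a,b\}\in\shareRelations$ and either $a \in A_{<}$ and $b$ owns a resource of value $V - u(\pi(a))$, or symmetrically, and set $g(V) := |A_{=}| + \nu(H_V)$, where $\nu(H_V)$ is the size of a maximum matching in~$H_V$ (polynomial-time computable).

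The crux is the identity ``maximum number of $A_{<}$-agents that can be saved, with no agent exceeding~$V$, $= \nu(H_V)$''. For ``$\ge$'', take a maximum matching of~$H_V$, orient each edge towards an endpoint that the other can lift to~$V$ (one exists by construction of~$H_V$), and share the corresponding resource; the lifted endpoints are pairwise distinct, so $\nu(H_V)$ agents are saved and nobody exceeds~$V$. For ``$\le$'', in any valid sharing each saved $A_{<}$-agent received its lifting resource from a uniquely determined neighbor; these neighbor-edges are edges of~$H_V$, each neighbor is used at most once, and---crucially---a simple $2$-sharing never lets an agent both give and receive a resource, so no saved agent is simultaneously used as a donor; hence the saved-agent/donor edges form a matching of~$H_V$, bounding the number of saved agents by~$\nu(H_V)$. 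Putting the pieces together, the minimum achievable number of envious agents equals $n - \max_{V} g(V)$ (this maximum is at least~$1$, witnessed by $V = \max_a u(\pi(a))$ and the richest agent, so the corresponding construction indeed has top value~$V$), and the algorithm answers ``yes'' iff $n - \max_V g(V) \le k$; since $|\mathcal V| = O(nm)$ and each matching computation is polynomial, the whole procedure runs in polynomial time. I expect the main obstacle to be exactly this last identity: one must carefully justify the two ``without loss of generality'' reductions (no agent needs to exceed~$V$; each $A_{<}$-agent either receives a resource reaching~$V$ or does nothing useful) and, for the upper bound, pin down that the $1$-boundedness of a simple $2$-sharing is precisely what makes the donor assignment an honest matching rather than a looser combinatorial structure.
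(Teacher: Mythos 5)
Your proposal is correct and follows essentially the same route as the paper's proof: observe that under a bidirectional-clique attention graph with identical utilities the non-envious agents are exactly those attaining the maximum bundle value, enumerate the $O(nm)$ candidate top values, and for each one reduce the question of how many agents can be lifted to exactly that value to a maximum matching computation. Your additional care about the donor/receiver structure and the ``no agent exceeds $V$'' condition is a slightly more explicit version of the same argument.
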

\begin{proof}
Let~$u$ be the utility function for all agents and let~$N$ be the set of
non-envious agents after a sharing~$\delta$.
For every two agents~$a_i,a_j \in N$, since the attention graph is a clique,
we have that~$u(\sharingAllocation^\delta(a_i)) \ge
u(\sharingAllocation^\delta(a_j))$ and~$u(\sharingAllocation^\delta(a_j)) \ge
u(\sharingAllocation^\delta(a_i))$; hence $u(\sharingAllocation^\delta(a_i)) =
u(\sharingAllocation^\delta(a_j))$. Denote the utility of all agents in~$N$
by~$u^*$.
Clearly, $u(\sharingAllocation^\delta(a_k)) < u^*$ for every agent~$a_k \not \in N$.
Thus, under any sharing allocation, the set of unenvious agents is exactly the set of agents who have the highest utility.
Based on this observation, it suffices to compute, for each possible target utility~$u^*$,
the largest number of agents that can have exactly utility~$u^*$ after some
sharing.

We first show that the number of different possibilities for the target utility~$u^*$ is~$O(nm)$.
Each agent can get at most one resource through a simple $2$-sharing and there are overall~$m$ different resources, so each agent could end with~$O(m)$ different utilities.
Together with the number~$n$ of agents, we get the desired upper bound~$O(nm)$. 
Let~$S$ be the set of all these utilities, that is, $S=\{t \mid t=u(\pi(a_i))+u(r) \text{ for some agent } a_i \in \agents \text{ and some resource } r \in \resources \setminus \pi(a_i)\}$.
Let~$u_0=\max_{i \in \agents} u(\pi(a_i))$ be the largest utility of some agent before a sharing.
We only need to consider utility values in~$S'=\{t \in S \mid t \ge u_0\}$ as
any target utility~$u^*$ should be at least~$u_0$.

Next we show that for every target utility~$u^* \in S'$ computing the largest
number of agents that can have utility~$u^* $ after a sharing~$\delta$ can be reduced
to \textsc{Maximum Matching}. 
We first find the set~$N_0$ of agents who already have utility~$u^*$ before sharing, that is, $N_0=\{a_i \in \agents \mid u(\pi(a_i))=u^*\}$.
Notice that $N_0=\emptyset$ if $u^*>u_0$. 
Then we construct a graph~$G=(\agents,E)$ where vertices being the agents and an edge~$e=\{a_i,a_j\}$ belongs to~$E$ if one of agents~$a_i$ and~$a_j$ can strictly increase its utility to~$u^*$ through a sharing between~$a_i$ and~$a_j$.
Formally, $e=\{a_i,a_j\} \in E$ if~$\{a_i,a_j\} \in \shareRelations$, $a_i \not \in N_0$, and there exists a resource~$r \in \pi(a_j)$ such that~$u(\pi(a_i))+u(r)=u^*$ or $a_j \not \in N_0$ and there exists a resource~$r \in \pi(a_i)$ such that~$u(\pi(a_j))+u(r)=u^*$.
By computing a maximum matching in~$G$, which can be done in polynomial time~\citep{HK73},
 we can find the largest number of agents in~$\agents \setminus N_0$ who can increase their utilities to~$u^*$ through a sharing.
\end{proof}

We complement Theorem~\ref{thm:P_clique+same_utility} by showing in Theorem~\ref{thm:hard_share_clique+same_utility} that identical utility functions together with the sharing graph being a clique
 are not sufficient to make \envyReducingProblemShort{} solvable in
polynomial time. 
Note that Theorem~\ref{thm:P_clique+same_utility} and Theorem~\ref{thm:hard_share_clique+same_utility} show an interesting contrast between the influence of 
the completeness of the attention graph and the sharing graph 
on the problem's computational complexity.

\begin{theorem}
\label{thm:hard_share_clique+same_utility}
 \envyReducingProblemShort{} is \nphard{} even if
 the sharing graph is a clique,
 all agents have the same utility function,
 and the maximum initial bundle size is one.
 For the same constraints, \envyReducingProblemShort{} is~\wonehard{} with
 respect to the parameter ``number of resources.''
\end{theorem}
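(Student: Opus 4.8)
The plan is to obtain both claims from a single reduction: a parameterized many-one reduction from \MCC{} (with parameter the number~$\ell$ of colors) that in addition runs in polynomial time and outputs an instance whose number of resources is bounded by a function of~$\ell$ only. Because \MCC{} is \nphard{}, the polynomial running time already gives the \nphardness{} claim; because \MCC{} is \wonehard{} and the number of resources depends only on~$\ell$, the very same reduction is a parameterized reduction for the parameter ``number of resources,'' yielding the \wonehardness{} claim. Throughout the construction every agent will be equipped with one and the same utility function, the sharing graph will be the complete graph on the agents, and no agent will hold more than one resource in the initial allocation, so the three stated restrictions hold by construction.

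Writing~$V = V_1 \cup \dots \cup V_\ell$ with~$|V_i| = n$, I would build for each color~$i$ a \emph{selection gadget}: a single high-value \emph{selection resource}~$\rho_i$ held by a \emph{provider}~$p_i$, together with one \emph{vertex agent}~$a^i_v$ (initially holding nothing) for each vertex~$v \in V_i$. With suitable attention relations inside the gadget --- the vertex agents looking at one another and at a small fixed threshold subgadget --- and with a careful choice of the resource values and of the target~$k$, I would force that meeting the target is possible only if~$\rho_i$ is shared to \emph{exactly one} vertex agent~$a^i_{v_i}$; this we read as ``color~$i$ selects~$v_i$,'' and, importantly, it costs only one resource per color. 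For every pair~$\{i,j\}$ of colors I would add a \emph{verification gadget} whose number of resources does not depend on~$n$ and whose sole role is the following: it can be made completely envy-free when~$v_i v_j \in E$, but it forces one extra envious agent --- beyond what~$k$ allows --- whenever the selected~$v_i$ and~$v_j$ are non-adjacent in~$G$ (this is the analogue of the certification gadgets of Theorem~\ref{thm:XP_treewidth}, but realized without spending one resource per forbidden edge). The total number of resources is then bounded by a polynomial in~$\ell$, while the number of agents is polynomial in~$|V|+|E|$.

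Correctness then splits in the usual way. For the forward direction, from a multicolored clique~$\{v_1, \dots, v_\ell\}$ one shares each~$\rho_i$ to~$a^i_{v_i}$ and resolves each verification gadget, leaving at most~$k$ envious agents. For the backward direction, a sharing meeting the target is forced --- by the budget together with the threshold subgadgets --- to make every selection gadget select exactly one vertex and to make every verification gadget envy-free, so all~$\binom{\ell}{2}$ selected pairs are edges and the selected vertices form a clique. The crux, and the step I expect to be the main obstacle, is the design of the verification gadget: it has to certify adjacency of the \emph{particular} selected pair~$(v_i,v_j)$ among the up to~$\Theta(n^2)$ possibilities using only a number of resources independent of~$n$ and while every agent shares a single utility function --- this is exactly what reconciles ``few resources'' with \wonehardness{}, and it is what sets the statement apart from the pathwidth-parameterized \wonehardness{} result in Theorem~\ref{thm:XP_treewidth}, whose construction could afford one resource per vertex and one per forbidden edge. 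A secondary, fiddly-but-routine point is to pin down the utility values and the target~$k$ so that, in the complete sharing graph, no ``cross-gadget'' sharing can be abused to undercut the budget; that the maximum initial bundle size equals one is immediate, since every agent receives at most one resource.
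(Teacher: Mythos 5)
There is a genuine gap: the entire weight of the theorem rests on the ``verification gadget,'' and you do not construct it --- you explicitly flag its design as ``the main obstacle'' and leave it open. This is not a fiddly detail that can be deferred. Under the stated constraints the gadget cannot work the way you sketch: with a \emph{single} utility function shared by all agents and a resource pool of size depending only on~$\ell$, a constant-size verification gadget for the color pair~$\{i,j\}$ has no mechanism to ``sense'' \emph{which} of the~$\Theta(n^2)$ vertex pairs was selected. Identical utilities mean no resource can be made valuable to one vertex agent but not another, so the only way adjacency information can enter is through the attention-graph structure --- and a gadget of size independent of~$n$ that must behave differently on each of~$\Theta(n^2)$ non-edges, while the clique sharing graph lets its resources leak anywhere, is exactly the combinatorial problem you have not solved. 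Your own comparison to the pathwidth construction (one certification pair per forbidden edge, hence~$\Theta(n^2)$ resources) shows you are aware that the naive realization blows up the resource count; the proposal offers no replacement.

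The paper's actual proof avoids needing any per-pair verification at all. It reduces from \clique{} and introduces only~$\binom{\ell}{2}$ identical unit-value resources, initially held by~$\binom{\ell}{2}$ ``happy'' agents. Edge agents (one per edge of~$G$) attend the happy agents and start envious; vertex agents attend their incident edge agents; a large pool of dummy agents attends the vertex agents (which, combined with the scarcity of resources, forbids sharing to vertex agents). Sharing a resource to an edge agent cures its envy but makes both its endpoint vertex agents envious, so if~$s$ edge agents receive resources spanning~$t$ vertex agents, the envy count drops by~$s-t$. The target~$k=m-\binom{\ell}{2}+\ell$ demands~$s-t\ge\binom{\ell}{2}-\ell$, and the extremal fact~$s\le\binom{t}{2}$ then forces~$s=\binom{\ell}{2}$ and~$t=\ell$, i.e., the chosen edges form an~$\ell$-clique. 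No selection resources, no threshold subgadgets, no verification gadgets --- the clique structure is certified by a global counting argument rather than by local gadgetry, which is precisely what reconciles identical utilities, unit bundles, and a parameter-bounded number of resources. Your high-level strategy (single reduction serving both claims, resources bounded in~$\ell$) matches the paper's, but without the counting idea the proposal does not constitute a proof.
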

\begin{proof} 
 \newcommand{\cGraph}{\ensuremath{G}}
 \newcommand{\cEdges}{\ensuremath{E}}
 \newcommand{\cVertices}{\ensuremath{V}}
 \newcommand{\cCliqueSize}{\ensuremath{\ell}}
 \newcommand{\cInstance}{\ensuremath{\bar{I}}}
 \newcommand{\cVertex}{\ensuremath{v}}
 \newcommand{\cVerticesCount}{\ensuremath{n}}
 \newcommand{\cEdge}{\ensuremath{e}}
 \newcommand{\cEdgesCount}{\ensuremath{m}}
 \newcommand{\cCliqueEdgesCount}{\ensuremath{\tilde{\ell}}}

 \newcommand{\instance}{\ensuremath{I}}
 \newcommand{\agentGen}{\ensuremath{a}}
 \newcommand{\dummyAgent}{\ensuremath{d}}
 \newcommand{\envyThreshold}{\ensuremath{k}}

 We give a polynomial-time many-one reduction from \clique{} (that is \nphard{};
 and \wonehard{} with respect to the number of vertices in a sought clique), where we are given an
 undirected graph and an integer~\cCliqueSize{}, and the question is whether there
 is a set of~\cCliqueSize{} mutually connected vertices. To this end, we fix an
 instance~$\cInstance=(\cGraph, \cCliqueSize)$~of \clique{},
 where~\namedorderedsetof{\cVertices}{\cVertex}{\cVerticesCount} and
 \namedorderedsetof{\cEdges}{\cEdge}{\cEdgesCount}. For brevity,
 let~$\cCliqueEdgesCount := {\cCliqueSize \choose 2}$ be the number of edges in
 a clique of size~$\cCliqueSize$. Without loss of generality we assume that~$4
 \leq \cCliqueSize < \cVerticesCount$ and $\cCliqueEdgesCount \le m$. We build
 an instance~$\instance$ of~\envyReducingProblemShort{} corresponding
 to~\cInstance{} as follows.

 For each vertex~$\cVertex_i \in \cVertices$ we create a~\emph{vertex
 agent}~$\agentGen(\cVertex_i)$ and for each edge~$\cEdge_i \in \cEdges$ we create
 an~\emph{edge agent}~$\agentGen(\cEdge_i)$. We also add~$2m$ \emph{dummy agents}~$\{d_1,d_2,\dots,d_{2m}\}$ and \cCliqueEdgesCount{} \emph{happy agents}~$\{h_1,h_2,\dots,h_{\cCliqueEdgesCount}\}$.
 We introduce \cCliqueEdgesCount{}
 resources~\orderedlistingof{\resource}{\cCliqueEdgesCount}.  In the initial
 allocation~\allocation{}, the resources are given to the happy agents such that
 each happy agent gets one resource.
 The resources are
 indistinguishable for the agents, that is, each agent gives all of them the
 same utility value; for convenience, we fix this utility to be one.
 The sharing graph is a clique.
 In the attention graph~\attentionGraph{}, 
 all edge agents have outgoing arcs to all happy agents,
 each vertex agent~$\agentGen(\cVertex)$, $\cVertex \in \cVertices$, has an outgoing arc to every edge agent corresponding to an edge incident to~\cVertex{},
 and all dummy agents have outgoing arcs to all vertex agents, 
 The described ingredients, together with
 the desired number of envious agents at most~$\envyThreshold := \cEdgesCount -
 \cCliqueEdgesCount + \cCliqueSize$, build an
 instance~$\instance$ of~\envyReducingProblemShort{} corresponding to \clique{}
 instance~\cInstance{}. The whole construction can be done in polynomial time.

 Observe that all edge agents are envious under~\allocation{}.
 Since vertex agents and dummy agents do not pay attention to happy agents, no more agents are
 envious. Thus, there are exactly $\cEdgesCount{} > \envyThreshold$~envious agents under~\allocation{}.
 We now show that instance~\cInstance{} of \clique{} is a ``yes''-instance if and only if instance~$\instance$ of~\envyReducingProblemShort{} is a ``yes''-instance.

 $\Rightarrow:$
 Assume that there exists a clique of size~$\cCliqueSize$ in~\cGraph{}. Let us
 call each agent corresponding to an edge in the clique a \emph{clique agent};
 by definition, there are exactly~\cCliqueEdgesCount{} of them. We construct
 a desired sharing for~\instance{} by giving every clique agent a resource.
 Indeed, it is achievable since we have exactly~\cCliqueEdgesCount{} resources,
 which can be shared by the happy agents with exactly
 \cCliqueEdgesCount~distinct clique agents. After such a sharing, all clique
 agents do not envy any more. Observe that the sharing made~\cCliqueSize{}
 agents corresponding to the vertices of the clique envious. Thus, the total
 number of envious agents is decreased by~$\cCliqueEdgesCount{}-\cCliqueSize{}$.
 This gives exactly the desired
 number~\envyThreshold{}, resulting in~\instance{} being a ``yes''-instance.

 $\Leftarrow:$
 Let~$\delta$ be a sharing such that there are at most~$k$ envious agents under the sharing allocation~$\sharingAllocation^\delta$.
 Since the sharing graph is a clique, happy agents could share resources with any one of the remaining agents.
 However, if a happy agent shares with a vertex agent, then all~$2m$ dummy
 agents will envy this vertex agent.
 Since there are only \cCliqueEdgesCount{} resources, which means at least~$2m-\cCliqueEdgesCount \ge m>k$ dummy agents will be envious at the end, which is a contradiction.
 Hence, we can assume that no vertex agent is involved in~$\delta$.
 Consequently, no dummy agent will envy as they only look at vertex agents.
 Hence, without loss of generality, we can additionally assume that no dummy agent is involved in~$\delta$.
 Then all sharings in~$\delta$ are restricted to edge agents and happy agents.

 Now, we show that~$\delta$ actually encodes a clique of size~\cCliqueSize{} in~$G$. 
 When an edge agent gets a resource through sharing, it stops being
 envious, however, all vertex agents that looks at this edge agent
 start becoming envious. 
 So, an ultimate goal for an optimal sharing is to share resources to the highest possible number of edge agents that are connected with
 an outgoing arc with as few vertex agents as possible. 
 Formally, let~$S$ be the set of edge agents who get shared resources in~$\delta$ and~$T$ the set of the corresponding vertex agents who will become envious.
 Denote~$s=|S|$ and~$t=|T|$.
 Naturally, $s \le \binom{t}{2}$ and~$s \le \cCliqueEdgesCount$.
 Then, the number of envious agents after the sharing in the solution is at
 least~$\cEdgesCount - s+t$
and this number should be at most~$k=\cEdgesCount - \cCliqueEdgesCount + \cCliqueSize$, that is,
\[
\cEdgesCount - s+t \le \cEdgesCount - \cCliqueEdgesCount + \cCliqueSize
\Rightarrow s-t \ge \cCliqueEdgesCount - \cCliqueSize.
\]
Since~$\cCliqueEdgesCount = {\cCliqueSize \choose 2}$ and~$s \le \binom{t}{2}$,
if~$t < \cCliqueSize$, then~$s-t \le \binom{t}{2} - t <{\cCliqueSize \choose 2} - \cCliqueSize$,
which contradicts with the above $s-t \ge \cCliqueEdgesCount - \cCliqueSize$. Hence~$t \ge \cCliqueSize$. Since~$s \le
\cCliqueEdgesCount$, the only way to satisfy~$s-t \ge \cCliqueEdgesCount -
\cCliqueSize$ is that~$s=\cCliqueEdgesCount$ and~$t=\cCliqueSize$. Thus, the
corresponding vertices of~$T$ form a clique of size~\cCliqueSize{} in \cGraph.

Observe that we use exactly $\cCliqueEdgesCount$~resources, which proves the
claimed \wonehardness{}.
 \let\cGraph\undefined
 \let\cEdges\undefined
 \let\cVertices\undefined
 \let\cCliqueSize\undefined
 \let\cInstance\undefined
 \let\cVertex\undefined
 \let\cVerticesCount\undefined
 \let\cEdge\undefined
 \let\cEdgesCount\undefined
 \let\cCliqueEdgesCount\undefined
 \let\instance\undefined
 \let\agentGen\undefined
 \let\dummyAgent\undefined
 \let\envyThreshold\undefined
\end{proof}

Finally, we observe that for the scenario with a
constant number of shared resources, there is a na\"{i}ve brute-force algorithm
running in polynomial time.
\begin{observation}%
\label{obs:resources_envy_red_xp}
\envyReducingProblemShort{} is solvable in polynomial time if the number of
shared resources
(or the number of resources) 
is a constant.
\end{observation}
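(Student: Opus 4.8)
The plan is a brute-force search that exploits how little data a simple $2$-sharing carries once we know which resources change hands. By Definition~\ref{def:2-simple-sharing}, in a simple $2$-sharing every agent is involved in at most one sharing and each sharing moves exactly one resource. Hence, writing~$c$ for the bound on the number of shared resources (and taking~$c = m$ in the variant where the number~$m$ of resources is the constant, since no simple $2$-sharing can share more than~$m$ resources), a simple $2$-sharing using at most~$c$ resources is fully described by: (i)~a set~$R' \subseteq \resources$ of shared resources with~$|R'| \le c$; (ii)~for each~$r \in R'$, its owner under the initial allocation~\allocation{} (uniquely determined); and (iii)~for each~$r \in R'$, a recipient that is a neighbour in the sharing graph~\shareGraph{} of that owner; subject to the requirement that the at most~$2c$ owners and recipients occurring are pairwise distinct (this is precisely what ``each agent participates in at most one sharing'' means).

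First I would enumerate all subsets~$R' \subseteq \resources$ of size at most~$c$ whose owners are pairwise distinct; there are~$O(m^{c})$ of them. For each fixed~$R'$, I would enumerate all ways of picking, for every~$r \in R'$, a recipient among the~\shareGraph{}-neighbours of its owner, keeping only the choices in which all owners and recipients are pairwise distinct; since~$|R'| \le c$, there are at most~$O(n^{c})$ such choices. Every surviving pair~$(R',\rho)$ determines a valid simple $2$-sharing~$\delta$, and conversely every simple $2$-sharing that shares at most~$c$ resources arises from exactly one such pair. For each resulting~$\delta$, I would form the induced sharing allocation~$\sharingAllocation^\delta$, evaluate the bundle value of every agent, scan the arcs of the attention graph to determine~$|\Env(\sharingAllocation^\delta)|$, and test whether it is at most~$k$; all of this runs in~$\mathrm{poly}(n,m)$ time. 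The algorithm returns ``yes'' if and only if some enumerated~$\delta$ achieves~$|\Env(\sharingAllocation^\delta)| \le k$. The overall running time is~$O(m^{c} n^{c}) \cdot \mathrm{poly}(n,m)$, which is polynomial for every fixed constant~$c$, covering both a constant number of shared resources and a constant total number of resources.

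I do not expect a genuine obstacle: the only points to be careful about are (a)~checking that the simple $2$-sharing condition of Definition~\ref{def:2-simple-sharing} translates exactly into the ``pairwise distinct owners and recipients'' constraint, so that the enumeration is simultaneously sound (every enumerated object is a legal simple $2$-sharing) and complete (no relevant sharing is missed), and (b)~noting that~$\binom{m}{c} = O(m^{c})$ remains polynomial in~$m$ even when~$m$ is large but~$c$ is fixed, so the branching on which resources to share does not blow up. Everything else is routine.
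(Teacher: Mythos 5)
Your proposal is correct and matches the paper's own argument: the paper likewise enumerates the at most $m^{b}$ choices of shared resources and the at most $n^{b}$ choices of receivers (with $b\le m$ covering the constant-number-of-resources variant) and checks each resulting sharing directly. Your additional care about owners being determined by $\allocation$ and the ``pairwise distinct owners and recipients'' reading of simple $2$-sharing is a harmless elaboration of the same brute force.
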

\begin{proof}
 Let~$b\le m$ be the upper bound on the number of shared resources. There are at
 most~$m^b$~different choices of resources to be shared, and each resource has
 at most~$n-1$ possible receivers, so overall we need to check at
 most~$m^bn^b$~cases.
\end{proof}%

\section{Extensions}\label{sec:extension}

We introduce two natural extensions of our model, both of which describe costs
which can be incurred by sharing---either for the central authority or agents.
For each extension, we discuss which results can easily be adapted to cover
them. Finally, we formally present how to modify the proofs of the relevant
results.

\subsection{Extension 1: Loss by Sharing}
So far, we have assumed that agents get the full utility of the shared resources.
This does not hold for situations in which sharing causes more inconvenience than
owning a resource alone. Nonetheless, many of our algorithms can be easily adapted to
deal with (computational) issues that arise in such situations.
Consider the case when agents get only a
fraction of the full utility from shared resources.
Then our algorithms for improving utilitarian welfare (Theorem \ref{thm:UWSA-P}) and egalitarian welfare (Lemma \ref{lem:EWSA-1-P}) still work with minor changes.
For reducing envy it might be that after sharing agents lose some utility and thus become envious.
Again, our algorithms for few agents (Theorem \ref{thm:fpt-agents}) or identical
utility functions (Theorem \ref{thm:P_clique+same_utility}) still work with
minor changes.

Formally, 
we introduce two parameters $\alpha,\beta \in [0,1]$ to quantify the effect that
agents do not get the full utility of the shared resources, that is,
for any resource $r$ initially assigned to agent $a_i$ in $\pi$ and shared to $a_j$ under $\sharing$, the utility of resource $r$ for $a_i$ is $\alpha \cdot u_i(r)$ and that for $a_j$ is $\beta \cdot u_j(r)$.
Recall that we refer to $\sharingAllocation^\sharing(\agent)$ as a \emph{bundle} of~\agent{}.
Now we refer to $\sharingAllocation^\sharing_{+}(\agent)$ as the set of resources shared to \agent{} by other agents, $\sharingAllocation^\sharing_{-}(\agent)$ the set of resources shared to other agents by \agent{}, and $\sharingAllocation^\sharing_{0}(\agent)$ the set of remaining unshared resources.
Then the utility of agent \agent{} under $\sharingAllocation^\sharing$ is 
\[
 \utilityFunction_{i}(\sharingAllocation^\sharing(\agent_i))=\sum_{r \in \sharingAllocation^\sharing_0(\agent)} u_i(r) + \sum_{r \in \sharingAllocation^\sharing_+(\agent)} \alpha u_i(r) +\sum_{r \in \sharingAllocation^\sharing_-(\agent)} \beta u_i(r).
\]

Later in this section, we consider this extension in the proofs of
Theorem~\ref{thm:UWSA-P}, Lemma~\ref{lem:EWSA-1-P},
Theorem~\ref{thm:fpt-agents}, and Theorem~\ref{thm:P_clique+same_utility}.
Notice that the original setting corresponds to the case with $\alpha=\beta=1$.

\subsection{Extension 2: Cost of Sharing}
It is also natural to assume that the central authority would need to pay some cost for
each sharing to incentivize agents to share resources. In this case, there would
be a limited budget that the central authority can spend and the goal would be
to improve the allocation through sharings whose costs do not exceed the budget.
So far, our model was not capable of modeling the described scenario. However,
for this generalized setting, our algorithms for improving egalitarian welfare
(Lemma~\ref{lem:EWSA-1-P}), reducing envy 
for few agents (Theorem~\ref{thm:fpt-agents}) or identical utility functions
(Theorem~\ref{thm:P_clique+same_utility}) still work with minor changes.

More precisely, for a simple $2$-sharing~$\sharing$, we introduce a
\emph{sharing cost} $c_g: \shareRelations \rightarrow \mathbb{N}$ and a budget
$C \in \mathbb{N}$ for the central authority to incentivise sharing.  Notice
that the sharing cost is defined for each pair of agents.  The cost of a
sharing~$\sharing$ for the central authority is
\[
c(\sharing)=\sum_{
\delta(\{a_i,a_j\})\neq \emptyset} c_g(\{a_i,a_j\}).
\]
Now the goal of the central authority is to find a sharing~$\sharing$ with
$c(\sharing)\le B$ to improve the allocation.

Later in this section, we consider this extension in the proofs of
Lemma~\ref{lem:EWSA-1-P}, Theorem~\ref{thm:fpt-agents}, and
Theorem~\ref{thm:P_clique+same_utility}.  Notice that the original setting
corresponds to the case with $B=\sum_{\{a_i,a_j\} \in \shareRelations}
c_g(\{a_i,a_j\})$, i.e., $c(\sharing)\le B$ holds for any sharing $\sharing$.

\subsection{Respective Proofs' Modifications}

In the ensuing paragraphs, we list the necessary modifications of our results to
make them work correctly for the aforementioned extensions. 

\paragraph{Proof of Theorem \ref{thm:UWSA-P} for Extension 1.}
We adapt the proof of Theorem \ref{thm:UWSA-P} given earlier to work for Extension 1.
To this end, we just need to change the weight of edges in the constructed graph as follows.
The weight of the edge between $v_{i_1}^{j_1}$ and $v_{i_2}^{j_2}$ is now defined as
\[
\max\{\beta u_{i_1}(r_{j_2})-(1-\alpha) u_{i_2}(r_{j_2}),\beta u_{i_2}(r_{j_1})-(1-\alpha)u_{i_1}(r_{j_1})\}
\] instead of $\max\{u_{i_1}(r_{j_2}),u_{i_2}(r_{j_1})\}$, i.e., the weight is the increased utilitarian social welfare in the new setting.
Accordingly, in the ``$\Leftarrow:$'' part, for each non-dummy edge $(v_{i_1}^{j_1},v_{i_2}^{j_2}) \in M$, we set $\sharing(a_{i_1},a_{i_2})=r_{j_1}$ if the increased utilitarian social welfare by sharing $r_{j_1}$ to agent $a_{i_2}$ is no smaller than that of sharing $r_{j_2}$ to agent $a_{i_1}$, and $\sharing(a_{i_1},a_{i_2})=r_{j_2}$ otherwise.
Then, using the same arguments, we can show that there is a $b$-bounded $2$-sharing $\sharing{}$ such that $\utilitarianWelfare(\sharingAllocation^\sharing) \geq k$ if and only if there is matching~$M$ in graph~$G$ with weight~$\sum_{e \in M}w(e) \ge k-\utilitarianWelfare(\pi)+P$, and thus, the problem can be reduced to \textsc{Maximum Weighted Matching}.
\qed

\paragraph{Proof of Lemma \ref{lem:EWSA-1-P} for Extensions 1 and 2.}
\newcommand{\utilityThreshold}{\ensuremath{k}}
 \newcommand{\betterAgents}{\ensuremath{\agents^+_\utilityThreshold}}
 \newcommand{\worseAgents}{\ensuremath{\agents^-_\utilityThreshold}}
We adapt the original proof of Lemma~\ref{lem:EWSA-1-P} to work for the case
with both Extension 1 and Extension 2.
Similarly as before, we partition the set~\agents{} of agents into two
 sets \betterAgents{} and~\worseAgents{} containing, respectively, the agents
 with their bundle value under~$\allocation$ at least~\utilityThreshold{} and smaller than~\utilityThreshold{}.
 When constructing the graph~$\graph_\utilityThreshold =
 (\betterAgents, \worseAgents, E_\utilityThreshold)$,
 for two
 agents~$\agent_i \in \betterAgents$ and~$\agent_j \in \worseAgents$ that are
 neighbors in the sharing graph~\shareGraph{},
 we add an edge $e = \{\agent_i, \agent_j\}$ belongs
 to~$E_\utilityThreshold$ if~$\agent_i$ can share a resource with~$\agent_j$ to
 raise the utility of the latter to at least~\utilityThreshold{}
 \emph{and} after this sharing the utility of $\agent_i$ is at least~\utilityThreshold{}; 
 formally,
 there exists a resource~$\resource \in \allocation(\agent_i)$ such
 that~$\utilityFunction_j(\allocation(\agent_j)) +
 \beta \utilityFunction_j(\resource)) \geq \utilityThreshold$
 and $\utilityFunction_i(\allocation(\agent_j)) -
 (1-\alpha) \utilityFunction_i(\resource)) \geq \utilityThreshold$.
 In addition, for each edge~$e=\{a_i,a_j\} \in E_d$ we assign it a weight~$c_g(\{a_i,a_j\})$.
 With similar arguments, we have that
  there is a simple $2$-sharing~$\delta$ with~$c(\delta) \le B$ and~$\egalitarianWelfare(\sharingAllocation^\sharing) \geq k$ if and only if there is matching~$M$ in
 graph~$G_k$ with~$\sum_{e \in M}c_g(e) \le B$ and~$|M| \ge |\worseAgents|$.
 Thus, we just need to check whether there is matching~$M$ in
 graph~$G_k$ with~$\sum_{e \in M}c_g(e) \le B$ and~$|M| \ge |\worseAgents|$, which is an instance of WBMM and is solvable in polynomial time according to Lemma \ref{lem:matching_p}.
 \qed

\paragraph{Proof of Theorem \ref{thm:fpt-agents} for Extension 1 and 2.}
We adapt the original proof of Theorem \ref{thm:fpt-agents} to work for the case
with both Extension 1 and Extension 2.
The extended algorithm is given in Algorithm \ref{alg:fpt-agents-gener}.

For Extension 2 where there is cost for sharing, 
notice that the cost of any realization~$\delta$ of~$M$ is fixed: $c(\delta)=c(M)$.
Therefore, for any guessed~$M$, the extended algorithm checks in the first line whether the cost of the configuration $M$ is at most $B$.

For Extension 1 where agents only get a fraction of utility from shared resources, in the for-loop of Algorithm \ref{alg:fpt-agents-gener}, the set of deleted resources from $P_i^0$ are changed accordingly.
In addition, since agent may lose utility after sharing, it is important to guarantee that target agents in $C$ who will share resources to other agents according to the configuration $M$ does not becomes envious after the sharing.
To guarantee this, 
we partition all agents in $C$ into three subsets $C_+ \cup C_- \cup C_0$, where 
\begin{align*}
C_+&=\{a_i \in C \mid \exists j \text{ such that } (j,i) \in M \}; \\
C_-&=\{a_i \in C \mid \exists j \text{ such that } (i,j) \in M \};\\
C_0&=\{a_i \in C \mid \not \exists j \text{ such that } (i,j) \in M \text{ or } (j,i) \in M\}. \\
\end{align*}
Then we change the definition of $P^0_i$ (defined in Eq. \ref{eq:S_i}) as follows:
\begin{equation}
  P^0_i :=
    \begin{cases*}
      \pi(j) \cup \{d_i\} & if $a_i \in C_+$,\\
      \pi(i) \cup \{d_i\} & if $a_i \in C_-$,\\
      \{d_i\}        & if $a_i \in C_0$.
    \end{cases*}
 \end{equation}
Finally, the forbidden resource defined in Definition \ref{def:blocking_resource} should be changed accordingly, i.e., 
resource~$r \in P_i$ is a~\emph{forbidden resource} for some target
  agent~$a_i$ if there is some target agent~$a_j \in C_+ \cup C_0$ with~$(a_j, a_i) \in
  \attentionGraph$ such that $$\max\{u_j(\pi(j))+ \beta u_i(r') \mid r' \in P_j\}
  < u_j(\pi(i))+ \beta u_j(r),$$
  or there is some target agent~$a_j \in C_-$ with~$(a_j, a_i) \in
  \attentionGraph$ such that $$\max\{u_j(\pi(j))-(1-\alpha) u_i(r') \mid r' \in P_j\}
  < u_j(\pi(i))+ \beta u_j(r).$$
  \qed
 
\paragraph{Proof of Theorem \ref{thm:P_clique+same_utility} for Extensions 1 and 2.}
We adapt the earlier version of the proof to work for the case with both Extension 1 and Extension 2.
For extension 1 where agents only get a fraction of utility for shared resources, if $\alpha=1$, then as no agent will lose utility after sharing, we just need to change the edge set $E$ as follows.
We add $e=\{a_i,a_j\}$ to~$E$ if~$\{a_i,a_j\} \in \shareRelations$, $a_i \not \in N_0$, and there exists a resource~$r \in \pi(a_j)$ such that~$u(\pi(a_i))+\beta u(r)=u^*$ or $a_j \not \in N_0$ and there exists a resource~$r \in \pi(a_i)$ such that~$u(\pi(a_j))+ \beta u(r)=u^*$.
If $\alpha<1$, then in additional to the above change, when constructing the graph~$G=(\agents,E)$, we do not consider those agents who already have utility $u^*$, as they cannot help other agents increase utility without making their utility less than $u^*$.
For Extension 2 where there is a cost for sharing, we need to convert the graph~$G=(\agents,E)$ into a weighted graph by assigning each edge~$e \in E$ a weight~$c_g(e)$.
Then by computing a maximum-cardinality matching with weight at most~$C$ in~$G$, which can be done in polynomial time according to Lemma~\ref{lem:matching_p},
 we can find the largest number of agents in~$\agents \setminus N_0$ who can increase their utilities to~$u^*$ through a sharing.
 \qed

\begin{algorithm}[t]
 \caption{Generalization of Algorithm \ref{alg:fpt-agents} for two extensions.
  \label{alg:fpt-agents-gener}}
\textsl{DoesFeasibleRealizationExist+}$(\attentionGraph,\pi,\{u_i\}_{i \in C},B,C,M)$\\
\lIf{$c(M) > B$}{\Return{``no''}}
\For{each agent~$a_i \in C$}{
 $P_i \leftarrow P^0_i \setminus \{r \in P_i \mid
 u_i(\pi(i)) +\beta u_i(r)< t_i\}$\;
}
\Repeat{$B = \emptyset$}{
 $B \leftarrow \bigcup_{a_i \in C} F_i(\{P_1, P_2,
 \ldots, P_{|C|}\})$\;
 $P_i \leftarrow P_i \setminus B$\;
}
 \lIf{$\exists i$ with $P_i=\emptyset$}{\Return{``no''} \textbf{else return} ``yes''}
\end{algorithm}

\section{Conclusion}\label{sec:concl} 
We brought together two important topics---fair
allocation of resources and resource sharing. 
Already our very basic and simple model where each resource can 
be shared by neighbors in a social network and each agent can participate 
in a bounded number of sharings led to challenging 
computational problems.
We shed light at their fundamental computational complexity
limitations (in the form of computational hardness) and provided
generalizable algorithmic techniques (as mentioned in
Section~\ref{sec:extension}).
Our results are of broader interest in at least two respects. First, we 
gained insight into a recent line of research aiming at achieving
fairness without relaxing its requirements too much.
Second, 
there is rich potential for future research exploring our
general model of sharing allocations (in its full power described
by~Definition~\ref{def:sharing_allocation}).

\paragraph{Beyond 2-sharing.}
We focused on sharing resources between neighbors in a social network. Yet,
there are many scenarios where sharing resources among a large group of agents can be very natural and
wanted. 
If every resource can be shared with everyone, then
there is a trivial envy-free allocation. Hence, it is interesting to further study the
limits of existence of envy-free allocations under various sharing relaxations
(concerning parameters such as number of shared resources, number of agents sharing a resource, etc.).

\paragraph{No initial allocation.}
In our model, the sharing builds up on an initial allocation.
It is interesting to study the case without initial allocation,
i.e., allocating indivisible but shareable resources to achieve welfare and/or
fairness goal.

\paragraph{Combinations of graph classes.}
We showed that \envyReducingProblemShort{} is \nphard{} even if both input graphs 
are (bidirectional) cliques, but it is polynomial-time solvable if two graphs are the same and have constant treewidth.
Based on this, analyzing various combinations of graph classes of the two social networks might be valuable. %

\paragraph{Strategic concerns and robustness.} We have assumed that all utility values are
truthfully reported as well as correct and that the agents need not to be incentivized to
share resources. Neither of these assumptions might be justified in some
cases---the agents might misreport their utility, the utility values might be
slightly incorrect, or a sharing can come at a cost \emph{for agents}
(splitting utility from shared resources, as described in
Section~\ref{sec:extension}, is an example of the latter). Tackling this kind
of issues opens a variety of directions, which includes studying strategic
misreporting of utilities, robustness of computed solutions against small
utility values perturbations, and finding allocations that incentivize sharing.

\section*{Acknowledgments}
We are grateful to the AAAI'22 reviewers for their insightful comments.
This work was started when all authors were with TU~Berlin.
Andrzej Kaczmarczyk was supported by the DFG project “AFFA” (BR 5207/1 and NI
369/15) and by the European Research Council (ERC).
Junjie Luo was supported by the
DFG project “AFFA” (BR 5207/1 and NI 369/15) and the Singapore Ministry of
Education Tier 2 grant (MOE2019-T2-1-045).

This project has received funding from the European Research Council (ERC)
under the European Union’s Horizon 2020 research and innovation programme (grant
agreement No 101002854).

\noindent \includegraphics[width=3cm]{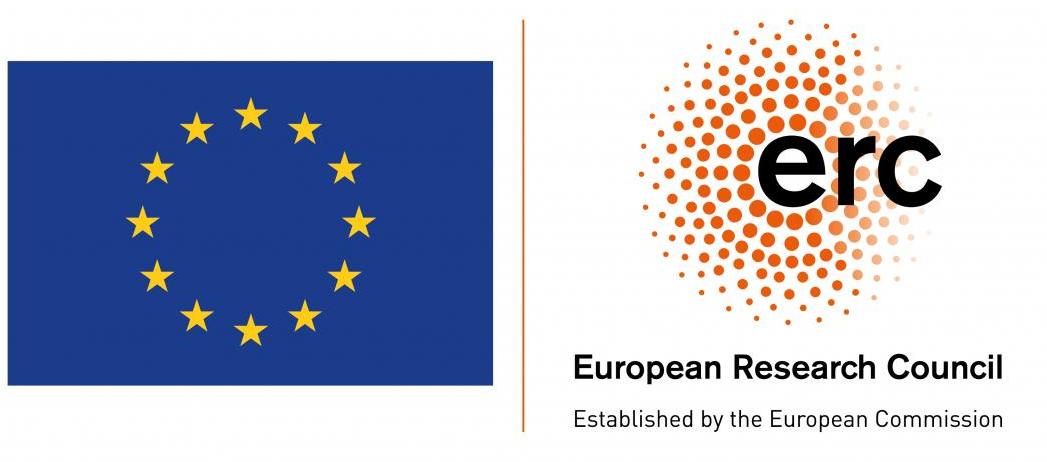}

\bibliographystyle{plainnat}
\bibliography{bibliography.bib}

\end{document}